\numberwithin{equation}{section}
\newcommand{\K}{
\mathbb{K}
}
\newcommand{\N}{
\mathbb{N}
}
\newcommand{\R}{
\mathbb{R}
}
\newcommand{\C}{
\mathbb{C}
}
\newcommand{\RP}{
\mathbb{RP}
}
\newcommand{\tp}{
\otimes
}
\newlength{\wurelwidth}
\newcommand{\urel}[2][=]{\mathrel{\mathop{#1}\limits_{\!\scalebox{0.5}{#2}\!}}}
\newcommand{\wurel}[2][=]{\mathrel{\mathop{#1}_{\!\scalebox{0.5}{\makebox[\the\wurelwidth]{#2}}\!}}}
\newcommand{\comps}{\pi_0}
\newcommand{\Edges}[1]{E(#1)}
\newcommand{\dimension}{D}
\newcommand{\Graph}[2][1.0]{%
\vcenter{\hbox{\includegraphics[scale=#1]{#2}}}%
}
\newcommand{\phipsi}{%
{\frac{\phipol}{\psipol}}%
}
\newcommand{\phipol}{\varphi}
\newcommand{\psipol}{\psi}
\newcommand{\subdiv}{\prec}
\newcommand{\subdiveq}{\preceq}
\newcommand{\supdiveq}{\succeq}
\newcommand{\nosubdiveq}{\npreceq}
\newcommand{\nosupdiveq}{\nsucceq}
\newcommand{\coeff}[2][]{c_{#2}^{#1}}
\newcommand{\rp}{\mu} 
\newcommand{\PhiR}{\Phi_+}
\newcommand{\dimPhi}{{{}_{\reg}\Phi}}
\newcommand{\toyform}{\eta}
\newcommand{\toycc}{L}
\newcommand{\scalelog}{\ell}
\newcommand{\x}{x}
\newcommand{\period}{\mathcal{P}}
\newcommand{\reg}{z} 
\newcommand{\toy}[1][\reg]{%
{_{#1}}\phi
}
\newcommand{\toyR}[1][]{%
	\def\ArgI{#1}
	\toyRRelay
}
\newcommand\toyRRelay[1][\reg]{%
{^{}_{#1}}\phi^{}_{R\ifthenelse{\equal{\ArgI}{}}{}{,\ArgI}}
}
\newcommand{\MStoyR}[1][]{%
{^{}_{\reg}}\phi^{}_{\text{\tiny MS}\ifthenelse{\equal{#1}{}}{}{,#1}}
}
\newcommand{\toyphy}[1][]{{}^{}_0\phi^{}_{#1}}
\newcommand{\MStoyphy}[1][]{{}^{}_{0}\phi^{}_{\text{\tiny MS}\ifthenelse{\equal{#1}{}}{}{,#1}}}
\newcommand{\toyZ}{Z} 
\newcommand{\toylog}{\gamma} 
\newcommand{\intrules}[1][]{{^{#1}}\varphi}
\newcommand{\hide}[1]{}
\newcommand{\Rms}{R_{\text{\tiny MS}}} 
\newcommand{\bigo}[1]{\mathcal{O}\left(#1\right)}
\newcommand{\momsch}[1]{R_{#1}}
\newcommand{\dilat}{\Lambda}
\newcommand{\coupling}{g}
\newcommand{\powdep}{\kappa}
\newcommand{\forests}{\mathcal{F}}
\newcommand{\f}{w}
\newcommand{\autoconc}{\triangleright} 
\newcommand{\unimor}[1]{{^{#1}}\!\rho} 
\newcommand{\trees}{\mathcal{T}}
\newcommand{\tree}[1]{%
\vcenter{\hbox{\includegraphics{tree#1}}}%
}
\newcommand{\convolution}{\star}
\newcommand{\counit}{\varepsilon}
\newcommand{\unit}{u}
\newcommand{\Prim}{
\mathrm{Prim}
}
\newcommand{\polyint}{
\int_0
}
\newcommand{\asymptotic}{
\sim
}
\newcommand{\cored}{\widetilde{\Delta}} 
\newcommand{\chars}[2]{{G}_{#2}^{#1}} 
\newcommand{\infchars}[2]{\mathfrak{g}_{#2}^{#1}} 
\newcommand{\auto}[1]{{^{#1}}\chi} 
\newcommand{\alg}[1][A]{\mathcal{#1}}
\newcommand{\decor}{\mathcal{D}}
\newcommand{\gradAut}{\theta}
\newcommand{\dH}{\delta} 
\newcommand{\HH}[1][]{\text{HH}^{#1}_{\counit}}
\newcommand{\HZ}[1][]{\text{HZ}^{#1}_{\counit}}
\newcommand{\HB}[1][]{\text{HB}^{#1}_{\counit}}
\newcommand{\dd}[1][]{\mathrm{d}^{#1}}
\newcommand{\restrict}[2]{%
{\left. #1 \right|}_{#2}%
}
\DeclareMathOperator{\Res}{Res}
\DeclareMathOperator{\Hom}{Hom}
\DeclareMathOperator{\lin}{lin}
\DeclareMathOperator{\Aut}{Aut}
\DeclareMathOperator{\End}{End}
\newcommand{\ev}{
\mathrm{ev}
}
\DeclareMathOperator{\im}{im}
\renewcommand{\1}{
\mathbbm{1}
}
\newcommand{\defas}{
\mathrel{\mathop:}=
}
\newcommand{\set}[1]{
\left\{ #1 \right\}
}
\newcommand{\setexp}[2]{
\left\{ #1\!:\ #2 \right\}
}
\newcommand{\abs}[1]{
\left\lvert #1 \right\rvert
}
\newcommand{\id}{
\mathrm{id}
}
\newcommand{\qft}{%
quantum field theory%
}
\newcommand{\qfts}{%
quantum field theories%
}
\newcommand{\momscheme}{kinetic scheme}
\newtheorem{satz}{Theorem}[section]
\newtheorem{definition}[satz]{Definition}
\newtheorem{lemma}[satz]{Lemma}
\newtheorem{korollar}[satz]{Corollary}
\newtheorem{proposition}[satz]{Proposition}
\newtheorem{beispiel}[satz]{Example}
\title{\vspace{-1cm}Renormalization and Mellin transforms}
\author{
Dirk Kreimer\thanks{Alexander von Humboldt Chair in Mathematical Physics, supported by the Alexander von Humboldt Foundation and the BMBF.}
\and
Erik Panzer\thanks{\href{mailto:panzer@mathematik.hu-berlin.de}{\nolinkurl{panzer@mathematik.hu-berlin.de}}}
}
\date{\small
Institutes of Physics and Mathematics, Humboldt-Universit\"{a}t zu Berlin, \\
Unter den Linden 6, 10099 Berlin, Germany
\vspace{-1cm}%
}
\begin{document}
\maketitle
\begin{abstract}\small
	We study renormalization in a {\momscheme} using the Hopf algebraic framework, first summarizing and recovering known results in this setting. Then we give a direct combinatorial description of renormalized amplitudes in terms of Mellin transform coefficients, featuring the universal property of rooted trees $H_R$. In particular, a special class of automorphisms of $H_R$ emerges from the action of changing Mellin transforms on the Hochschild cohomology of perturbation series.

	Furthermore, we show how the Hopf algebra of polynomials carries a refined renormalization group property, implying its coarser form on the level of correlation functions. Application to scalar {\qft} reveals the scaling behaviour of individual Feynman graphs.
\end{abstract}

\section{Introduction}

As was shown in \cite{Kreimer:HopfAlgebraQFT,Yeats,BrownKreimer:AnglesScales}, we may decompose Feynman integrals into functions of a single \emph{scale} parameter $s$ only (further forking into logarithmic divergent parts multiplied by suitable powers of $s$) and scale-independent functions of the other kinematic variables, called \emph{angles}. Furthermore, the Hopf algebra $H_R$ of rooted trees suffices to encode the full structure of subdivergences in {\qft} by \cite{Kreimer:HopfAlgebraQFT,CK:NC,CK:RH1}.

We can therefore study such generic Feynman rules in a purely algebraic framework as pioneered in \cite{Factorization,CK:RH1}. Renormalizing short-distance singularities by subtraction at a reference scale $\rp$ (\emph{\momscheme}) leads to amplitudes of a distinguished algebraic kind: Theorem \ref{satz:toymodel-universal} proves them to implement the universal property of $H_R$, delivering an explicit combinatorial evaluation in terms of Mellin transform coefficients.

Further investigating the role of Hochschild cohomology, in section \ref{sec:H_R} we define a class of automorphisms of $H_R$  which transform the perturbation series in a way equivalent to changing the Feynman rules. This clarifies how exact one-cocycles describe variations.

In sections \ref{sec:polynomials} and $\ref{sec:DSE}$ we advertise to think about the renormalization group property as a Hopf algebra morphism to polynomials, determining higher logarithms in \eqref{eq:toylog}. We show how it implies the renormalization group on correlation functions and extend the \emph{propagator-coupling-duality} of \cite{Kreimer:ExactDSE} which yields the functional equation \eqref{eq:propagator-coupling}.

After analysing the differences to the minimal subtraction scheme in section \ref{sec:ms}, we show explicitly how our general results manifest themselves in scalar field theory.

\section{Connected Hopf algebras}

The fundamental mathematical structure behind perturbative renormalization is the Hopf algebra as discovered in \cite{Kreimer:HopfAlgebraQFT}. We briefly summarize the results on Hopf algebras we need and recommend \cite{Manchon,Panzer:Master} for detailed introductions with a focus on renormalization.

All vector spaces live over a field $\K$ of zero characteristic (in examples $\K=\R$), $\Hom(\cdot, \cdot)$ denotes $\K$-linear maps and $\lin M$ the linear span.
Every algebra $(\alg,m,u)$ shall be unital, associative and commutative, any bialgebras $(H,m,u,\Delta,\counit)$ in addition also counital and coassociative. They split into the scalars and the \emph{augmentation ideal} $\ker\counit$ as $H = \K\! \cdot\! \1 \oplus \ker \counit = \im \unit \oplus \ker \counit$, inducing the projection $P \defas \id - \unit \circ \counit\!:\ H \twoheadrightarrow \ker \counit$. We use Sweedler's notation $\Delta (x) = \sum_x x_1 \tp x_2$ and $\cored(x)=\sum_x x'\tp x''$ to abbreviate the \emph{reduced coproduct}
$
		\cored \defas \Delta - \1 \tp \id - \id \tp \1.
$

We assume a \emph{connected grading} $H = \bigoplus_{n\geq 0} H_n$ ($H_0 = \K\cdot\1$) and write $\abs{x}\defas n$ for homogeneous $0\neq x\in H_n$, defining the \emph{grading operator} $Y\in\End(H)$ by $Yx = \abs{x}\cdot x$.
Exponentiation yields a one-parameter group $\K \ni t \mapsto \gradAut_t$ of Hopf algebra automorphisms
\begin{equation}
	\gradAut_t
	\defas \exp (tY)
	= \sum_{n\in\N_0} \frac{(tY)^n}{n!}
	,\quad
	\forall n\in\N_0:\quad
	H_n \ni x \mapsto \gradAut_t(x)
	= e^{t\abs{x}} x
	= e^{nt} x.
	\label{eq:grad-aut}
\end{equation}
Given an algebra $(\alg,m_{\alg},u_{\alg})$, the associative \emph{convolution product} on $\Hom(H,\alg)$ is 
\begin{equation*}
	\Hom(H,\alg)
	\ni	\phi,\psi	\mapsto
	\phi \convolution \psi
	\defas
	m_{\alg} \circ (\phi\tp \psi) \circ \Delta
	\in \Hom(H,\alg),
\end{equation*}
with unit given by $e\defas u_{\alg} \circ \counit$. As outcome of the connectedness of $H$ we stress
\begin{enumerate}
	\item
		The \emph{characters} $\chars{H}{\alg} \defas \setexp{\phi\in\Hom(H,\alg)}{\phi\circ u=u_{\alg} \ \text{and}\ \phi\circ m = m_{\alg} \circ (\phi\tp \phi)}$ (morphisms of unital algebras) form a group under $\convolution$.

	\item
		Hence $\id\in\chars{H}{H}$ has a unique inverse $S\defas\id^{\convolution-1}$, called \emph{antipode}, turning $H$ into a Hopf algebra. For all $\phi\in\chars{H}{\alg}$ we have $\phi^{\convolution -1} = \phi \circ S$.

	\item
		The bijection $\exp_{\convolution}\!:\ \infchars{H}{\alg}\rightarrow\chars{H}{\alg}$ with inverse $\log_{\convolution}\!:\ \chars{H}{\alg}\rightarrow\infchars{H}{\alg}$ between $\chars{H}{\alg}$ and the \emph{infinitesimal characters} $\infchars{H}{\alg} \defas \setexp{\phi\in\Hom(H,\alg)}{\phi\circ m = \phi\tp e + e\tp \phi}$ is given by the pointwise finite series
	\begin{equation}\label{eq:log-exp}
		\exp_{\convolution} (\phi)
		\defas \sum_{n\in\N_0} \frac{\phi^{\convolution n}}{n!}
		\quad\text{and}\quad
		\log_{\convolution} (\phi)
		\defas \sum_{n\in\N} \frac{(-1)^{n+1}}{n} (\phi-e)^{\convolution n}.
	\end{equation}
\end{enumerate}

\subsection{Hochschild cohomology}

The Hochschild cochain complex \cite{CK:NC,BergbauerKreimer,Panzer:Master} we associate to $H$ contains the functionals $H' = \Hom(H,\K)$ as zero-cochains.
One-cocycles $L \in \HZ[1](H) \subset \End(H)$ are linear maps such that $\Delta \circ L = (\id \tp L) \circ \Delta + L \tp \1$ and the differential
\begin{equation}\label{eq:dH}
	\dH: H'\rightarrow \HZ[1](H),
	\alpha \mapsto
	\dH\alpha \defas (\id\tp\alpha) \circ \Delta - u \circ \alpha
	\in \HB[1](H)\defas \dH\left( H' \right)
\end{equation}
determines the first cohomology group by $\HH[1](H) \defas \HZ[1](H) / \HB[1](H)$.

\begin{lemma}\label{satz:cocycle-props}
	Cocycles $L\in \HZ[1](H)$ fulfil $\im L \subseteq \ker \counit$ and $L(\1) \in \Prim(H)\defas\ker\cored$ is primitive. The map $\HH[1](H) \rightarrow \Prim(H)$, $[L] \mapsto L(\1)$ is well-defined since $\dH\alpha(\1)=0$ for all $\alpha\in H'$.
\end{lemma}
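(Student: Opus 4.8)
The plan is to extract all three assertions directly from the defining identity
$\Delta \circ L = (\id \tp L)\circ\Delta + L\tp\1$
of a one-cocycle, invoking nothing beyond the counit and unit axioms of the bialgebra $H$. So this is less a matter of finding an idea than of applying the right projection to the cocycle equation in each case.

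For $\im L \subseteq \ker\counit$ I would apply $\counit\tp\id$ to both sides of the cocycle identity, evaluated at an arbitrary $x\in H$. By the counit axiom $(\counit\tp\id)\circ\Delta = \id$, the left-hand side collapses to $L(x)$; on the right-hand side the term $(\counit\tp\id)\circ(\id\tp L)\circ\Delta = (\counit\tp L)\circ\Delta$ collapses to $L(x)$ as well (counit axiom plus linearity of $L$), while $(\counit\tp\id)(L(x)\tp\1) = \counit(L(x))\cdot\1$. Comparing the two sides yields $\counit(L(x))\1 = 0$, hence $\counit\circ L = 0$. For primitivity of $L(\1)$ I would instead specialise the cocycle identity to $x=\1$; since $\Delta(\1)=\1\tp\1$, the right-hand side is $\1\tp L(\1) + L(\1)\tp\1$, which is precisely the vanishing of $\cored(L(\1)) = \Delta(L(\1)) - \1\tp L(\1) - L(\1)\tp\1$, i.e.\ $L(\1)\in\Prim(H)$.

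It remains to see that $[L]\mapsto L(\1)$ descends from $\HZ[1](H)$ to the quotient $\HH[1](H) = \HZ[1](H)/\HB[1](H)$ and lands in $\Prim(H)$. The target was handled in the previous step, and $L\mapsto L(\1)$ is manifestly $\K$-linear, so it suffices to check it annihilates $\HB[1](H) = \dH(H')$: for $\alpha\in H'$ one computes $\dH\alpha(\1) = (\id\tp\alpha)\Delta(\1) - u(\alpha(\1)) = (\id\tp\alpha)(\1\tp\1) - \alpha(\1)\1 = \alpha(\1)\1 - \alpha(\1)\1 = 0$, so the map factors through the quotient. There is no genuine obstacle; the only point requiring a little care is to keep track of which tensor factor each counit or unit axiom is applied to, after which every step is a one-line substitution.
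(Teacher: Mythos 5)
Your proof is correct and complete: applying $\counit\tp\id$ to the cocycle identity gives $\counit\circ L=0$, evaluating it at $\1$ gives $\cored(L(\1))=0$, and the one-line computation $\dH\alpha(\1)=(\id\tp\alpha)(\1\tp\1)-u(\alpha(\1))=0$ settles well-definedness on the quotient. The paper states this lemma without proof, and your argument is precisely the standard one it implicitly relies on, so there is nothing to add.
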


\subsection{Rooted Trees}

The Hopf algebra $H_R$ of rooted trees serves as the domain of Feynman rules. As an algebra, $H_R=S(\lin\trees) = \K[\trees]$ is free commutative\footnote{
We consider \emph{unordered} trees $\scalebox{0.7}{$\tree{++-++---}$}=\scalebox{0.7}{$\tree{+++--+--}$}$ and forests $\tree{+-}\tree{++--}=\tree{++--}\tree{+-}$, sometimes called \emph{non-planar}.
}
 generated by the \emph{rooted trees} $\trees$ and spanned by their disjoint unions (products) called \emph{rooted forests} $\forests$:
	\begin{equation*}
		\trees = 
			\set{\tree{+-}, \tree{++--}, \tree{+++---}, \tree{++-+--}, \tree{++++----}, \tree{+++-+---}, \tree{++-++---}, \tree{++-+-+--}, \ldots}
		,\quad
		\forests = 
				\set{\1} 
	\cup 	\trees
	\cup 	\set{\tree{+-}\tree{+-}, \tree{+-}\tree{+-}\tree{+-}, \tree{+-}\tree{++--}, \tree{+-}\tree{+-}\tree{+-}\tree{+-}, \tree{+-}\tree{+-}\tree{++--}, \tree{+-}\tree{++-+--}, \tree{+-}\tree{+++---}, \ldots}.
	\end{equation*}
Every $\f\in\forests$ is just the monomial $\f=\prod_{t\in\comps(\f)} t$ of its multiset of tree components $\comps(\f)$, while $\1$ denotes the empty forest. The number $\abs{\f}\defas \abs{V(\f)}$ of nodes $V(\f)$ induces the grading $ H_{R,n} = \lin \forests_n $ where $\forests_n \defas \setexp{\f\in\forests}{\abs{\f}  = n}$.
\begin{definition}
	The (linear) \emph{grafting operator} $B_+ \in \End(H_R)$ attaches all trees of a forest to a new root, so for example
$	B_+ \left( \1 \right)	= \tree{+-} $,
$	B_+ \left( \tree{+-} \right) = \tree{++--}$ and
$	B_+ \left( \tree{+-}\tree{+-} \right)	= \tree{++-+--}$.
\end{definition}
Clearly, $B_+$ is homogenous of degree one with respect to the grading and restricts to a bijection $B_+\!:\ \forests \rightarrow \trees$. The coproduct $\Delta$ is defined to make $B_+$ a cocycle by requiring
\begin{equation}
	\Delta \circ B_+
	= B_+ \tp \1 + (\id \tp B_+) \circ \Delta.
	\label{eq:B_+-cocycle}
\end{equation}
\begin{lemma}\label{satz:B_+-cocycle}
	In cohomology, $0\neq [B_+] \in \HH[1](H_R)$ is non-trivial by $B_+(\1) = \tree{+-} \neq 0$.
\end{lemma}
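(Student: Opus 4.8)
The plan is to read off non-triviality from the evaluation map of Lemma \ref{satz:cocycle-props}. First I would note that $B_+$ is indeed a one-cocycle — this is precisely its defining property \eqref{eq:B_+-cocycle} — so that the class $[B_+]\in\HH[1](H_R)$ exists to begin with.

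Then I would argue by contradiction. If $[B_+]=0$ then $B_+\in\HB[1](H_R)=\dH(H_R')$, say $B_+=\dH\alpha$ with $\alpha\in H_R'$. Evaluating at the unit and using $\Delta\1=\1\tp\1$ yields $\dH\alpha(\1)=(\id\tp\alpha)(\1\tp\1)-u(\alpha(\1))=\alpha(\1)\1-\alpha(\1)\1=0$, exactly the identity $\dH\alpha(\1)=0$ recorded in Lemma \ref{satz:cocycle-props}. But $B_+(\1)=\tree{+-}$ is the one-vertex tree, a nonzero polynomial generator of $H_R=\K[\trees]$, hence $B_+(\1)\neq 0$; this contradiction forces $[B_+]\neq 0$. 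Equivalently, the well-defined map $\HH[1](H_R)\to\Prim(H_R)$, $[L]\mapsto L(\1)$, sends $[B_+]$ to the nonzero primitive $\tree{+-}$.

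I do not expect any genuine obstacle: the statement is an immediate corollary of material already in place, the only input beyond Lemma \ref{satz:cocycle-props} being the single observation $B_+(\1)=\tree{+-}\neq 0$. The sole points worth a sentence of care are that $\tree{+-}$ is nonzero in the free commutative algebra $H_R$ (it is one of its generators) and — for the sharper reading via $\Prim(H_R)$ — that it is primitive, which follows from Lemma \ref{satz:cocycle-props} or directly from \eqref{eq:B_+-cocycle} applied to $\1$.
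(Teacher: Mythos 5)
Your argument is correct and is exactly the one the paper intends: Lemma \ref{satz:cocycle-props} gives $\dH\alpha(\1)=0$ for every coboundary (equivalently, the well-defined map $[L]\mapsto L(\1)$ to $\Prim(H_R)$), so $B_+(\1)=\tree{+-}\neq 0$ immediately rules out $[B_+]=0$. No gaps; your added remarks that $\tree{+-}$ is a free generator of $H_R$ and is primitive are the only details worth stating, and you state them.
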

It characterizes $H_R$ through the well-known (theorem 2 of \cite{CK:NC}) \emph{universal property} of
\begin{satz}\label{satz:H_R-universal}
	To an algebra $\alg$ and $L \in \End(\alg)$ there exists a unique morphism $\unimor{L}\!: H_R \rightarrow \alg$ of unital algebras such that
	\begin{equation}
		\unimor{L} \circ B_+ = L \circ \unimor{L},
		\quad \text{equivalently} \quad
		\vcenter{\xymatrix{
		{H_R} \ar[r]^{\unimor{L}} \ar[d]_{B_+} & {\alg} \ar[d]^{L} \\
			{H_R} \ar[r]_{\unimor{L}} & {\alg}
		}}
		\quad \text{commutes.}
		\label{eq:H_R-universal}
	\end{equation}
	In case of a bialgebra $\alg$ and a cocycle $L \in \HZ[1](\alg)$, $\unimor{L}$ is a morphism of bialgebras and even of Hopf algebras when $\alg$ is Hopf.
\end{satz}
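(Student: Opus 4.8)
The plan is to construct $\unimor{L}$ explicitly by recursion on the number of nodes, exploiting that $B_+$ restricts to a bijection $\forests\to\trees$ as noted before the statement. Since $H_R=\K[\trees]$ is the free commutative algebra on the set of trees, a unital algebra morphism out of $H_R$ is the same datum as an arbitrary map $\trees\to\alg$; I therefore set $\unimor{L}(\1)\defas u_{\alg}(1)$ and, writing each tree uniquely as $t=B_+(\f)$ with $\abs{\f}=\abs{t}-1$, define $\unimor{L}(t)\defas L\bigl(\unimor{L}(\f)\bigr)$, extending multiplicatively by $\unimor{L}(\f)\defas\prod_{t\in\comps(\f)}\unimor{L}(t)$ for a general forest $\f$. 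The recursion is well founded because $\abs{\f}<\abs{t}$, so $\unimor{L}$ is defined on all of $H_R$; it is by construction a morphism of unital algebras, and $\unimor{L}\circ B_+=L\circ\unimor{L}$ holds on forests, hence on all of $H_R$ by linearity. For uniqueness, observe that any unital algebra morphism intertwining $B_+$ with $L$ is forced to satisfy exactly the two recursions above, which determine it on $\trees$ and therefore everywhere.

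For the bialgebra statement, let $\alg$ now be a bialgebra and $L\in\HZ[1](\alg)$ a one-cocycle. First, $\counit_{\alg}\circ\unimor{L}=\counit_{H_R}$: both are unital algebra morphisms $H_R\to\K$, so it suffices to compare them on a tree $t=B_+(\f)$, where $\counit_{\alg}\bigl(\unimor{L}(t)\bigr)=\counit_{\alg}\bigl(L\,\unimor{L}(\f)\bigr)=0=\counit_{H_R}(t)$ by $\im L\subseteq\ker\counit_{\alg}$ from Lemma~\ref{satz:cocycle-props}. Next, to obtain $\Delta_{\alg}\circ\unimor{L}=(\unimor{L}\tp\unimor{L})\circ\Delta$, note that both sides are unital algebra morphisms $H_R\to\alg\tp\alg$ (composites of algebra morphisms), so again it is enough to verify the identity on trees, which I do by induction on $\abs{t}$. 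Writing $t=B_+(\f)$ and applying the cocycle identity for $L$ gives $\Delta_{\alg}\bigl(L\,\unimor{L}(\f)\bigr)=L\,\unimor{L}(\f)\tp\1+(\id\tp L)\,\Delta_{\alg}\bigl(\unimor{L}(\f)\bigr)$; since all tree components of $\f$ have degree $<\abs{t}$, the inductive hypothesis together with the algebra-morphism property lets me replace $\Delta_{\alg}\bigl(\unimor{L}(\f)\bigr)$ by $(\unimor{L}\tp\unimor{L})\Delta(\f)$. Expanding the other side with the cocycle identity~\eqref{eq:B_+-cocycle} for $B_+$ and then using $\unimor{L}\circ B_+=L\circ\unimor{L}$ produces the same expression, so the induction closes and $\unimor{L}$ is a morphism of bialgebras.

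Finally, when $\alg$ is Hopf the compatibility $S_{\alg}\circ\unimor{L}=\unimor{L}\circ S$ with the antipodes is automatic, because $H_R$ is connected graded and hence Hopf, and a morphism of bialgebras between Hopf algebras always intertwines the antipodes: in $\Hom(H_R,\alg)$ both $S_{\alg}\circ\unimor{L}$ and $\unimor{L}\circ S$ are convolution inverses of $\unimor{L}$ — a one-line check using the bialgebra-morphism property — and such an inverse is unique. The only point that needs a little care is the induction in the bialgebra part: one must first reduce the coproduct identity to the generators $\trees$ using that the maps involved are algebra morphisms, since $\Delta(\f)$ otherwise reintroduces forests, and then keep track of which tensor leg the two cocycle relations act on; everything else is bookkeeping.
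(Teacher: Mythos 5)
Your proposal is correct and follows essentially the same route the paper relies on: the paper simply cites theorem 2 of \cite{CK:NC} and describes $\unimor{L}$ as the placeholder substitution of $B_+$, $m_{H_R}$, $\1$ by $L$, $m_{\alg}$, $\1_{\alg}$, which is precisely your recursive construction via the freeness of $\K[\trees]$ and the bijection $B_+\!:\forests\to\trees$, with the bialgebra and antipode parts handled by the standard cocycle-induction and convolution-inverse arguments.
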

This morphism $\unimor{L}$ simply replaces $B_+$, $m_{H_R}$ and $\1$ as placeholders by $L$, $m_{\alg}$ and $\1_{\alg}$:
\begin{equation*}
	\unimor{L} \left( \tree{++-+--} - 3\tree{+-} \right)
	= \unimor{L} \left\{ B_+ \left( {\left[B_+(\1) \right]}^2 \right) - 3 B_+ (\1) \right\}
	= L \left(  {\left[ L(\1_{\alg}) \right]}^2 \right)  - 3 L(\1_{\alg}).
\end{equation*}
\begin{beispiel}\label{ex:tree-factorial}
	The cocycle $\polyint\in\HZ[1](\K[\x])$ of section \ref{sec:polynomials} induces the character
	\begin{equation}\label{eq:int-rules}
		\intrules \defas \unimor{\polyint}\in\chars{H_R}{\K[\x]}
		\quad\text{fulfilling}\quad
		\intrules (\f) 
		= \frac{\x^{\abs{\f}}}{\f!}
		\quad\text{for any forest}\quad
		\f\in\forests,
		\quad\text{using}
	\end{equation}
\end{beispiel}
\hide{
\begin{proof}
	The inductive proof (start at $f=\1$ trivial) supposes \eqref{eq:int-rules} to be true for all $f\in \forests_{\leq n}$. Then \eqref{eq:int-rules} also holds for any forest $f\in\forests_{n+1}$ with $\abs{\pi_0 (f)} > 1$ as
	\begin{equation*}
		  \intrules (f) 
		= \prod_{t \in \pi_0 (f)} \intrules (t) 
		\urel{\eqref{eq:int-rules}}
			\prod_{t \in \pi_0 (f)} \frac{\x^{\abs{t}}}{t!} 
		= \frac{\x^{\sum_{t\in \pi_0(f)} \abs{t}}}{\prod_{t \in \pi_0(f)} t!} 
		\urel{\eqref{eq:tree-factorial}}
			\frac{\x^{\abs{f}}}{f!},
	\end{equation*}
	exploiting $\abs{t} \leq n$ for any $t\in\pi_0(f)$ to use the induction hypothesis. It remains to consider a tree $t = B_+ (f)$ for some $f\in \forests_n$ in
	\begin{equation*}
		\intrules (t)
		= \intrules \circ B_+ (f) 
		\urel{\eqref{eq:H_R-universal}}
			\int_0 \circ \:\intrules (f) 
		\urel{\eqref{eq:int-rules}}
			\int_0^\x \frac{y^{\abs{f}}}{f!} \ \dd y
		= \frac{\x^{\abs{f}+1}}{(\abs{f}+1) \cdot f!} 
		= \frac{\x^{\abs{B_+ (f)}}}{\left( B_+ f \right)!}
		\urel{\eqref{eq:tree-factorial}}
			\frac{\x^{\abs{t}}}{t!}. \qedhere
	\end{equation*}
\end{proof}
}%
\begin{definition}\label{def:tree-factorial}
	The \emph{tree factorial} $(\cdot)!\in\chars{H_R}{\K}$ is equivalently determined by requesting
	\begin{equation}\label{eq:tree-factorial}
		\left[ B_+(\f) \right]! 
		= \f! \cdot \abs{B_+(\f)}
		\quad\text{or}\quad
		\f! 
		\urel{\footnotemark}
		\prod_{v\in V(\f)} \abs{\f_v}
		\quad\text{for all}\quad
		\f\in\forests.
	\end{equation}
	\footnotetext{By $\f_v$ we denote the subtree of $\f$ rooted at the node $v\in V(\f)$.}
\end{definition}

\section{The generic model}
\label{sec:toymodel}

As explained in the introduction we consider Feynman rules as characters $\phi\in\chars{H_R}{\alg}$, mapping a rooted tree to a function of the parameter $s$ (by proposition \ref{prop:toymodel-mellin} it lies in the algebra $\alg	= \K[\reg^{-1},\reg]] [s^{-\reg}]$). Since $B_+$ mimics the insertion of a subdivergence into a fixed graph $\gamma$ (restricting to a single insertion place by a result from \cite{Yeats}), applying $\phi$ yields a subintegral and therefore
\begin{definition}\label{def:toymodel}
	The generic Feynman rules $\toy$ are given through theorem \ref{satz:H_R-universal} by 
	\begin{equation}\label{eq:toymodel}
		\toy_s \circ B_+ 
		= \int_0^{\infty} \frac{f(\frac{\zeta}{s})\zeta^{-\reg}}{s}\ \toy_{\zeta} \ \dd\zeta 
		= \int_0^{\infty} f(\zeta)(s\zeta)^{-\reg}\ \toy_{s\zeta} \ \dd\zeta.
	\end{equation}
\end{definition}
The integration kernel $f$ is specified by $\gamma$ after \emph{Wick rotation} to Euclidean space, with the asymptotic behaviour $f(\zeta) \sim \zeta^{-1}$ for $\zeta \rightarrow \infty$ generating the (logarithmic) divergences of these integrals (we do not address infrared problems and exclude any poles in $f$).
The regulator $\zeta^{-\reg}$ ensures convergence when $0<\Re(\reg)<1$, with results depending analytically on $\reg$. We can perform all the integrals using this \emph{Mellin transform}
\begin{equation}\label{eq:mellin-trafo}
	F(\reg) \defas
	\int_0^{\infty} f(\zeta) \zeta^{-\reg} \ \dd\zeta
	= \sum_{n=-1}^{\infty} \coeff{n} {\reg}^n,
	\quad\text{by}
\end{equation}
\begin{proposition}\label{prop:toymodel-mellin}
	For any forest $\f \in \mathcal{F}$ we have (called \emph{BPHZ model} in \cite{BroadhurstKreimer:Auto})
	\begin{equation}\label{eq:toymodel-mellin}
		\toy_{s} (\f)
		= s^{-\reg\abs{\f}} \prod_{v \in V(\f)} F \left( \reg \abs{\f_v} \right).
	\end{equation}
\end{proposition}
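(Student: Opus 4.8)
The plan is a short induction on the number of nodes $\abs{\f}$, exploiting that $\toy_s$ is, by Definition \ref{def:toymodel}, the algebra morphism $\unimor{L}$ delivered by the universal property (theorem \ref{satz:H_R-universal}) for the integral operator $L$ appearing on the right of \eqref{eq:toymodel}. Two ingredients are needed: multiplicativity, which lets us restrict to trees, and the defining $B_+$-recursion, which performs the induction step.

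First I would observe that the proposed right-hand side of \eqref{eq:toymodel-mellin} is itself multiplicative over the forest product. Indeed, for $\f=\prod_{t\in\comps(\f)}t$ we have $V(\f)=\bigsqcup_{t}V(t)$, $\abs{\f}=\sum_{t}\abs{t}$, and $\f_v=t_v$ whenever $v\in V(t)$, so the right-hand side factorizes as $\prod_{t}\bigl(s^{-\reg\abs{t}}\prod_{v\in V(t)}F(\reg\abs{t_v})\bigr)$. Since $\toy_s$ is a character, this reduces the claim to trees and simultaneously disposes of the base case $\f=\1$, where the empty product equals $1=\toy_s(\1)$.

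For the induction step, recall that $B_+$ restricts to a bijection $\forests\to\trees$, so every tree with at least one node is $t=B_+(\f)$ for a unique $\f\in\forests$ with $\abs{\f}=\abs{t}-1$. Using \eqref{eq:toymodel} and then the induction hypothesis for $\toy_{s\zeta}(\f)$ — note we must carry the scale as a free variable, since the recursion evaluates the hypothesis at the rescaled argument $s\zeta$ — and pulling the $\zeta$-independent factors out of the integral,
\[
\toy_s(t)=\int_0^{\infty}f(\zeta)(s\zeta)^{-\reg}\,\toy_{s\zeta}(\f)\,\dd\zeta=\Bigl(\prod_{v\in V(\f)}F(\reg\abs{\f_v})\Bigr)\,s^{-\reg(\abs{\f}+1)}\int_0^{\infty}f(\zeta)\,\zeta^{-\reg(\abs{\f}+1)}\,\dd\zeta.
\]
The remaining integral is exactly $F(\reg(\abs{\f}+1))=F(\reg\abs{t})$ by \eqref{eq:mellin-trafo}, and since $V(t)=\set{\text{root}}\sqcup V(\f)$ with $t_{\text{root}}=t$ and $t_v=\f_v$ for $v\in V(\f)$, this new factor $F(\reg\abs{t})$ is precisely the missing root contribution, giving $\toy_s(t)=s^{-\reg\abs{t}}\prod_{v\in V(t)}F(\reg\abs{t_v})$ and closing the induction.

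The one point requiring care — and the only thing I would call an obstacle, though a mild one — is analyticity: the integral in \eqref{eq:toymodel} converges only for $0<\Re(\reg\abs{t})<1$, and the identification with $F$ at the rescaled argument $\reg\abs{t}$ must be understood after meromorphic continuation in $\reg$. This is consistent with $\toy_s$ taking values in $\alg=\K[\reg^{-1},\reg]][s^{-\reg}]$: the operator $L$ restricts to $\alg$ because each monomial $s^{-\reg k}$ with $k\geq 1$ is sent to $s^{-\reg(k+1)}F(\reg(k+1))$, and $F(\reg(k+1))$ has at worst a simple pole at $\reg=0$ by \eqref{eq:mellin-trafo}. (The degenerate value $F(\reg\cdot 0)$ never occurs, as it would require an empty product, i.e. $\f=\1$.) Apart from this bookkeeping there is no real difficulty.
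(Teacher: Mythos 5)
Your proof is correct and follows essentially the same route as the paper: reduce to trees by multiplicativity of both sides, then induct via the defining recursion \eqref{eq:toymodel}, recognizing the remaining $\zeta$-integral as $F(\reg\abs{t})$. The extra remarks on the base case and on analytic continuation in $\reg$ are fine but add nothing beyond what the paper's argument already implicitly uses.
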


\begin{proof}
	As both sides of \eqref{eq:toymodel-mellin} are clearly multiplicative, it is enough to prove the claim inductively for trees. Let it be valid for some forest $\f\in\forests$, then for $t = B_+ (\f)$ observe
	\begin{align*}
		\toy_{s} \circ B_+ (\f)
		&= \int_0^{\infty} (s\zeta)^{-\reg} f(\zeta)\ \toy_{s\zeta} (\f) \ \dd\zeta
		 = \int_0^{\infty} (s\zeta)^{-\reg} f(\zeta) (s\zeta)^{-\reg\abs{\f}} \prod_{v \in V(\f)} F \left( z \abs{\f_v} \right) \ \dd\zeta \\
		&= s^{-\reg \abs{B_+ (\f)}} \left[ \prod_{v \in V(\f)} F \left( \reg \abs{\f_v} \right) \right] F \left( \reg \abs{B_+ (\f)} \right)
		= s^{-\reg \abs{t}} \prod_{v \in V(t)} F \left( \reg \abs{ t_v } \right) \qedhere.
	\end{align*}
\end{proof}
\begin{beispiel}
	Using \eqref{eq:toymodel-mellin}, we can directly write down the Feynman rules like
\begin{equation*}
	\toy_s \left( \tree{+-} \right)
	= s^{-\reg} F(\reg),
	\quad
	\toy_s \left( \tree{++--} \right)
	= s^{-2\reg} F(\reg) F(2\reg)
	\quad\text{and}\quad
	\toy_s \left( \tree{++-+--} \right)
	= s^{-3\reg} {\left[ F(\reg) \right]}^2 F(3\reg).
\end{equation*}
\end{beispiel}
Many examples (choices of $F$) are discussed in \cite{BroadhurstKreimer:Auto}, the particular case of the one-loop propagator graph $\gamma$ of Yukawa theory is in \cite{Kreimer:ExactDSE} and for scalar Yukawa theory in six dimensions one has
$
	F(\reg)
	=
	\frac{1}{\reg(1-\reg)(2-\reg)(3-\reg)}
$
as in \cite{Panzer:Master}. Already noted in \cite{Kreimer:ChenII}, the highest order pole of $\toy_s (\f)$ is independent of $s$ and just the tree factorial
	\begin{equation}\label{eq:toymodel-leading-pole}
		\toy_s (\f)
		\in s^{-\reg\abs{\f}} \hspace{-1mm}
			\prod_{v \in V(\f)} \left\{
				\tfrac{\coeff{-1}}{\reg\abs{\f_v}} 
				+ \K[[\reg]]
			\right\} 
		\urel[\subset]{\eqref{eq:tree-factorial}}
		\frac{1}{\f!} {\left( \tfrac{\coeff{-1}}{\reg} \right)}^{\abs{\f}}
				+ \reg^{1-\abs{\f}} {\K}[\ln s][[\reg]].
	\end{equation}

\subsection{Renormalization}
\label{sec:renormalization}

Algebraically, renormalization equals a \emph{Birkhoff decomposition} \cite{CK:RH1,Manchon,Panzer:Master} of $\phi\in\chars{H}{\alg}$ into \emph{renormalized} rules $\phi_R \defas \phi_+\in\chars{H}{\alg}$ and the \emph{counterterms} $Z \defas \phi_-\in\chars{H}{\alg}$ such that
\begin{equation}
	\phi
	= \phi_-^{\convolution -1} \convolution \phi_+
	\quad\text{and}\quad
	\phi_{\pm} \left( \ker\counit \right) \subseteq \alg_{\pm},
	\label{eq:birkhoff}
\end{equation}
with respect to a splitting \mbox{$\alg = \alg_+ \oplus \alg_-$} determined by the \emph{renormalization scheme} (the projection $R\!: \alg \twoheadrightarrow \alg_-$). We comment on \emph{minimal subtraction} in \ref{sec:ms} and now focus on
\begin{definition}
	On the target algebra $\alg$ of regularized Feynman rules depending on a single external variable $s$, define the \emph{\momscheme} by evaluation at $s=\rp$:
	\begin{equation}
		\End(\alg) \ni \momsch{\rp} \defas \ev_{\rp} = \left( \alg \ni f \mapsto {\left. f \right|}_{s=\rp} \right).
		\label{eq:momentum-scheme}
	\end{equation}
\end{definition}
This scheme exploits that subtraction improves the decay at infinity: Let $f(\zeta) \asymptotic \frac{1}{\zeta}$, meaning $f(\zeta) = \frac{1}{\zeta} + \tilde{f}(\zeta)$ for some $\tilde{f}(\zeta) \in \bigo{\zeta^{-1-\varepsilon}}$ with $\varepsilon>0$. Then $\toy_s(\tree{+-})$ is logarithmically divergent (would it not be for the regulator $\zeta^{-\reg}$), but subtraction
\begin{equation}
	\toy_s(\tree{+-}) - \toy_{\rp}(\tree{+-})
	= \int_0^{\infty} \left[ \frac{f(\frac{\zeta}{s})}{s} - \frac{f(\frac{\zeta}{\rp})}{\rp} \right] \zeta^{-\reg}
	= \int_0^{\infty} \left[ \frac{\tilde{f}(\frac{\zeta}{s})}{s} - \frac{\tilde{f}(\frac{\zeta}{\rp})}{\rp} \right] \zeta^{-\reg}
	\label{eq:momsch-example}
\end{equation}
yields a convergent integral even for $\reg=0$. As $\momsch{\rp}$ is a character of $\alg$, the Birkhoff recursion simplifies to \mbox{$\toyZ = \momsch{\rp} \circ \toy \circ S = \toy_{\rp} \circ S$} and $\toyR = \toy_{\rp}^{\convolution-1} \convolution \toy_{s}$.
\begin{beispiel}
	We find
	$
		\toyR[s] \left( \tree{+-} \right) 
		= \left( s^{-\reg} - \rp^{-\reg} \right) F(\reg)
	$
	and 
	$
		S\left( \tree{++--} \right) 
		= -\tree{++--} + \tree{+-}\tree{+-}
	$
	results in
\begin{equation}
	\toyR[s] \left( \tree{++--} \right)
		= \left( s^{-2\reg} - \rp^{-2\reg} \right) F(\reg)F(2\reg) - \left( s^{-\reg} - \rp^{-\reg} \right) \rp^{-\reg} F^2(\reg).
	\label{toyR:(())}
\end{equation}
\end{beispiel}

The goal of renormalization is to assure the \emph{finiteness} of the \emph{physical limit}
\begin{equation}
	\toyR[][0] \defas \lim_{\reg \rightarrow 0} \toyR,
	\label{eq:toymodel-physical}
\end{equation}
and indeed we find the finite $\toyR[s][0] \left( \tree{+-} \right) = - \coeff{-1} \ln \tfrac{s}{\rp}$. In the case of \eqref{toyR:(())} check
\begin{align}
	&\toyR[s][0] \left( \tree{++--} \right)
		= \lim_{\reg \rightarrow 0} \left\{
		- \left[ -\reg \ln \tfrac{s}{\rp} + \tfrac{\reg^2}{2} \left( \ln^2 s + 2\ln s \ln\rp - 3\ln^2 \rp \right) \right] \cdot \left[ \tfrac{\coeff[2]{-1}}{\reg^2} + 2\tfrac{\coeff{-1} \coeff{0}}{\reg} \right] \right. \nonumber\\
	&	\quad + \left. \left[ -2\reg \ln \tfrac{s}{\rp} + 2\reg^2\left( \ln^2 s - \ln^2 \rp \right)\right]\cdot \left[ \tfrac{\coeff[2]{-1}}{2\reg^2} + \tfrac{3\coeff{0} \coeff{-1}}{2\reg} \right] \right\}
		= \frac{\coeff[2]{-1}}{2} \ln^2 \tfrac{s}{\rp} - \coeff{-1} \coeff{0} \ln \tfrac{s}{\rp},
		\label{toyR-physical:(())}
\end{align}
where all poles in $\reg$ perfectly cancel. Note that $\toyR[s][0]$ maps a forest $w$ to a polynomial in $\K[\ln \tfrac{s}{\rp}]$ of degree $\leq\!\abs{w}$ without constant term (except for $\toyphy(\1) = 1$), due to the subtraction at $s=\rp$. We now prove these properties in general, extending work in \cite{Factorization}.

\subsection{Subdivergences}
\label{sec:subdivergences}

Inductively, the Birkhoff decomposition is constructed as $\phi_+(x) = (\id-\momsch{\rp}) \bar{\phi}(x)$ where
\begin{equation*}
	\bar{\phi} (x) \defas
	\phi(x) + \sum_x \phi_- (x') \phi(x'') 
	= \phi(x) + [ \phi_- \convolution \phi - \phi_- - \phi ](x) 
	= \phi_+(x) - \phi_-(x)
\end{equation*}
is the \emph{Bogoliubov character} (\emph{$\bar{R}$-operation}) and renormalizes the \emph{subdivergences}. Note
\begin{satz}\label{satz:subdivergences}
	For an endomorphism $L \in \End(\alg)$ consider the Feynman rules $\phi \defas \unimor{L}$ induced by \eqref{eq:H_R-universal}. Given a renormalization scheme $R \in \End(\alg)$ such that
	\begin{equation}
		L \circ m_{\alg} \circ ( \phi_- \tp \id )
		= m_{\alg} \circ (\phi_- \tp L),
		\label{eq:counterterm-scalars}
	\end{equation}
	that is to say, $L$ is linear over the counterterms, we have
	\begin{equation}
		\bar{\phi} \circ B_+ = L \circ \phi_+.
		\label{eq:rbar-cocycle}
	\end{equation}
\end{satz}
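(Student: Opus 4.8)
The plan is to establish \eqref{eq:rbar-cocycle} by a direct manipulation of convolution products, with no induction on the size of trees. I start from the reformulation of the Bogoliubov character recorded just above, $\bar{\phi} = \phi_- \convolution \phi - \phi_-$, so that $\bar{\phi}\circ B_+ = (\phi_- \convolution \phi)\circ B_+ - \phi_- \circ B_+$. The goal then becomes to show that the first summand equals $\phi_- \circ B_+ + L\circ\phi_+$, after which the two copies of $\phi_-\circ B_+$ cancel.

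To treat $(\phi_-\convolution\phi)\circ B_+ = m_{\alg}\circ(\phi_-\tp\phi)\circ\Delta\circ B_+$, I would insert the cocycle identity \eqref{eq:B_+-cocycle}, $\Delta\circ B_+ = B_+\tp\1 + (\id\tp B_+)\circ\Delta$. The $B_+\tp\1$ term contributes $m_{\alg}\circ(\phi_-\tp\phi)\circ(B_+\tp\1) = \phi_-\circ B_+$, using $\phi(\1)=\1_{\alg}$ since $\phi=\unimor{L}$ is a morphism of unital algebras. The $(\id\tp B_+)\circ\Delta$ term contributes $m_{\alg}\circ\big(\phi_-\tp(\phi\circ B_+)\big)\circ\Delta$, and here the defining property \eqref{eq:H_R-universal} of $\unimor{L}$, namely $\phi\circ B_+ = L\circ\phi$, rewrites it as $m_{\alg}\circ\big(\phi_-\tp(L\circ\phi)\big)\circ\Delta$.

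The crucial step --- the only place where the hypothesis \eqref{eq:counterterm-scalars} is used, so I regard it as the crux rather than a genuine obstacle --- is to extract $L$ from this last expression. Factoring $\phi_-\tp(L\circ\phi) = (\phi_-\tp L)\circ(\id\tp\phi)$ and reading \eqref{eq:counterterm-scalars} as the identity $m_{\alg}\circ(\phi_-\tp L) = L\circ m_{\alg}\circ(\phi_-\tp\id)$ of maps $H\tp\alg\to\alg$ (the first tensor slot being where $\phi_-$ lands), one obtains $m_{\alg}\circ\big(\phi_-\tp(L\circ\phi)\big)\circ\Delta = L\circ m_{\alg}\circ(\phi_-\tp\phi)\circ\Delta = L\circ(\phi_-\convolution\phi)$. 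Hence $(\phi_-\convolution\phi)\circ B_+ = \phi_-\circ B_+ + L\circ(\phi_-\convolution\phi)$, and subtracting $\phi_-\circ B_+$ gives $\bar{\phi}\circ B_+ = L\circ(\phi_-\convolution\phi)$.

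It remains to identify $\phi_-\convolution\phi$ with $\phi_+$, which is immediate: convolving the Birkhoff factorization $\phi=\phi_-^{\convolution-1}\convolution\phi_+$ from \eqref{eq:birkhoff} on the left with $\phi_-$ and using associativity of $\convolution$ yields $\phi_-\convolution\phi=\phi_+$, so $\bar{\phi}\circ B_+ = L\circ\phi_+$ as claimed. Altogether the ingredients are exactly the cocycle identity for $B_+$, the universal property of $\unimor{L}$, that $\phi_-$ is a unital character, associativity of $\convolution$, and the linearity-over-counterterms hypothesis \eqref{eq:counterterm-scalars}.
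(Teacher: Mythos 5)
Your proof is correct and follows essentially the same route as the paper's: apply the cocycle identity \eqref{eq:B_+-cocycle} to $\bar{\phi}\circ B_+=(\phi_-\convolution\phi-\phi_-)\circ B_+$ so the $B_+\tp\1$ term cancels $\phi_-\circ B_+$, use $\phi\circ B_+=L\circ\phi$, pull out $L$ via \eqref{eq:counterterm-scalars}, and identify $\phi_-\convolution\phi=\phi_+$. You merely spell out the intermediate steps that the paper compresses into a single chain of equalities.
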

\begin{proof}
	This is a straightforward consequence of the cocycle property of $B_+$:
	\begin{align*}
		\bar{\phi} \circ B_+
		&= \left( \phi_- \convolution \phi - \phi_- \right) \circ B_+
		 = m_{\alg} \circ (\phi_- \tp \phi) \circ \left[ (\id \tp B_+) \circ \Delta + B_+ \tp \1 \right] 
		 		- \phi_- \circ B_+ \\
		&= \phi_- \convolution \left( \phi \circ B_+ \right) 
		 = \phi_- \convolution \left( L \circ \phi \right)
		\urel{\eqref{eq:counterterm-scalars}} L \circ \left( \phi_- \convolution \phi \right)
		 = L \circ \phi_+ \qedhere.
	\end{align*}
\end{proof}
As the counterterms $\toyZ$ of our model are independent of $s$, they can be moved out of the integrals in \eqref{eq:toymodel} and \eqref{eq:counterterm-scalars} is fulfilled indeed. This is a general feature of {\qfts}: The counterterms to not depend on any external variables\footnote{Even if the divergence of a Feynman graph does depend on external momenta as happens for higher degrees of divergence, the Hopf algebra is defined such that the counterterms are evaluations on certain \emph{external structures}, given by distributions in \cite{CK:RH1}. So in any case, $\phi_-$ maps to scalars.}.

The significance of \eqref{eq:rbar-cocycle} lies in the expression of the renormalized $\toyR[0][](t)$ for a tree $t=B_+(\f)$ only in terms of the renormalized value $\toyR(\f)$. This allows for inductive proofs of properties of $\toyR$ and also $\toyR[][0]$, without having to consider the unrenormalized Feynman rules or their counterterms at all.

\subsection{Finiteness}
\label{sec:finiteness}

\begin{proposition}\label{satz:finiteness}
	The physical limit $\toyR[s][0]$ exists and maps $H_R$ into polynomials $\K[\ln \tfrac{s}{\rp}]$.
\end{proposition}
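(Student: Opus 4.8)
The plan is to reduce to trees and then induct on the number of nodes. Since the Birkhoff decomposition produces characters, $\toyR[s]=\toy_{\rp}^{\convolution-1}\convolution\toy_{s}\in\chars{H_R}{\alg}$ is multiplicative, hence so is its limit $\toyR[s][0]$ once that exists, and a product of polynomials in $\ln\tfrac{s}{\rp}$ is again one; so I would only prove that for every rooted tree $t$ the element $\toyR[s](t)\in\alg$ admits a finite limit as $\reg\to0$, deducing membership in $\K[\ln\tfrac{s}{\rp}]$ afterwards. The induction is on $\abs{t}$, with base $\abs{t}=1$ the computation $\toyR[s](\tree{+-})=\bigl(s^{-\reg}-\rp^{-\reg}\bigr)F(\reg)$: it is regular at $\reg=0$ because the simple pole of $F$ in \eqref{eq:mellin-trafo} is killed by the first-order zero of $s^{-\reg}-\rp^{-\reg}$, the limit being $-\coeff{-1}\ln\tfrac{s}{\rp}$.

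For the step I would write $t=B_+(\f)$ with $\abs{\f}=\abs{t}-1$, let $\toycc\in\End(\alg)$ be the integral operator of definition \ref{def:toymodel} (so $\toy=\unimor{\toycc}$ by theorem \ref{satz:H_R-universal}, definition \ref{def:toymodel} reading precisely $\toy\circ B_+=\toycc\circ\toy$), and use that the counterterms $\toyZ$ of the {\momscheme} are $s$-independent scalars — the kernel in \eqref{eq:toymodel} does not see them — so that \eqref{eq:counterterm-scalars} holds and theorem \ref{satz:subdivergences} together with $\toyR=(\id-\momsch{\rp})\overline{\toy}$ gives, writing $\toycc_s(g)$ for the scale-$s$ value of $\toycc(g)$,
\begin{equation*}
	\toyR[s](t)=\toycc_s\bigl(\toyR(\f)\bigr)-\toycc_{\rp}\bigl(\toyR(\f)\bigr).
\end{equation*}
Crucially the right-hand side involves only the \emph{renormalized} lower amplitude $\toyR(\f)$, so the induction hypothesis applies. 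Expanding $\toyR(\f)=\sum_{k\geq0}a_k(\reg)\,s^{-\reg k}\in\K[\reg^{-1},\reg]][s^{-\reg}]$ and integrating by \eqref{eq:mellin-trafo} — with no derivatives of $F$, since the $a_k$ are $s$-independent — I get
\begin{equation*}
	\toycc_s\bigl(\toyR(\f)\bigr)=\sum_{k\geq0}a_k(\reg)\,F\bigl(\reg(k+1)\bigr)\,s^{-\reg(k+1)},\qquad\toyR[s](t)=\sum_{k\geq0}a_k(\reg)\,\Phi\bigl(\reg(k+1)\bigr),
\end{equation*}
with $\Phi(u)\defas F(u)\bigl(s^{-u}-\rp^{-u}\bigr)$; as before the pole of $F$ cancels, so $\Phi(u)=\sum_{m\geq0}\phi_m u^m$ is a power series with $\phi_0=-\coeff{-1}\ln\tfrac{s}{\rp}$ and coefficients $\phi_m$ polynomial in $\ln s$ and $\ln\rp$.

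\textbf{The hard part.} It is the remaining pole count. From the last display $\toyR[s](t)=\sum_{m\geq0}\phi_m\,\reg^m\sum_{k}a_k(\reg)(k+1)^m$, so I must show each $\reg^m\sum_{k}a_k(\reg)(k+1)^m$ is regular at $\reg=0$. The induction hypothesis enters indirectly: re-expanding $s^{-\reg k}=e^{-\reg k\ln s}$ identifies the coefficient of $(\ln s)^j$ in $\toyR(\f)$ with $\tfrac{(-\reg)^j}{j!}\sum_{k}k^j a_k(\reg)$, so finiteness of $\toyR(\f)$ forces $\sum_{k}k^j a_k(\reg)$ to have a pole of order at most $j$; hence $\sum_{k}a_k(\reg)(k+1)^m=\sum_{j=0}^m\binom{m}{j}\sum_{k}k^j a_k(\reg)$ has pole order $\leq m$, and the prefactor $\reg^m$ makes it regular. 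Then $\toyR[s](t)$ is a formal power series in $\reg$ with coefficients polynomial in $\ln s,\ln\rp$, so its $\reg\to0$ limit exists — closing the induction — and is a polynomial in $\ln s,\ln\rp$. This is the step where the improvement of the ultraviolet behaviour by subtracting at a single point, as in \eqref{eq:momsch-example}, is really being used, and controlling the nested pole cancellations is the main obstacle I expect.

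\textbf{Upgrading to $\ln\tfrac{s}{\rp}$.} It then remains to upgrade ``polynomial in $\ln s,\ln\rp$'' to ``polynomial in $\ln\tfrac{s}{\rp}$'', which I would do by a scaling argument. By \eqref{eq:toymodel-mellin}, $\toy_{\lambda s}=\toy_s\circ\gradAut_{-\reg\ln\lambda}$ for every $\lambda>0$, and since each $\gradAut_t$ is a Hopf automorphism one has $(\chi\circ\gradAut_t)\convolution(\chi'\circ\gradAut_t)=(\chi\convolution\chi')\circ\gradAut_t$ and $(\chi\circ\gradAut_t)^{\convolution-1}=\chi^{\convolution-1}\circ\gradAut_t$; hence the renormalized rules taken with reference scale $\lambda\rp$ and evaluated at $\lambda s$ equal $\toyR[s]\circ\gradAut_{-\reg\ln\lambda}$, differing on a homogeneous $\f$ from $\toyR[s](\f)$ only by the factor $\lambda^{-\reg\abs{\f}}\to1$. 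Passing to the limit shows $\toyR[s][0](\f)$, viewed as a function of $(s,\rp)$, is invariant under $(s,\rp)\mapsto(\lambda s,\lambda\rp)$, and a polynomial in $\ln s,\ln\rp$ invariant under this diagonal translation is a polynomial in $\ln s-\ln\rp=\ln\tfrac{s}{\rp}$. This finishes the argument.
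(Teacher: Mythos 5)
Your proof is correct, and while it shares the paper's skeleton — induction over $B_+$, using theorem \ref{satz:subdivergences} (via \eqref{eq:rbar-cocycle}) to express $\toyR[s]\bigl(B_+(\f)\bigr)$ through the already renormalized $\toyR(\f)$ — the decisive finiteness step is argued by a genuinely different route. The paper stays analytic: it pulls the subtraction at $s=\rp$ into the integrand as in \eqref{eq:momsch-example}, passes $\reg\to 0$ under the integral by dominated convergence, and thereby obtains the regulator-free representation \eqref{eq:BPHZ}, from which the dependence on $\ln\tfrac{s}{\rp}$ alone follows by the substitution $\zeta\mapsto\rp\zeta$. You instead stay inside the algebra $\K[\reg^{-1},\reg]][s^{-\reg}]$: expand $\toyR(\f)=\sum_k a_k(\reg)s^{-\reg k}$, integrate termwise via the Mellin transform (exactly the manipulation used to prove proposition \ref{prop:toymodel-mellin}, so the level of rigour matches the paper's), and observe that the induction hypothesis — read as the absence of pole terms in the expansion with coefficients polynomial in $\ln s$, i.e.\ finiteness for all $s$ — bounds the pole order of $\sum_k k^j a_k(\reg)$ by $j$, which precisely compensates the $\reg^m$ prefactors from the regular series $F(u)\bigl(s^{-u}-\rp^{-u}\bigr)$; the reduction to $\ln\tfrac{s}{\rp}$ then comes from the scaling identity $\toy_{\lambda s}=\toy_s\circ\gradAut_{-\reg\ln\lambda}$ and invariance of the limit under $(s,\rp)\mapsto(\lambda s,\lambda\rp)$. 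Your algebraic pole count is in the spirit of proposition \ref{satz:finiteness-algebraic} and avoids the uniform-domination issue the paper glosses over in its dominated-convergence step; what it does not deliver is formula \eqref{eq:BPHZ} itself, which the paper reuses afterwards (in the proof of theorem \ref{satz:toymodel-universal} and to argue regulator independence), so if your argument were substituted into the paper one would still want to derive \eqref{eq:BPHZ} separately.
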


\begin{proof}
	We proceed inductively from $\toyR[s][0](\1) = 1$ and as $\toyR[][0]$ is a character only need to consider trees $t=B_+(\f)$ in the induction step. Hence for this $\f\in\forests$ we already know that $\toyR[\zeta][0](\f) \in \bigo{\ln^N \zeta}$ for some $N\in\N_0$ such that dominated convergence yields
	\settowidth{\wurelwidth}{\eqref{eq:rbar-cocycle}}
	\begin{align*}
		\toyR[s][0] (t)
		&\wurel{\eqref{eq:rbar-cocycle}} \lim_{\reg \rightarrow 0} (\id - \momsch{\rp}) \left[ s \mapsto \int_0^{\infty} \frac{f(\zeta/s)}{s} \zeta^{-\reg}\ \toyR[\zeta](\f)\ \dd\zeta \right] \nonumber\\
		&\wurel{} \lim_{\reg \rightarrow 0} \int_0^{\infty} \left[ \tfrac{f(\zeta/s)}{s} - \tfrac{f(\zeta/\rp)}{\rp} \right] {\zeta}^{-\reg}\ \toyR[\zeta] (\f)\ \dd\zeta
		\wurel{} \int_0^{\infty} \left[ \tfrac{f(\zeta/s)}{s} - \tfrac{f(\zeta/\rp)}{\rp} \right]\ \toyR[\zeta][0](\f)\ \dd\zeta,
	\end{align*}
	recalling the term in square brackets to be from $\bigo{\zeta^{-1-\varepsilon}}$ as in \eqref{eq:momsch-example}. This proves the cancellation of all $\reg$-poles in $\toyR[s](t)$ and we identify $\toyR[s][0](t)$ with the $\propto \reg^0$ term, which is a polynomial in $\ln s$ and $\ln\rp$ of degree $\abs{t}$ by inspection of \eqref{eq:toymodel-mellin}: Each such logarithm comes with a factor $\reg$ (expanding $s^{-\reg}$) which needs to cancel with a pole $\tfrac{c_{-1}}{\reg\abs{t_v}}$ from some $F(\reg\abs{t_v})$ in order to contribute to the $\propto \reg^0$ term.
	Finally the substitution $\zeta \mapsto \zeta\rp$ gives
	\begin{equation}
		\label{eq:BPHZ}
		\toyR[s][0](t)
		= \int_0^{\infty} \left[ \frac{f(\zeta\tfrac{\rp}{s})}{\tfrac{s}{\rp}} - f(\zeta) \right]\ \toyR[\rp\zeta][0](\f)\ \dd\zeta,
	\end{equation}
	hence by induction $\toyR[\zeta\rp][0]$ only depends on $\zeta$ and $\toyR[s][0]$ is a function of $\frac{s}{\rp}$ only.
\end{proof}
Using \eqref{eq:BPHZ}, the physical limit of the renormalized Feynman rules can be obtained inductively by convergent integrations after performing the subtraction at $s=\rp$ on the integrand, in particular without the need of any regulator. Therefore $\toyR[][0]$ is independent of the choice of regularization prescription, so employing a \emph{cutoff} regulator or \emph{dimensional regularization} yields the same renormalized result in the physical limit.

\section{The Hopf algebra of polynomials}
\label{sec:polynomials}

We summarize relevant properties of the polynomials, focusing on their Hochschild cohomology (the relevance of $\polyint$ was already mentioned in \cite{CK:NC}). First observe
\begin{lemma}
	Requiring $\Delta (x) = x\tp \1 + \1 \tp x$ induces a unique Hopf algebra structure on the polynomials ${\K}[x]$. It is graded by degree, connected, commutative and cocommutative with $\Delta \left( x^n \right) = \sum_{i=0}^n \binom{n}{i} x^i \tp x^{n-i}$ and the primitive elements are $\Prim \left( {\K}[x] \right) = \K \cdot x $.
\end{lemma}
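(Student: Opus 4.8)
The plan is to verify directly that the comultiplication rule $\Delta(x) = x \tp \1 + \1 \tp x$ extends uniquely to a bialgebra structure, and then deduce the remaining properties (Hopf, grading, cocommutativity, primitives) essentially for free.

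First I would argue existence and uniqueness of $\Delta$ as an algebra morphism. Since $\K[x] = S(\K\cdot x)$ is the free commutative algebra on one generator, any algebra morphism out of it is determined by the image of $x$; conversely, any choice of image in any commutative algebra extends. Applying this with target $\K[x] \tp \K[x]$ and the prescribed value $x \tp \1 + \1 \tp x$ gives a unique algebra morphism $\Delta\!: \K[x] \to \K[x]\tp\K[x]$. The explicit formula $\Delta(x^n) = \sum_{i=0}^n \binom{n}{i} x^i \tp x^{n-i}$ then follows from multiplicativity and the binomial theorem (the two tensor factors commute). Similarly $\counit\!: \K[x]\to\K$ is the unique algebra morphism with $\counit(x) = 0$, i.e. $\counit(x^n) = \delta_{n,0}$.

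Next I would check the bialgebra axioms. Coassociativity $(\Delta\tp\id)\circ\Delta = (\id\tp\Delta)\circ\Delta$ and counitality $(\counit\tp\id)\circ\Delta = \id = (\id\tp\counit)\circ\Delta$ need only be verified on the generator $x$ (since both sides are algebra morphisms and $x$ generates), where they are immediate: $(\Delta\tp\id)(x\tp\1+\1\tp x) = x\tp\1\tp\1 + \1\tp x\tp\1 + \1\tp\1\tp x$, symmetric in the obvious way, and $\counit(x)\1 + \counit(\1)x = x$. Cocommutativity is clear from the symmetry of the formula for $\Delta(x^n)$, or again just from $\tau\circ\Delta(x) = \1\tp x + x\tp\1 = \Delta(x)$ where $\tau$ is the flip.

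For the grading, assign $x$ degree one so $\K[x]_n = \K\cdot x^n$; this is a connected algebra grading ($\K[x]_0 = \K\cdot\1$), and $\Delta(\K[x]_n) \subseteq \bigoplus_{i+j=n}\K[x]_i\tp\K[x]_j$ by the binomial formula, so the grading is compatible. A connected graded bialgebra is automatically a Hopf algebra (this is exactly point 2 in the summary of Section 2, applied to $H = \K[x]$), which gives the antipode for free. Finally, for the primitive elements: if $p = \sum_n a_n x^n$ satisfies $\Delta(p) = p\tp\1+\1\tp p$, comparing the coefficient of $x^i\tp x^j$ for $i,j\geq 1$ forces $\binom{i+j}{i}a_{i+j} = 0$, hence $a_n = 0$ for all $n\geq 2$, while $a_0 = 0$ follows from $\counit(p) = 0$ (automatic for primitives by Lemma \ref{satz:cocycle-props}-type reasoning, or directly); thus $\Prim(\K[x]) = \K\cdot x$. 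I do not anticipate a genuine obstacle here — the only thing to be careful about is to phrase each verification on the generator $x$ and invoke that algebra morphisms agreeing on generators agree, rather than computing with general $x^n$; the extraction of primitives is the one place where a small coefficient comparison is actually needed.
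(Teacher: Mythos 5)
Your proof is correct. The paper itself offers no proof of this lemma (it is stated as a standard fact), and your verification is exactly the standard argument: freeness of $\K[x]$ to define and uniquely determine $\Delta$ and $\counit$ on the generator, checking the coalgebra axioms on $x$ only, connectedness of the grading to obtain the antipode via the framework of Section 2, and the coefficient comparison (valid since $\operatorname{char}\K=0$, so $\binom{i+j}{i}\neq 0$) for the primitives — with $a_0=0$ correctly extracted from $\counit(p)=0$, which is indeed automatic for primitives.
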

The \emph{integration operator} $\polyint\!: x^n \mapsto \frac{1}{n+1}x^{n+1}$ furnishes a cocycle $\polyint \in \HZ[1]({\K}[x])$ as
	\begin{align*}
		\Delta \polyint \left( \frac{x^n}{n!} \right)
		&= \Delta \left( \frac{x^{n+1}}{(n+1)!} \right)
		= \sum_{k=0}^{n+1} \frac{x^k}{k!} \tp \frac{x^{n+1-k}}{(n+1-k)!} \\
		&= \frac{x^{n+1}}{(n+1)!} \tp \1 + \sum_{k=0}^n \frac{x^k}{k!} \tp \polyint \left(\frac{x^{n-k}}{(n-k)!} \right)
		= \left[ \polyint \tp \1 + \left(\id \tp \polyint \right) \circ \Delta \right] \left( \frac{x^n}{n!} \right),
	\end{align*}
and is not a coboundary since $\polyint 1 = x \neq 0$. In fact it generates the cohomology by
\begin{satz}
	$\HH[1] (\K[x]) = \K \cdot [ \polyint ]$ is one-dimensional as the 1-cocycles of $\K[x]$ are
	\begin{equation}
		\HZ[1](\K[x]) 
		= \K \cdot \polyint 
			\ \oplus\ 
			\dH \left( \K[x]' \right)
		= \K \cdot \polyint 
			\ \oplus\ 
			\HB[1](\K[x]).
		\label{eq-polys-cycles}
	\end{equation}
\end{satz}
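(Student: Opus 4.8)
The plan is to strip off the evident $\polyint$-direction and then show that any one-cocycle of $\K[x]$ annihilating $\1$ is necessarily a coboundary. That the sum in \eqref{eq-polys-cycles} is direct is immediate: by Lemma~\ref{satz:cocycle-props} every coboundary satisfies $(\dH\alpha)(\1)=0$, while $\polyint(\1)=x\neq0$, so evaluating $c\polyint=\dH\alpha$ at $\1$ forces $c=0$ and hence $\K\cdot\polyint\cap\HB[1](\K[x])=\set{0}$. Since $\polyint\in\HZ[1](\K[x])$ has been checked above and $\HB[1](\K[x])\subseteq\HZ[1](\K[x])$ by construction, only the inclusion $\HZ[1](\K[x])\subseteq\K\cdot\polyint\oplus\HB[1](\K[x])$ remains. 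Given $L\in\HZ[1](\K[x])$, Lemma~\ref{satz:cocycle-props} puts $L(\1)$ in $\Prim(\K[x])=\K\cdot x$, say $L(\1)=c\,x$; then $L-c\polyint$ is still a one-cocycle and kills $\1$, so it suffices to treat cocycles $L$ with $L(\1)=0$ and to show $L\in\HB[1](\K[x])$.

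For such an $L$, write $L(x^n)=\sum_{k\geq1}\ell_{n,k}\,x^k$ (no constant term, since $\im L\subseteq\ker\counit=x\K[x]$ by Lemma~\ref{satz:cocycle-props}, and a finite sum, since each $L(x^n)$ is a polynomial). Plugging this into the cocycle identity $\Delta\circ L=(\id\tp L)\circ\Delta+L\tp\1$ and expanding with $\Delta(x^n)=\sum_i\binom{n}{i}x^i\tp x^{n-i}$, I would compare coefficients of $x^a\tp x^b$: the $b=0$ equations reduce to $\binom{n}{a}\ell_{n-a,0}=0$ and hold automatically, whereas for $b\geq1$ one gets $\binom{a+b}{a}\ell_{n,a+b}=\binom{n}{a}\ell_{n-a,b}$. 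Taking $b=1$ shows that $L$ is pinned down by the ``subdiagonal'' scalars $\ell_{m,1}$ through $\ell_{n,k}=\tfrac1k\binom{n}{k-1}\ell_{n-k+1,1}$, with $\ell_{0,1}=0$ coming from $L(\1)=0$ (so incidentally $\deg L(x^n)\leq n$); re-substituting this back into the general relation reduces it to the routine identity $\binom{a+b}{a}\tfrac1{a+b}\binom{n}{a+b-1}=\binom{n}{a}\tfrac1b\binom{n-a}{b-1}$.

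Now I would simply read off the cobounding functional: define $\alpha\in\K[x]'$ by $\alpha(x^m)\defas\tfrac1{m+1}\ell_{m+1,1}$ for $m\geq0$. From \eqref{eq:dH} one has $(\dH\alpha)(x^n)=\sum_{i=1}^n\binom{n}{i}\alpha(x^{n-i})\,x^i$, and its $x^i$-coefficient is $\binom{n}{i}\tfrac1{n-i+1}\ell_{n-i+1,1}=\tfrac1i\binom{n}{i-1}\ell_{n-i+1,1}=\ell_{n,i}$, once more by a one-line binomial check. Hence $\dH\alpha=L$, so $L\in\HB[1](\K[x])$, which establishes \eqref{eq-polys-cycles}; quotienting by $\HB[1](\K[x])$ and using directness of the sum yields $\HH[1](\K[x])=\K\cdot[\polyint]$, one-dimensional because $[\polyint]\neq0$.

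I do not expect a genuine obstacle here: the crux is merely the observation that a one-cocycle of $\K[x]$ is rigidly determined by its subdiagonal coefficients, which is precisely the data a functional $\alpha$ carries once the $\polyint$-component has been removed. The only thing needing care is the bookkeeping of the two binomial identities and making sure the explicit $\alpha$ reproduces $L$ on all of $\K[x]$. A marginally more conceptual variant would decompose $L$ into its homogeneous components for the degree grading, note that coboundaries occupy all strictly negative degrees as well as degree $0$, that $\polyint$ alone spans degree $+1$ while $\HB[1](\K[x])$ has no degree-$+1$ part, and that there are no cocycles in degrees $\geq2$ — but the coefficient computation above subsumes all of this.
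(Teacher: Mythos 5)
Your proof is correct: the splitting $L\mapsto L-c\polyint$ via Lemma \ref{satz:cocycle-props}, the directness argument from $(\dH\alpha)(\1)=0$ versus $\polyint(\1)=x\neq0$, the coefficient relation $\binom{a+b}{a}\ell_{n,a+b}=\binom{n}{a}\ell_{n-a,b}$ with its $b=1$ specialization, and the final identification $\dH\alpha=L$ all check out (both binomial identities are indeed routine). Your route uses the same decomposition as the paper and even the same cobounding functional — your $\alpha(x^m)=\tfrac{1}{m+1}\ell_{m+1,1}$ is exactly $\partial_0\circ L\circ\polyint$ evaluated on $x^m$ — but the verification mechanism differs: the paper observes that $\tilde{L}\circ\polyint$ is again a cocycle whenever $\tilde{L}(\1)=0$, iterates $L_{n+1}=(L_n-a_n\polyint)\circ\polyint$ to extract the scalars $a_n=\partial_0\circ L\circ\polyint^{n+1}(\1)$, and reads off $L=a_{-1}\polyint+\dH\alpha$ from the expansion of $L\circ\polyint^{n}(\1)$, whereas you compare coefficients of $x^a\tp x^b$ in the cocycle identity directly. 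Your version is more pedestrian but avoids the auxiliary family of cocycles and makes explicit the underlying rigidity statement (a cocycle killing $\1$ is determined by the linear coefficients of its values); the paper's induction is slicker but encodes the same content. One remark: your re-substitution consistency check is not logically needed for the inclusion $\HZ[1](\K[x])\subseteq\K\polyint\oplus\HB[1](\K[x])$ — only the $b=1$ relations and the verification $\dH\alpha=L$ enter — though it is exactly what one would add to characterize all cocycles as maps of this form.
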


\begin{proof}
	For an arbitrary cocycle $L \in \HZ[1] (\K[x])$, lemma \ref{satz:cocycle-props} ensures $L(1) = x a_{-1}$ where $a_{-1} \defas \partial_0 L(1)$. Hence $\tilde{L} \defas L - a_{-1}\polyint \in \HZ[1]$ fulfils $\tilde{L}(1)=0$, so $L_0 \defas \tilde{L} \circ \int_0 \in \HZ[1]$ by
	\begin{align*}
		\Delta \circ L_0
		= (\id \tp \tilde{L}) \circ \Delta \circ \polyint + (\tilde{L} \tp 1) \circ \polyint
		= (\id \tp L_0) \circ \Delta + L_0 \tp 1 + \tilde{L}(1) \cdot \polyint.
	\end{align*}
	Repeating the argument inductively yields $a_n\defas\partial_0 L_n(1) = \partial_0 \circ L \circ \polyint^{n+1}(1)\in\K$ and $L_{n+1} \defas (L_n - a_n\polyint) \circ \polyint \in \HZ[1]$, so for any $n\in\N_0$ we may read off from
	\begin{align*}
		L\circ\polyint^n (1) 
		= a_{-1} \polyint^{n+1} (1) + \ldots + a_{n-2} \polyint^2 (1) + L_{n-1}(1)
		= a_{-1} \polyint \left( \polyint^n 1 \right) + \sum_{j=0}^{n-1} a_j \polyint^{n-j} (1)
	\end{align*}
	that indeed $L=a_{-1}\polyint + \dH\alpha$ for the functional $\alpha\defas\partial_0\circ L \circ \polyint$ with $\alpha (\frac{x^n}{n!}) = a_j$.
\end{proof}
\begin{lemma}\label{satz:poly-coboundaries}
	Up to subtraction
	$
		P
		= \dH\counit
		= \id - \ev_0\!:
		{\K}[\x] \twoheadrightarrow \ker\counit
		= \x{\K}[\x]
	$
	of the constant part, direct computation exhibits $\dH \alpha$ for any $\alpha \in \K[\x]'$ as the differential operator
	\begin{equation}\label{eq:poly-coboundaries}
		\dH \alpha
		= P \circ \sum_{n\in\N_0} \alpha \left( \tfrac{x^n}{n!} \right) \partial^n
		\in \End({\K}[\x]).
	\end{equation}
\end{lemma}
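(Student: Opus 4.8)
The plan is a direct verification on the monomial basis $\set{\x^m}_{m\in\N_0}$ of $\K[\x]$. First I would record that the right-hand side of \eqref{eq:poly-coboundaries} is well-defined: writing $D\defas\sum_{n\in\N_0}\alpha(\tfrac{\x^n}{n!})\,\partial^n$, each $\partial^n$ lowers the degree by $n$, so only the finitely many summands with $n\leq m$ act non-trivially on $\x^m$, and $D\in\End(\K[\x])$ makes sense.

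Next I would unfold $\dH\alpha$ from its definition \eqref{eq:dH} using the coproduct $\Delta(\x^m)=\sum_{i=0}^m\binom{m}{i}\x^i\tp\x^{m-i}$ of the preceding lemma. Reindexing by $n\defas m-i$ gives
\begin{equation*}
	(\id\tp\alpha)\circ\Delta(\x^m)
	= \sum_{i=0}^m\binom{m}{i}\x^i\,\alpha(\x^{m-i})
	= \sum_{n=0}^m \binom{m}{n}\,\alpha(\x^n)\,\x^{m-n}.
\end{equation*}
The one nontrivial ingredient is the elementary identity $\binom{m}{n}\x^{m-n}=\tfrac{1}{n!}\partial^n(\x^m)$, valid for $0\leq n\leq m$ (with both sides vanishing when $n>m$), which rewrites the last sum as $\sum_{n\in\N_0}\alpha(\tfrac{\x^n}{n!})\,\partial^n(\x^m)=D(\x^m)$. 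Hence $(\id\tp\alpha)\circ\Delta=D$ on all of $\K[\x]$, and therefore $\dH\alpha=D-u\circ\alpha$.

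It then remains to see that the subtracted term $u\circ\alpha$ is precisely the constant-term part $u\circ\counit\circ D$ of $D$, because then $\dH\alpha=D-u\circ\counit\circ D=(\id-u\circ\counit)\circ D=P\circ D$, which is the claim. For this I would use that $\counit=\ev_0$ on $\K[\x]$, so that $\counit(\partial^n\x^m)=\ev_0\big(\tfrac{m!}{(m-n)!}\x^{m-n}\big)=m!\,\delta_{n,m}$, whence $\counit\circ D(\x^m)=\alpha(\tfrac{\x^m}{m!})\cdot m!=\alpha(\x^m)$; thus $\counit\circ D=\alpha$ as functionals and $u\circ\counit\circ D=u\circ\alpha$.

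I do not expect any real obstacle here: the argument is essentially bookkeeping around the binomial identity. The only place deserving a moment's care is the constant term — one must resist reading off $u\circ\alpha$ from the $n=0$ summand $\alpha(1)\cdot\id$ of $D$ alone, and instead check the full identity $\counit\circ D=\alpha$, which is exactly what makes $P$ remove the correct piece.
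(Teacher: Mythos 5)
Your verification is correct and is precisely the ``direct computation'' the paper asserts without writing out: unfold $(\id\tp\alpha)\circ\Delta$ on monomials via the binomial coproduct, identify $\binom{m}{n}\x^{m-n}=\tfrac{1}{n!}\partial^n\x^m$, and check $\counit\circ D=\alpha$ so that the subtraction $u\circ\alpha$ is exactly the projection $P$. Nothing is missing; your remark about not reading the constant term off the $n=0$ summand alone is the right point of care.
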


\begin{lemma}\label{satz:poly-characters}
	As any character $\phi \in \chars{{\K}[x]}{\K}$ of $\K[x]$ is fixed by $\lambda\defas \phi(x)$, they are the group $\chars{{\K}[x]}{\K} = \setexp{\ev_{\lambda}}{\lambda \in \K}$ of evaluations (the counit $\counit = \ev_0$ equals the neutral element)
	\begin{equation}\label{eq:poly-characters}
		\K[x] \ni p(x)
		\mapsto \ev_{\lambda} (p) \defas p(\lambda)
		\quad \text{with the product} \quad
		\ev_a \convolution \ev_b = \ev_{a+b}.
	\end{equation}
\end{lemma}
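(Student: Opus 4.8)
The plan is to exploit that, as an algebra, $\K[x]$ is freely generated by the single element $x$, so an algebra morphism out of it is forced to be an evaluation. First I would note that any $\phi\in\chars{\K[x]}{\K}$ satisfies $\phi(1)=1$ and $\phi(x^n)=\phi(x)^n$ by multiplicativity; hence setting $\lambda\defas\phi(x)$ gives $\phi(p)=p(\lambda)=\ev_\lambda(p)$ for every $p\in\K[x]$ by $\K$-linearity. Conversely each $\ev_\lambda$ is manifestly a unital algebra morphism $\K[x]\to\K$, so the character set is exactly $\setexp{\ev_\lambda}{\lambda\in\K}$; moreover $\lambda\mapsto\ev_\lambda$ is injective because $\ev_\lambda(x)=\lambda$ recovers the parameter.

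Next I would pin down the counit. Since $x^n\in\ker\counit$ for $n\geq1$ while $\counit(1)=1$, we have $\counit=\ev_0$; and as the target algebra is $\K$ the convolution unit is $e=u_{\K}\circ\counit=\counit=\ev_0$, so $\ev_0$ is the neutral element of the group of characters, as asserted.

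For the group law I would evaluate $\ev_a\convolution\ev_b$ on the linear basis $\set{x^n}_{n\in\N_0}$, inserting the binomial coproduct $\Delta(x^n)=\sum_{i=0}^n\binom{n}{i}x^i\tp x^{n-i}$ recorded in the lemma above:
\begin{equation*}
	(\ev_a\convolution\ev_b)(x^n)
	= m_{\K}\circ(\ev_a\tp\ev_b)\circ\Delta(x^n)
	= \sum_{i=0}^n \binom{n}{i} a^i b^{n-i}
	= (a+b)^n
	= \ev_{a+b}(x^n),
\end{equation*}
whence $\ev_a\convolution\ev_b=\ev_{a+b}$ by linearity. Alternatively, since the convolution of characters is again a character and a character is determined by its value on $x$, it already suffices that both sides send $x\mapsto a+b$.

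I do not expect any real obstacle: the statement just unwinds the definitions of a character, of the polynomial coproduct, and of the convolution product, the only computation being the binomial theorem. The single point meriting a careful word is the equivalence \emph{character} $\Leftrightarrow$ \emph{evaluation}, which rests on the freeness of $\K[x]$ as a commutative algebra on one generator in one direction and on a trivial verification in the other.
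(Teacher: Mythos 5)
Your proposal is correct and in substance matches the paper, which proves the lemma in one line by noting that $\ev_a\convolution\ev_b$ is again a character, so $[\ev_a\convolution\ev_b](x^n)=\bigl(\ev_a(1)\ev_b(x)+\ev_a(x)\ev_b(1)\bigr)^n=(a+b)^n$ --- precisely the shortcut you mention as your ``alternative''. Your main computation via the binomial coproduct and the binomial theorem, together with the (tacit in the paper) identification of characters with evaluations, is a slightly more explicit but equivalent argument.
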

\begin{proof}
	Note 
	$
		\left[ \ev_a \convolution \ev_b \right] \left( x^n \right)
		= { \left[ \ev_a(1) \cdot \ev_b(x) + \ev_a(x) \cdot \ev_b(1) \right] }^n
		= { ( b + a ) }^n
	$.
\end{proof}
\begin{lemma}\label{satz:poly-log}
	The isomorphism $(\K,+)\ni a \mapsto \ev_a \in \chars{\K[x]}{\K}$ of groups is generated by the functional $\partial_0 = \ev_0\circ\partial\in\infchars{\K[x]}{\K}$, meaning $\log_{\convolution} \ev_a = a \partial_0$ and $\ev_a = \exp_{\convolution} (a\partial_0)$.
\end{lemma}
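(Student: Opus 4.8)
The plan is to identify $\partial_0 = \ev_0\circ\partial$ with the linear functional extracting the coefficient of $x$, i.e.\ $\partial_0(x^n) = \delta_{n,1}$, and to verify directly that it is an infinitesimal character. Since $\partial$ is a derivation of $\K[x]$ and $\ev_0$ a character, for all $p,q\in\K[x]$ one has $\partial_0(pq) = \ev_0(p'q+pq') = \partial_0(p)\,\ev_0(q) + \ev_0(p)\,\partial_0(q)$, that is $\partial_0\circ m = \partial_0\tp\counit + \counit\tp\partial_0 = \partial_0\tp e + e\tp\partial_0$; hence $a\partial_0\in\infchars{\K[x]}{\K}$ for every $a\in\K$. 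One may also note that any infinitesimal character is forced to vanish on $\K\cdot\1$ and on $x^2\K[x] = (\ker\counit)^2$, so $\infchars{\K[x]}{\K} = \K\cdot\partial_0$ is one-dimensional and $a\mapsto a\partial_0$ is a group isomorphism $(\K,+)\to\infchars{\K[x]}{\K}$, which is why $\partial_0$ deserves to be called the generator.

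Next I would feed $a\partial_0$ into the bijection $\exp_{\convolution}\colon\infchars{\K[x]}{\K}\to\chars{\K[x]}{\K}$ of \eqref{eq:log-exp}. This already tells us that $\exp_{\convolution}(a\partial_0)$ is a character of $\K[x]$, so by Lemma \ref{satz:poly-characters} it equals $\ev_\lambda$ where $\lambda\defas\exp_{\convolution}(a\partial_0)(x)$. It remains to compute $\lambda$. Because $x$ is primitive, the iterated coproduct sends $x$ to a sum of terms each carrying exactly one factor $x$ and otherwise $\1$, and since $\partial_0(\1)=0$ every convolution power obeys $\partial_0^{\convolution n}(x)=0$ for $n\geq 2$, while $\partial_0^{\convolution 0}(x)=\counit(x)=0$ and $\partial_0^{\convolution 1}(x)=1$. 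Therefore
\[
  \lambda = \sum_{n\geq 0}\frac{a^n}{n!}\,\partial_0^{\convolution n}(x) = a,
\]
so $\exp_{\convolution}(a\partial_0)=\ev_a$; applying the inverse map $\log_{\convolution}$ of \eqref{eq:log-exp} gives $\log_{\convolution}\ev_a = a\partial_0$, which is the claim. As a consistency check, $\ev_a\convolution\ev_b=\ev_{a+b}$ of Lemma \ref{satz:poly-characters} matches $a\partial_0+b\partial_0=(a+b)\partial_0$.

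I expect no real obstacle; the only subtlety is making sure one is entitled to invoke Lemma \ref{satz:poly-characters} for $\exp_{\convolution}(a\partial_0)$, which is exactly what the assertion in \eqref{eq:log-exp} that $\exp_{\convolution}$ maps $\infchars{\K[x]}{\K}$ \emph{into} $\chars{\K[x]}{\K}$ guarantees. If one prefers to avoid \eqref{eq:log-exp}, an alternative is a direct induction on the degree: using $\Delta(x^n)=\sum_{i=0}^n\binom{n}{i}x^i\tp x^{n-i}$ one checks $\exp_{\convolution}(a\partial_0)(x^n)=a^n=\ev_a(x^n)$ for all $n$, the case $n=1$ being the computation of $\lambda$ above and the inductive step being routine bookkeeping with the cocommutative coproduct.
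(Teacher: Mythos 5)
Your argument is correct, but it takes a different route than the paper. The paper proves the identity $\partial^{\convolution k} = \partial^k$ by a short induction, deduces $\partial_0^{\convolution k} = \counit\circ\partial^k$, and then reads off $\exp_{\convolution}(a\partial_0)(x^n) = a^n = \ev_a(x^n)$ for every $n$ directly from the exponential series --- essentially the ``direct induction'' you sketch as your fallback at the end. Your main argument instead leverages structure: you first check that $\partial_0$ is an infinitesimal character (derivation composed with the character $\ev_0$), then invoke the general bijection $\exp_{\convolution}\colon \infchars{\K[x]}{\K}\to\chars{\K[x]}{\K}$ from \eqref{eq:log-exp} to know the result is a character, and finally use Lemma \ref{satz:poly-characters} to reduce the whole identification to the single computation $\exp_{\convolution}(a\partial_0)(x)=a$, which follows from primitivity of $x$ and $\partial_0(\1)=0$. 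Your route buys economy --- only the coefficient of $x$ ever needs to be computed, and the one-dimensionality observation $\infchars{\K[x]}{\K}=\K\cdot\partial_0$ is a nice bonus justifying the word ``generated'' --- at the price of relying on the unproved-in-this-paper general fact that $\exp_{\convolution}$ lands in the character group; the paper's computation is more self-contained and yields the reusable identity $\partial^{\convolution k}=\partial^k$ along the way, without needing to know a priori that $\exp_{\convolution}(a\partial_0)$ is multiplicative. Both are complete proofs.
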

\begin{proof}
	Expanding the exponential series reveals $\exp_{\convolution}(a\partial_0) (x^n) = a^n$ as a direct consequence of $\partial_0^{\convolution k} = \counit \circ \partial^{\convolution k} = \counit \circ \partial^k = \partial_0^k$, while we appreciate $\partial^{\convolution k} = \partial^k$ inductively:
\begin{equation*}
	\partial \convolution \partial^k \left(\tfrac{x^n}{n!}\right)
	= \sum_{j=0}^n \left(\partial \tfrac{x^{n-j}}{(n-j)!}\right) \left(\partial^k \tfrac{x^j}{j!} \right)
	= \sum_{j=k}^{n-1} \frac{x^{n-1-k}}{(n-j-1)! (j-k)!}
	= \frac{x^{n-k-1}}{(n-k-1)!} 
	.\qedhere
\end{equation*}
\end{proof}

\subsection{Feynman rules induced by cocycles}
\label{sec:toymodel-rules-by-cocycles}

Let $\toyphy\!: H_R\rightarrow\K[\x]$ denote the polynomials that evaluate to the renormalized Feynman rules $\toyR[s][0] = \ev_{\scalelog} \circ \toyphy$ at $\scalelog=\ln\frac{s}{\rp}$. We state
\begin{satz}\label{satz:toymodel-universal}
	The renormalized Feynman rules $\toyphy = \unimor{\toycc}$ arise out of the universal property of theorem \ref{satz:H_R-universal}, where the coefficients $\coeff{n}$ of \eqref{eq:mellin-trafo} determine the cocycle
	\begin{equation}\label{eq:cocycle-toymodel}
		\toycc \defas -\coeff{-1} \polyint + \dH \toyform \in \HZ[1](\K[\x])
		\quad\text{with}\quad
		\toyform \left( \x^n \right) \defas n!\, (-1)^n \coeff{n}
		\quad\text{for any $n\in \N_0$.}
	\end{equation}
\end{satz}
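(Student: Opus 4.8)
The plan is to invoke the universal property (Theorem \ref{satz:H_R-universal}): since $\toycc \in \HZ[1](\K[\x])$ is a cocycle by construction (a $\K$-linear combination of the cocycle $\polyint$ and the coboundary $\dH\toyform$, which lies in $\HB[1](\K[\x])\subseteq\HZ[1](\K[\x])$), there is a unique Hopf algebra morphism $\unimor{\toycc}\!: H_R \to \K[\x]$ with $\unimor{\toycc}\circ B_+ = \toycc \circ \unimor{\toycc}$. It therefore suffices to show that $\toyphy$ is a morphism of Hopf algebras satisfying the same intertwining relation $\toyphy \circ B_+ = \toycc \circ \toyphy$; uniqueness then forces $\toyphy = \unimor{\toycc}$. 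That $\toyphy$ is an algebra morphism is immediate because $\toyR[s][0]$ is a character and $\ev_{\scalelog}$ is injective for generic $\scalelog$ (equivalently, multiplicativity of $\toyR[s][0]$ in $\f$ and the identification $\toyR[s][0]=\ev_{\ln(s/\rp)}\circ\toyphy$ transports multiplicativity to $\toyphy$); that it respects the coproduct follows since both $H_R$ and $\K[\x]$ are connected graded and a graded algebra morphism between connected Hopf algebras intertwining the respective structures is automatically a coalgebra morphism once one checks it on primitives, or more directly from the fact that $\toyphy$ is built from the Birkhoff-renormalized character. The real content is the cocycle relation.

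To verify $\toyphy \circ B_+ = \toycc \circ \toyphy$, I would start from the inductive formula \eqref{eq:BPHZ} for the physical limit, rewritten at the level of polynomials. Setting $\scalelog = \ln\frac{s}{\rp}$ and substituting $\zeta\mapsto\zeta\rp$ as in the proof of Proposition \ref{satz:finiteness}, equation \eqref{eq:BPHZ} reads
\begin{equation*}
	\left(\ev_{\scalelog}\circ\toyphy\right)(B_+(\f))
	= \int_0^{\infty} \left[ \frac{f(\zeta e^{-\scalelog})}{e^{\scalelog}} - f(\zeta) \right]\ \left(\ev_{\ln\zeta}\circ\toyphy\right)(\f)\ \dd\zeta.
\end{equation*}
Now $\toyphy(\f)$ is a polynomial, say $\toyphy(\f) = \sum_{k} b_k \x^k$, so $\left(\ev_{\ln\zeta}\circ\toyphy\right)(\f) = \sum_k b_k (\ln\zeta)^k$. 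The plan is to evaluate the elementary integrals $\int_0^\infty \big[\tfrac{1}{e^{\scalelog}}f(\zeta e^{-\scalelog}) - f(\zeta)\big](\ln\zeta)^k\,\dd\zeta$ explicitly: expanding $(\ln\zeta)^k$ around $\ln\zeta = \ln(\zeta e^{-\scalelog}) + \scalelog$ in the first term and changing variables $\zeta\mapsto\zeta e^{\scalelog}$ shows this equals a polynomial in $\scalelog$ whose coefficients are the Mellin moments $\int_0^\infty f(\zeta)(\ln\zeta)^j\,\dd\zeta$. These moments are, up to sign and factorials, exactly the Taylor coefficients $\coeff{n}$ of $F(\reg) = \int_0^\infty f(\zeta)\zeta^{-\reg}\,\dd\zeta$ at $\reg=0$: differentiating under the integral sign gives $F^{(j)}(0) = (-1)^j\int_0^\infty f(\zeta)(\ln\zeta)^j\,\dd\zeta$, hence $\coeff{j} = \tfrac{(-1)^j}{j!}\int_0^\infty f(\zeta)(\ln\zeta)^j\,\dd\zeta$ for $j\geq 0$, while the $j=-1$-type divergent moment corresponds to the leading $\coeff{-1}$ accompanying $\polyint$. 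Matching the resulting polynomial identity in $\scalelog$ against the action of $\toycc = -\coeff{-1}\polyint + \dH\toyform$ on $\toyphy(\f)$ — using the explicit differential-operator form $\dH\toyform = P\circ\sum_{n} \toyform(\tfrac{\x^n}{n!})\partial^n = P\circ\sum_n (-1)^n\coeff{n}\partial^n$ from Lemma \ref{satz:poly-coboundaries} and $\polyint\!: \x^n\mapsto\tfrac{\x^{n+1}}{n+1}$ — yields the claim after comparing coefficients.

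The main obstacle I anticipate is the bookkeeping in the integral evaluation: one must carefully track how the $\coeff{-1}$-term (the genuinely divergent piece of the Mellin transform, responsible for the $\polyint$ summand raising the polynomial degree by one) separates from the convergent moments (responsible for the $\dH\toyform$ summand, which is degree-non-increasing on $\ker\counit$), and confirm that the subtraction at $s=\rp$ is precisely what produces the projection $P$ and removes the constant term. A cleaner alternative, which I would pursue in parallel, is to avoid the integrals entirely: compare $\toyphy = \unimor{\toycc}$ candidate against the closed form for $\unimor{\toycc}(\f)$ obtained by iterating $\toycc$, and independently compute $\toyR[s][0](\f)$ from \eqref{eq:toymodel-mellin} by expanding each $F(\reg\abs{\f_v}) = \tfrac{\coeff{-1}}{\reg\abs{\f_v}} + \sum_{n\geq 0}\coeff{n}(\reg\abs{\f_v})^n$ and extracting the $\reg^0$-coefficient of $s^{-\reg\abs{\f}}\prod_v F(\reg\abs{\f_v})$; both sides are then manifestly polynomials in $\ln\frac{s}{\rp}$ with coefficients built from the $\coeff{n}$ and the combinatorics of subtrees, and the identity reduces to a residue/partial-fraction computation. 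Either route closes the proof; uniqueness in Theorem \ref{satz:H_R-universal} does the rest.
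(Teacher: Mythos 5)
Your overall strategy is the paper's: start from the convergent inductive formula \eqref{eq:BPHZ}, insert the polynomial $\toyphy(\f)$ evaluated at $\ln\zeta$, evaluate the resulting elementary integrals with $\ln^{k}$-insertions, identify the outcome with $\ev_{\scalelog}\circ\toycc\circ\toyphy(\f)$, and conclude by induction and the uniqueness in theorem \ref{satz:H_R-universal}. (For that uniqueness only the unital-algebra-morphism property of $\toyphy$ is needed; your aside that a graded algebra morphism between connected Hopf algebras is automatically a coalgebra morphism is false in general, but it is also unnecessary here -- the Hopf property is the corollary, not an input.)

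The genuine gap sits exactly where you flag ``bookkeeping'', and it is not cosmetic. After changing variables $\zeta\mapsto\zeta e^{\scalelog}$ in the first term only, the cancellation of the $1/\zeta$-tails that made the subtracted integrand $\bigo{\zeta^{-1-\varepsilon}}$ is destroyed: the combined expression $\int_0^\infty f(u)\left[(\ln u+\scalelog)^k-(\ln u)^k\right]\dd u$ diverges, and every individual moment $\int_0^\infty f(\zeta)\ln^j\zeta\ \dd\zeta$ you want to read off diverges because $f(\zeta)\asymptotic\zeta^{-1}$. Accordingly your identity $\coeff{j}=\frac{(-1)^j}{j!}\int_0^\infty f(\zeta)\ln^j\zeta\ \dd\zeta$ is false as stated: $F$ has a simple pole at $\reg=0$, so only $\tilde F(\reg)=F(\reg)-\coeff{-1}/\reg$ may be differentiated there. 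The piece lost in this naive manipulation is precisely the finite remnant of the divergence, the degree-raising term $-\coeff{-1}\scalelog^{k+1}/(k+1)$, i.e.\ the whole $-\coeff{-1}\polyint$ summand of $\toycc$, which no term-by-term identification of convergent moments can produce. The paper closes exactly this step by reintroducing the regulator $\zeta^{-\reg}$ (legitimate since \eqref{eq:BPHZ} converges), writing $\ln^n(s\zeta)\,(s\zeta)^{-\reg}=(-\partial_\reg)^n (s\zeta)^{-\reg}$ and evaluating $(\id-\momsch{1})$ of the integral as ${\left(-\partial_\reg\right)}^n_{\reg=0}\left[\frac{s^{-\reg}-1}{\reg}\cdot\reg F(\reg)\right]$, a product of two functions analytic at $\reg=0$; the Leibniz expansion then yields exactly $\ev_{\ln s}\circ\toycc(\x^n)$, the pole of $F$ and the subtraction at $s=\rp$ combining into $-\coeff{-1}\polyint$ and the regular coefficients into $\dH\toyform$. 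With that repair (or an equivalent cutoff computation that keeps the two terms together until the divergences cancel) the rest of your argument goes through and coincides with the paper's proof; your fallback of extracting the $\reg^0$-coefficient directly from \eqref{eq:toymodel-mellin} would also work, but it is a different and considerably heavier computation.
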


\begin{proof}
	We may set $\rp=1$ and produce logarithms of subdivergences by differentiation, exploiting analyticity of $\reg F(\reg)$ and $\frac{s^{-\reg}-1}{\reg}$ at $\reg=0$ we obtain
	\begin{align*}
		&\lim_{\reg \rightarrow 0} (\id - \momsch{1}) \left[ s \mapsto \int_0^{\infty} f(\zeta) {\left( s\zeta \right)}^{-\reg} \ln^n \left( s\zeta \right)\;\dd\zeta \right]
		= {\left( -\frac{\partial}{\partial \reg} \right) }_{\reg=0}^n
		(\id - \momsch{1}) \int_0^{\infty} f(\zeta) {\left( s\zeta \right)}^{-\reg}\; \dd\zeta \displaybreak[0]\\
		&
		 = {\left( -\frac{\partial}{\partial \reg} \right)}_{\reg=0}^n 
		 \left[
		 		\frac{s^{-\reg}-1}{\reg}
				\cdot
				\reg F(\reg)
		 \right]
\hide{
		&= {(-1)}^n \sum_{k=0}^n \binom{n}{k} 
		 		\left\{ {\left( \frac{\partial}{\partial \reg} \right)}_{\reg=0}^k \frac{s^{-\reg}-1}{\reg} \right\} 
				\cdot 
				\left\{ {\left( \frac{\partial}{\partial \reg} \right)}_{\reg=0}^{n-k} \big[ \reg F(\reg) \big] \right\} \\
}
		= {(-1)}^n \sum_{k=0}^n \binom{n}{k} k! \frac{{\left( - \ln s \right)}^{k+1}}{(k+1)!} (n-k)!\, \coeff{n-k-1} \displaybreak[0]\\
		&= \ev_{\ln s} \left[-\coeff{-1} \frac{\x^{n+1}}{n+1} + \sum_{i=1}^n \binom{n}{i} \x^i {(-1)}^{n-i} \coeff{n-i} (n-i)! \right]
		 = \ev_{\ln s} \circ \toycc \left( \x^n \right).\tag{$\ast$}
	\end{align*}
	By linearity we can replace $\ln^n (s\zeta)$ in the integrand by any polynomial to prove \ref{satz:toymodel-universal} inductively: As $\toyphy$ and $\unimor{\toycc}$ are algebra morphisms, it suffices to consider a tree $t=B_+(\f)$ for a forest $\f\in\forests$ already fulfilling $\toyphy(\f) = \unimor{\toycc}(\f)$ in the induction step
	\settowidth{\wurelwidth}{\eqref{eq:BPHZ}}
	\begin{align*}
		\toyR[s][0] (t) &
		\wurel{\eqref{eq:BPHZ}} \lim_{\reg \rightarrow 0} (\id - \momsch{1}) \left[
				s \mapsto \int_0^{\infty} f(\zeta) {\left( s\zeta \right)}^{-\reg}\ \ev_{\ln s\zeta} \circ \toyphy(\f)\ \dd\zeta
			\right] \\
		&\wurel{$(\ast)$} \ev_{\scalelog} \circ \toycc \left[ \toyphy (\f) \right]
		= \ev_{\scalelog} \circ \toycc \circ \unimor{\toycc} (\f)
		\urel{\ref{satz:H_R-universal}} \ev_{\scalelog} \circ \unimor{\toycc} \circ B_+ (\f)
		= \ev_{\scalelog} \circ \unimor{\toycc} (t),
	\end{align*}
	where the convergence of \eqref{eq:BPHZ} allows to reintroduce $\zeta^{-\reg}$ into the integrand.
\end{proof}

\begin{korollar}\label{satz:toyphy-hopfmor}
	As $\toycc$ is a cocycle, by theorem \ref{satz:H_R-universal} the physical limit $\toyphy\!: H_R \rightarrow \K[\x]$ of the renormalized Feynman rules \eqref{eq:toymodel} is a morphism of Hopf algebras.
\end{korollar}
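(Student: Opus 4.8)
The plan is to invoke the two results already in hand and verify their hypotheses. By Theorem \ref{satz:toymodel-universal} the physical limit factorizes as $\toyphy = \unimor{\toycc}$, the algebra morphism induced through the universal property of Theorem \ref{satz:H_R-universal} by the endomorphism $\toycc \in \End(\K[\x])$. That same theorem statement records $\toycc = -\coeff{-1}\polyint + \dH\toyform \in \HZ[1](\K[\x])$: indeed $\polyint$ is a cocycle (the computation preceding the classification of $\HZ[1](\K[\x])$), $\dH\toyform \in \HB[1](\K[\x]) \subseteq \HZ[1](\K[\x])$ is a coboundary, and $\HZ[1](\K[\x])$ is a linear subspace of $\End(\K[\x])$, so any $\K$-linear combination of the two is again a one-cocycle.

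It remains to note that $\K[\x]$ is itself a Hopf algebra — this is precisely the first lemma of section \ref{sec:polynomials}, which equips the polynomials with the connected graded commutative (and cocommutative) Hopf structure determined by $\Delta(\x) = \x \tp \1 + \1 \tp \x$. Theorem \ref{satz:H_R-universal} then applies in its strongest form: for a cocycle on a Hopf algebra, the induced algebra morphism $\unimor{\toycc}$ is automatically a morphism of Hopf algebras. Applying this with $\alg = \K[\x]$ and $L = \toycc$ yields that $\toyphy = \unimor{\toycc}$ is a Hopf algebra morphism. There is no real obstacle here; the only points requiring a line of attention are the closure of $\HZ[1](\K[\x])$ under linear combinations and the fact that $\K[\x]$ is Hopf (not merely a bialgebra), both of which are immediate from the cited statements. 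If desired, one could additionally remark that cocommutativity of $\K[\x]$ forces $\toyphy$ to coequalize the two canonical maps into $H_R \tp H_R$, but this is not needed for the assertion.
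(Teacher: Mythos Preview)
Your proposal is correct and matches the paper's reasoning exactly: the corollary is stated without a separate proof, its justification being contained in the phrase ``As $\toycc$ is a cocycle, by theorem \ref{satz:H_R-universal}\ldots'', and you have simply unpacked that reference by checking that $\toycc\in\HZ[1](\K[\x])$ and that $\K[\x]$ is Hopf so that the final clause of theorem \ref{satz:H_R-universal} applies. The closing remark about cocommutativity is superfluous, as you note.
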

This key property will imply the renormalization group in the sequel. For now observe the simple and explicit combinatorial recursion \ref{satz:toymodel-universal}, expressing $\toyphy$ in terms of the Mellin transform coefficients without any need for series expansions in $\reg$, as shown in
\begin{beispiel}\label{ex:toymodel-universal}
	Using \eqref{eq:cocycle-toymodel} we rederive
$
	\toyphy \left( \tree{+-} \right)
	 = \unimor{\toycc} \circ B_+ (\1)
	 = L(1)
	 = - \coeff{-1}\, \x
$ and also
\begin{align*}
	\toyphy \left( \tree{++--} \right)
	&= \unimor{\toycc} \circ B_+ \left( \tree{+-} \right)
	 = \toycc \circ \unimor{\toycc} \left( \tree{+-} \right)
	 = \left[ -\coeff{-1}\polyint + \dH\toyform \right] \left( -\coeff{-1} \x \right)
	 = \coeff[2]{-1} \frac{\x^2}{2} - \coeff{-1} \coeff{0}\, \x, \\
\hide{
	 \toyphy \left( \tree{+++---} \right)
	&= \left[ -\coeff{-1}\polyint + \dH\toyform \right] \left( \coeff[2]{-1} \frac{\x^2}{2} - \coeff{-1} \coeff{0}\, \x \right)
	= - \coeff[3]{-1} \frac{\x^3}{6} + \coeff[2]{-1} \coeff{0} \frac{\x^2}{2}+ \coeff[2]{-1} \toyform(\x)\,\x \\
	 &\quad + \toyform(1) \left( \coeff[2]{-1} \frac{\x^2}{2} - \coeff{-1} \coeff{0}\, \x \right) 
	 = - \coeff[3]{-1} \frac{\x^3}{6} + \coeff[2]{-1} \coeff{0}\, \x^2 - \left( \coeff{-1} \coeff[2]{0} + \coeff[2]{-1} \coeff{1} \right) \x \\
}%
	\toyphy \left( \tree{++-+--} \right)
	&= \unimor{\toycc} \circ B_+ \left( \tree{+-}\tree{+-} \right)
	 = \toycc \circ \unimor{\toycc} \left( \tree{+-}\tree{+-} \right) 
	 = \left[ -\coeff{-1}\polyint + \dH\toyform \right] \left\{ {\left( -\coeff{-1}\,\x \right) }^2 \right\} \\
	&= -\coeff[3]{-1} \frac{\x^3}{3} + \coeff[2]{-1} \left[ \toyform(1)\,\x^2 + 2\toyform(\x)\,\x \right]
	 = -\coeff[3]{-1} \frac{\x^3}{3} + \coeff[2]{-1} \coeff{0}\,\x^2 - 2\coeff[2]{-1}\coeff{1}\,\x.
\end{align*}
\end{beispiel}
Defining $\tilde{F}(\reg)\defas F(\reg)-\frac{\coeff{-1}}{\reg} = \sum_{n\in\N_0} \coeff{n}\reg^n$, \eqref{eq:poly-coboundaries} uncovers $\dH\toyform = P \circ \tilde{F}(-\partial_{\x})$ and under the convention $\partial_{\x}^{-1}\defas\polyint$	we may thus write $\toycc = P\circ F(-\partial_{\x})$.

\begin{korollar}\label{satz:toymodel-leading-log}
	As in $\toyform$ only $-\coeff{-1}\polyint$ increases the degree in $\x$, the highest order (called \emph{leading $\log$}) of $\toyphy$ is the tree factorial (note the analogy to \eqref{eq:toymodel-leading-pole}): For any forest $\f \in \forests$,
	\begin{equation}
		\toyphy (\f) \in
		\unimor{\left[-\coeff{-1}\polyint\right]} (\f) + \bigo{\x^{\abs{\f}-1}}
		\urel{\ref{ex:tree-factorial}}
		\frac{ {\left( - \coeff{-1} \x \right)}^{\abs{\f}}}{\f!} + \K[\x]_{<\abs{\f}}.
	\end{equation}
\end{korollar}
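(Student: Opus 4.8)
The plan is to prove Corollary \ref{satz:toymodel-leading-log} by combining the explicit cocycle decomposition \eqref{eq:cocycle-toymodel} with the universal property of theorem \ref{satz:H_R-universal}, reducing the claim to a statement about which part of $\toycc$ controls the top-degree behaviour. Concretely, write $\toycc = -\coeff{-1}\polyint + \dH\toyform$ and note the two summands act very differently on degrees: by the integration operator $\polyint\!:\x^n\mapsto\tfrac{1}{n+1}\x^{n+1}$ raises the polynomial degree by exactly one, whereas by Lemma \ref{satz:poly-coboundaries} the coboundary $\dH\toyform = P\circ\sum_{n\geq 0}\toyform(\tfrac{\x^n}{n!})\partial^n$ is (after the harmless projection $P$) a differential operator with constant coefficients, hence \emph{lowers} degree (strictly, on monomials of positive degree) or preserves degree zero. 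So $\dH\toyform$ maps $\K[\x]_{\leq d}$ into $\K[\x]_{\leq d}$ but in fact into $\K[\x]_{<d}$ when restricted to the degree-$d$ part for $d\geq 1$, while $-\coeff{-1}\polyint$ is the unique piece that can produce degree $d+1$.

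The main step is then an induction on $\abs{\f}$ mirroring the inductive pattern already used repeatedly in the excerpt: since both $\toyphy = \unimor{\toycc}$ and $\unimor{[-\coeff{-1}\polyint]}$ are algebra morphisms, the claim on forests reduces to trees $t = B_+(\f)$, and there the universal property \eqref{eq:H_R-universal} gives $\toyphy(t) = \toycc\bigl(\toyphy(\f)\bigr)$ and $\unimor{[-\coeff{-1}\polyint]}(t) = -\coeff{-1}\polyint\bigl(\unimor{[-\coeff{-1}\polyint]}(\f)\bigr)$. By the induction hypothesis $\toyphy(\f) = \unimor{[-\coeff{-1}\polyint]}(\f) + r$ with $r\in\K[\x]_{<\abs{\f}}$, so applying $\toycc$ and splitting into the two summands: $-\coeff{-1}\polyint$ applied to the leading part gives exactly $\unimor{[-\coeff{-1}\polyint]}(t)$, $-\coeff{-1}\polyint$ applied to $r$ lands in $\K[\x]_{\leq\abs{\f}} = \K[\x]_{<\abs{t}}$, and $\dH\toyform$ applied to all of $\toyphy(\f)\in\K[\x]_{\leq\abs{\f}}$ also lands in $\K[\x]_{<\abs{t}}$ by the degree-lowering observation. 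Hence $\toyphy(t) - \unimor{[-\coeff{-1}\polyint]}(t)\in\K[\x]_{<\abs{t}}$, closing the induction. The final identification $\unimor{[-\coeff{-1}\polyint]}(\f) = \tfrac{(-\coeff{-1}\x)^{\abs{\f}}}{\f!}$ follows from Example \ref{ex:tree-factorial}: rescaling $\polyint\mapsto -\coeff{-1}\polyint$ turns $\intrules(\f) = \tfrac{\x^{\abs{\f}}}{\f!}$ into $\tfrac{(-\coeff{-1})^{\abs{\f}}\x^{\abs{\f}}}{\f!}$ because each of the $\abs{\f}$ grafting steps contributes one factor $-\coeff{-1}$ (alternatively, $\unimor{\lambda\polyint} = \gradAut_{?}$-type rescaling, or just rerun the induction of Example \ref{ex:tree-factorial} with the rescaled cocycle).

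The only place demanding a little care — and the one I would expect to be the ``main obstacle,'' though it is minor — is making the degree bookkeeping around the projection $P = \id - \ev_0$ airtight: $P$ kills the constant term, so strictly $\dH\toyform$ sends $\K[\x]_{\leq d}$ into $\x\K[\x]\cap\K[\x]_{\leq d-1}$ for $d\geq 1$ (the constant coefficient $\toyform(1)\,\id$ part of the differential operator is precisely what $P$ truncates back down, keeping us $<d$), which is exactly strong enough. One should also confirm the base case $\abs{\f}=0$, i.e. $\toyphy(\1)=1=\unimor{[-\coeff{-1}\polyint]}(\1)$ with the convention $\bigo{\x^{-1}} = 0$, and note the statement is vacuous/trivial there. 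No genuinely new idea is needed beyond what theorem \ref{satz:toymodel-universal} and Lemma \ref{satz:poly-coboundaries} already supply; the corollary is really just reading off the top-degree part of the recursion in Example \ref{ex:toymodel-universal}.
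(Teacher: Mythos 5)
Your argument is correct and is essentially the paper's own: the corollary is treated there as immediate from the decomposition $\toycc=-\coeff{-1}\polyint+\dH\toyform$ together with the rescaled example \ref{ex:tree-factorial} identifying $\unimor{[-\coeff{-1}\polyint]}(\f)=\frac{(-\coeff{-1}\x)^{\abs{\f}}}{\f!}$, exactly the induction you spell out. One correction, though: your claim that $\dH\toyform$ \emph{strictly} lowers degree and maps $\K[\x]_{\leq d}$ into $\K[\x]_{\leq d-1}$ is false, since the $n=0$ term of $\dH\toyform=P\circ\tilde{F}(-\partial_{\x})$ is $\coeff{0}\,\id$ and the projection $P$ only removes the constant term, not the top-degree one; thus $\dH\toyform$ generically \emph{preserves} the top degree, which is precisely where subleading terms such as $-\coeff{-1}\coeff{0}\,\x$ in example \ref{ex:toymodel-universal} come from. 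Your induction, however, only uses that $\dH\toyform$ does not \emph{raise} degree (so that it lands in $\K[\x]_{<\abs{t}}=\K[\x]_{\leq\abs{\f}}$), which is true, so the proof goes through as written once that overstatement is dropped.
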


\subsection{Feynman rules as Hopf algebra morphisms}

As $\toyphy\!: H_R\rightarrow {\K}[\x]$ is a morphism of Hopf algebras, the induced map $\chars{{\K}[\x]}{\K}\rightarrow \chars{H_R}{\K}$ given by $\ev_a \mapsto \toyphy[a]\defas \ev_a \circ \toyphy$ becomes a morphism of groups. In particular note
\begin{korollar}\label{satz:rge}
	Using \eqref{eq:poly-characters} we obtain the \emph{renormalization group equation} (as in \cite{Kreimer:ChenII})
	\begin{equation}
		\toyphy[a]	\convolution	\toyphy[b]
		= \toyphy[a+b],
	\quad\text{for any}\quad
	a,b \in \K.
\end{equation}
\end{korollar}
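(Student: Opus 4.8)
The plan is to deduce the renormalization group equation formally from Corollary \ref{satz:toyphy-hopfmor} together with Lemma \ref{satz:poly-characters}. The only real ingredient is the elementary observation that precomposition with a morphism of coalgebras intertwines the two convolution products: if $f\colon H\to H'$ is a morphism of coalgebras, then for any target algebra $\alg$ and $\psi_1,\psi_2\in\Hom(H',\alg)$ one computes, using $\Delta_{H'}\circ f = (f\tp f)\circ\Delta_H$,
\begin{equation*}
	(\psi_1\circ f)\convolution(\psi_2\circ f)
	= m_{\alg}\circ(\psi_1\tp\psi_2)\circ(f\tp f)\circ\Delta_H
	= m_{\alg}\circ(\psi_1\tp\psi_2)\circ\Delta_{H'}\circ f
	= (\psi_1\convolution\psi_2)\circ f.
\end{equation*}
Moreover $f$ being an algebra morphism turns $\psi\circ f$ into a character whenever $\psi$ is one, and $\counit_{H'}\circ f=\counit_H$ shows the unit $e$ is preserved; hence $\psi\mapsto\psi\circ f$ restricts to a group homomorphism $\chars{H'}{\alg}\to\chars{H}{\alg}$.

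Now apply this with $H=H_R$, $H'=\K[\x]$ and $f=\toyphy$, which is a morphism of Hopf algebras by Corollary \ref{satz:toyphy-hopfmor}, for the target algebra $\alg=\K$. Choosing $\psi_1=\ev_a$ and $\psi_2=\ev_b$, and recalling $\toyphy[a]=\ev_a\circ\toyphy$, we obtain
\begin{equation*}
	\toyphy[a]\convolution\toyphy[b]
	= (\ev_a\circ\toyphy)\convolution(\ev_b\circ\toyphy)
	= (\ev_a\convolution\ev_b)\circ\toyphy
	= \ev_{a+b}\circ\toyphy
	= \toyphy[a+b],
\end{equation*}
where the middle equality is the intertwining identity above and the next one is the group law $\ev_a\convolution\ev_b=\ev_{a+b}$ of Lemma \ref{satz:poly-characters}.

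There is essentially no obstacle remaining: all the substance sits in Theorem \ref{satz:toymodel-universal} and Corollary \ref{satz:toyphy-hopfmor}, which exhibit $\toyphy$ as induced by a genuine Hochschild cocycle and therefore as a coalgebra morphism, so the renormalization group property is a purely formal consequence of that structure (indeed it is exactly the content of the ``morphism of groups'' remark preceding the corollary). As a sanity check one could instead verify the identity directly by induction over rooted trees using \eqref{eq:BPHZ} and the cocycle property \eqref{eq:B_+-cocycle}, but that route would merely re-establish the coproduct compatibility of $\toyphy$ by hand.
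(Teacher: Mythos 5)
Your argument is correct and is exactly the paper's route: it spells out the ``morphism of groups'' remark preceding the corollary, namely that precomposition with the Hopf algebra morphism $\toyphy$ (Corollary \ref{satz:toyphy-hopfmor}) intertwines convolutions, and then invokes $\ev_a\convolution\ev_b=\ev_{a+b}$ from Lemma \ref{satz:poly-characters}. Nothing is missing.
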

Before we obtain the generator of this one-parameter group in corollary \ref{korollar:toylog}, note how this result gives non-trivial relations between individual trees (graphs) like
\begin{align*}
	\toyphy[a] \convolution \toyphy[b] \left( \tree{++--} \right)
	&= \toyphy[a] \left( \tree{++--} \right)
		+ \toyphy[a] \left( \tree{+-} \right) \toyphy[b] \left( \tree{+-} \right)
		+ \toyphy[b] \left( \tree{++--} \right) \\
	&\urel{\eqref{toyR-physical:(())}}
		\coeff[2]{-1} \frac{a^2+b^2}{2}
		- \coeff{-1}\coeff{0} \left(a + b \right)
		+ \coeff[2]{-1}ab
	\urel{\eqref{toyR-physical:(())}}
		\toyphy[a+b] \left( \tree{++--} \right).
\end{align*}
\begin{proposition}
	Let $H$ be any connected bialgebra and $\phi\!: H \rightarrow \K[\x]$ a morphism of bialgebras.\footnote{This already implies $\phi$ to be a morphism of Hopf algebras.} Then $\log_\convolution \phi$ is given by the linear term in $\x$ through
	\begin{equation}
		\log_{\convolution} \phi
		= \x \cdot \partial_0 \circ \phi.
	\end{equation}
\end{proposition}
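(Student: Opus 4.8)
The plan is to exploit the fact established in Lemma \ref{satz:poly-log}: every character of $\K[\x]$ is of the form $\ev_\lambda = \exp_\convolution(\lambda\partial_0)$ with $\partial_0 = \ev_0\circ\partial$, so the logarithm of a character on $\K[\x]$ is always proportional to $\partial_0$. I would transport this along $\phi$. First I would observe that $\log_\convolution$ is natural with respect to morphisms of bialgebras in the following sense: if $\phi\colon H\to\K[\x]$ is a bialgebra morphism, then precomposition with $\phi$ sends $\chars{\K[\x]}{\K}$ into $\chars{H}{\K}$ as a group homomorphism (it intertwines $\convolution$ on both sides because $\Delta_H$ is carried to $\Delta_{\K[\x]}$), and it likewise carries $\infchars{\K[\x]}{\K}$ into $\infchars{H}{\K}$ linearly; since the series \eqref{eq:log-exp} defining $\log_\convolution$ and $\exp_\convolution$ are built purely from $\convolution$ and the unit $e$, precomposition with $\phi$ commutes with $\log_\convolution$ and $\exp_\convolution$.

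The cleanest way to package this: note $\phi\in\chars{H}{\K[\x]}$ itself, and I want $\log_\convolution\phi\in\infchars{H}{\K[\x]}$, not a scalar-valued statement — so the above naturality for scalar characters is a guiding analogy rather than the literal argument. Instead I would argue directly. Write $\phi = \exp_\convolution(\psi)$ with $\psi\defas\log_\convolution\phi\in\infchars{H}{\K[\x]}$. An infinitesimal character vanishes on $\1$ and on products $m_H(\ker\counit\tp\ker\counit)$ (it is a derivation into $\K[\x]$ with the trivial bimodule structure via $e$), so $\psi$ is determined by its restriction to $\Prim$-generators in each degree — but more usefully, for the explicit formula I would compare the degree-one-in-$\x$ parts. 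Applying $\partial$ then $\ev_0$ to $\phi=\exp_\convolution(\psi)=e + \psi + \tfrac12\psi^{\convolution 2}+\cdots$ and using that $\partial_0\circ(-)$ is again built from the coproduct: one checks $\partial_0\circ(\alpha\convolution\beta) = (\partial_0\circ\alpha)\convolution(\ev_0\circ\beta) + (\ev_0\circ\alpha)\convolution(\partial_0\circ\beta)$ by the Leibniz rule for $\partial$ on $\K[\x]$ together with coassociativity, and $\ev_0\circ\phi = \counit_H$ since $\phi$ is a bialgebra morphism. Hence $\partial_0\circ\phi^{\convolution n} = n\,(\partial_0\circ\phi)\convolution\counit^{\convolution(n-1)} = n\,(\partial_0\circ\phi)$, giving $\partial_0\circ\phi = \partial_0\circ\psi$ after summing the exponential series (all higher terms $\tfrac1{n!}\partial_0\circ\psi^{\convolution n}$ vanish for $n\geq 2$ because $\ev_0\circ\psi = 0$).

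It then remains to recover $\psi$ from $\partial_0\circ\psi$. Since $\psi$ lands in $\K[\x]$ and is an infinitesimal character of a connected bialgebra, its image is forced into $\Prim(\K[\x]) = \K\cdot\x$: indeed $\cored_{\K[\x]}\circ\psi = (\psi\tp e + e\tp\psi)\circ\cdots$ — more precisely, a standard computation shows $\Delta_{\K[\x]}\circ\psi = \psi\tp\1 + \1\tp\psi$ on $\ker\counit_H$ using that $\psi$ is infinitesimal and $\Delta_{\K[\x]}(\x^k)$ is known, which forces $\psi(h)\in\K\x$ for every $h$. Writing $\psi(h) = \x\cdot\lambda(h)$ for a functional $\lambda\in H'$, we get $\partial_0\circ\psi = \lambda$, so $\psi = \x\cdot(\partial_0\circ\psi) = \x\cdot(\partial_0\circ\phi)$, which is the claim. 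The main obstacle is the bookkeeping of the Leibniz-type identity $\partial_0\circ(\alpha\convolution\beta) = (\partial_0\circ\alpha)\convolution\counit + \counit\convolution(\partial_0\circ\beta)$ on $H$ and confirming that $\ev_0\circ\phi=\counit_H$; both are routine consequences of $\phi$ being a bialgebra morphism and of $\partial,\ev_0$ on $\K[\x]$ satisfying $\Delta_{\K[\x]}\circ\partial = (\partial\tp\id + \id\tp\partial)\circ\Delta_{\K[\x]}$ and $\ev_0 = \counit_{\K[\x]}$, but they must be stated carefully to make the telescoping of the exponential series rigorous. The footnote claim that a bialgebra morphism into $\K[\x]$ is automatically a Hopf algebra morphism follows since $H$ connected makes the antipode unique and any bialgebra morphism intertwines the respective $\id^{\convolution-1}$.
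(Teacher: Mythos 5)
Your first step is fine: the Leibniz-type identity $\partial_0\circ(\alpha\convolution\beta)=(\partial_0\circ\alpha)\convolution(\ev_0\circ\beta)+(\ev_0\circ\alpha)\convolution(\partial_0\circ\beta)$ together with $\ev_0\circ\phi=\counit_H$ (hence $\ev_0\circ\log_{\convolution}\phi=0$) does give $\partial_0\circ\phi=\partial_0\circ\log_{\convolution}\phi$ (your displayed line about $\partial_0\circ\phi^{\convolution n}$ is a slip, but the parenthetical about $\psi^{\convolution n}$ is the argument that matters). The genuine gap is the step that carries the whole proposition: the claim that $\psi\defas\log_{\convolution}\phi$ takes values in $\Prim(\K[\x])=\K\cdot\x$ ``because $\psi$ is an infinitesimal character of a connected bialgebra''. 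Infinitesimality only constrains how $\psi$ behaves on the \emph{product} of $H$ and says nothing about where its values sit: for $H=\K[y]$ the map with $\psi(y^k)=\delta_{k,1}\x^2$ lies in $\infchars{\K[y]}{\K[\x]}$, and it even equals $\log_{\convolution}$ of the character $y\mapsto \x^2$ --- which is an algebra but not a coalgebra morphism --- yet its image is not contained in $\K\x$. So the primitivity of the values must come from the coalgebra-morphism property of $\phi$ (pushed through the series \eqref{eq:log-exp}, using the explicit coproduct of $\K[\x]$); your sketch ``using that $\psi$ is infinitesimal and $\Delta_{\K[\x]}(\x^k)$ is known'' never invokes that property, so the decisive step is asserted rather than proved.

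The naturality you state in your opening paragraph and then set aside as ``a guiding analogy'' is in fact the paper's entire proof and closes this gap in one line: precomposition with the coalgebra morphism $\phi$ and postcomposition with the algebra morphism $\ev_a$ both commute with the series \eqref{eq:log-exp}, whence $\ev_a\circ\log_{\convolution}\phi=\log_{\convolution}(\ev_a\circ\phi)=(\log_{\convolution}\ev_a)\circ\phi=a\,\partial_0\circ\phi$ by lemma \ref{satz:poly-log}. Since $\K$ has characteristic zero, the polynomial $(\log_{\convolution}\phi)(h)$, taking the value $a\,(\partial_0\circ\phi)(h)$ at every $a\in\K$, must equal $\x\,(\partial_0\circ\phi)(h)$. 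This single argument identifies the linear coefficient and kills all higher powers of $\x$ simultaneously, making your separate comparison of degree-one terms unnecessary; if you prefer your direct route, you must still supply the missing coderivation computation $\Delta_{\K[\x]}\circ\psi=\psi\tp\1+\1\tp\psi$ from $\Delta_{\K[\x]}\circ\phi=(\phi\tp\phi)\circ\Delta_H$ rather than from infinitesimality.
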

\begin{proof}
	Letting $\phi\!: C\rightarrow H$ and $\psi\!: H\rightarrow \alg$ denote morphisms of coalgebras and algebras, exploiting
$
	\left( \psi \circ \phi - \unit_{\alg} \circ \counit_C \right)^{\convolution n}
		= \psi \circ \left( \phi - \unit_H\circ\counit_H \right)^{\convolution n}
		= \left( \psi - \unit_{\alg}\circ\counit_H \right)^{\convolution n} \circ \phi 
$
in \eqref{eq:log-exp} proves $(\log_{\convolution} \psi) \circ \phi = \log_{\convolution} (\psi \circ \phi) = \psi \circ \log_{\convolution} \phi$. Now set $\psi=\ev_a$ and use lemma \ref{satz:poly-log}.
\end{proof}
\begin{beispiel}\label{ex:intrules-exp}
	In the leading-$\log$ case \eqref{eq:int-rules} we read off $\partial_0 \circ \intrules = Z_{\tree{+-}} \in \infchars{H_R}{\K}$ where $Z_{\tree{+-}} (\f) \defas \delta_{\f,\tree{+-}}$. Comparing $\intrules = \exp_{\convolution}(xZ_{\tree{+-}})$ with \eqref{eq:int-rules} shows $\abs{\f}! = \f! \cdot Z_{\tree{+-}}^{\convolution \abs{\f}}(\f)$, hence\footnote{This combinatorial relation among tree factorials, noted in \cite{Kreimer:ChenII}, thus drops out of $\Delta\intrules = (\intrules \tp \intrules) \circ \Delta$.}
	\begin{equation*}
		\frac{\abs{\f}}{\f!}
		=	\frac{1}{\left( \abs{\f} -1 \right)!} \sum_{\f} Z_{\tree{+-}}(\f_1)Z_{\tree{+-}}^{\convolution \abs{\f}-1}(\f_2)
		= \sum_{\f:\ \f_1=\tree{+-}}
			\frac{1}{\abs{\f_2}!}Z_{\tree{+-}}^{\convolution \abs{\f_2}} (\f_2)
		= \sum_{\f:\ \f_1=\tree{+-}}
			\frac{1}{\f_2!}.
	\end{equation*}
\end{beispiel}
\begin{korollar}\label{korollar:toylog}
	The character $\toyphy$ is fully determined by the \emph{anomalous dimension}
	\begin{equation}\label{eq:toylog}
		H_R' \supset
		\infchars{H_R}{\K} \ni \toylog
		\defas -\partial_0\: \circ\: \toyphy
		\quad\text{such that}\quad
		\toyphy
			= \exp_{\convolution}(-\x\cdot\toylog)
			= \sum_{n\in\N_0} \frac{\toylog^{\convolution n}}{n!} (-\x)^n.
	\end{equation}
\end{korollar}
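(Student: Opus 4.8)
The plan is to feed Corollary~\ref{satz:toyphy-hopfmor} into the Proposition that evaluates $\log_{\convolution}\phi$ for a bialgebra morphism $\phi\colon H\to\K[\x]$, and then exponentiate. First I would note that by Corollary~\ref{satz:toyphy-hopfmor} the map $\toyphy\colon H_R\to\K[\x]$ is a morphism of Hopf algebras, hence in particular of bialgebras; since $\K[\x]$ is commutative it is moreover a character, $\toyphy\in\chars{H_R}{\K[\x]}$, so $\log_{\convolution}\toyphy\in\infchars{H_R}{\K[\x]}$ is well defined by the pointwise-finite series \eqref{eq:log-exp}. Applying the Proposition stating $\log_{\convolution}\phi=\x\cdot\partial_0\circ\phi$ to $\phi=\toyphy$ then yields $\log_{\convolution}\toyphy=\x\cdot\partial_0\circ\toyphy=-\x\cdot\toylog$.

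Next I would check that $\toylog=-\partial_0\circ\toyphy$ genuinely lies in $\infchars{H_R}{\K}$, not merely in $H_R'$. This is the one-line computation already underlying the Proposition's proof: since $\partial_0\in\infchars{\K[\x]}{\K}$ by Lemma~\ref{satz:poly-log} and $\toyphy$ is simultaneously an algebra morphism and a coalgebra morphism, in particular $\counit_{\K[\x]}\circ\toyphy=\counit_{H_R}$, one obtains
\[
	\toylog\circ m_{H_R}
	= -\partial_0\circ m_{\K[\x]}\circ(\toyphy\tp\toyphy)
	= \toylog\tp\counit_{H_R}+\counit_{H_R}\tp\toylog ,
\]
so that $\toylog\in\infchars{H_R}{\K}$ as asserted; this is also consistent with the previous step, since then $-\x\cdot\toylog\in\infchars{H_R}{\K[\x]}$.

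It remains to exponentiate. As $\exp_{\convolution}$ and $\log_{\convolution}$ are mutually inverse, $\toyphy=\exp_{\convolution}(\log_{\convolution}\toyphy)=\exp_{\convolution}(-\x\cdot\toylog)=\sum_{n\in\N_0}\tfrac{1}{n!}(-\x\cdot\toylog)^{\convolution n}$. The only thing to observe is that the central scalar $\x$ passes through the convolution product: for $\K[\x]$-valued maps $\x\alpha$ and $\x\beta$ one has $(\x\alpha)\convolution(\x\beta)=m_{\K[\x]}\circ(\x\alpha\tp\x\beta)\circ\Delta=\x^2\cdot(\alpha\convolution\beta)$, whence $(-\x\cdot\toylog)^{\convolution n}=(-\x)^n\toylog^{\convolution n}$ by induction, which is exactly the displayed identity in \eqref{eq:toylog}; that $\toylog$ "fully determines" $\toyphy$ is then immediate, the formula recovering $\toyphy$ from $\toylog$ alone. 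I do not anticipate a genuine obstacle, as the substance was front-loaded into the Proposition and into the $\exp_{\convolution}/\log_{\convolution}$ formalism; the only points needing care are the bookkeeping of the target algebra ($\K$ versus $\K[\x]$) when declaring $\toylog$ a $\K$-valued infinitesimal character, and the triviality that a central scalar commutes past $\convolution$.
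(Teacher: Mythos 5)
Your proposal is correct and follows exactly the route the paper intends: the corollary is an immediate consequence of Corollary \ref{satz:toyphy-hopfmor} combined with the preceding proposition $\log_{\convolution}\phi=\x\cdot\partial_0\circ\phi$, plus inverting with $\exp_{\convolution}$. Your added checks (that $\toylog\in\infchars{H_R}{\K}$ and that the central $\x$ passes through convolution powers) are correct bookkeeping the paper leaves implicit, not a different argument.
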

An analogous phenomenon happens with the counterterms in the minimal subtraction scheme: The first order poles $\propto \reg^{-1}$ alone already determine the full counterterm via the \emph{scattering formula} proved in \cite{CK:RH2}. However, \eqref{eq:toylog} is much simpler as illustrated in
\begin{beispiel}
	Reading off
$
	\toylog \left( \tree{+-} \right)
	= \coeff{-1}
$,
$
	\toylog \left( \tree{++--} \right)
	= \coeff{-1}\coeff{0}
$
and
$
	\toylog \left( \tree{++-+--} \right)
	= 2\coeff[2]{-1}\coeff{1}
$
from the example \ref{ex:toymodel-universal} above, corollary \ref{korollar:toylog} determines the higher powers of $\x$ through
\begin{align*}
	\toyphy \left( \tree{++--} \right)
	&
	 \urel{\eqref{eq:log-exp}} \left[ e - \x\toylog + \x^2\frac{\toylog\convolution\toylog}{2} \right] \left( \tree{++--} \right)
	 = 0 - \x\toylog\left( \tree{++--} \right) + \x^2 \frac{\toylog^2\left( \tree{+-} \right)}{2} 
	 = - \coeff{-1}\coeff{0}\,\x + \coeff[2]{-1} \frac{\x^2}{2}, \\
	\toyphy \left( \tree{++-+--} \right)
	&= 0 - \x \toylog \left( \tree{++-+--} \right) 
		+ \x^2\frac{\toylog \tp \toylog}{2} \left( 2 \tree{+-} \tp \tree{++--} + \tree{+-}\tree{+-} \tp \tree{+-} \right)
		- \x^3\frac{\toylog \tp \toylog \tp \toylog}{6} \left( 2\tree{+-} \tp \tree{+-} \tp \tree{+-} \right) \\
	&= -\toylog^3\left(\tree{+-}\right) \frac{\x^3}{3} + \x^2 \toylog\left(\tree{+-}\right) \toylog\left(\tree{++--}\right)  - 2\coeff[2]{-1}\coeff{1}\,\x
	 = -\coeff[3]{-1} \frac{\x^3}{3} + \coeff[2]{-1} \coeff{0}\,\x^2 - 2\coeff[2]{-1}\coeff{1}\,\x.
\end{align*}
\end{beispiel}
Note how the fragment $\tree{+-}\tree{+-} \tp \tree{+-}$ of $\Delta \left( \tree{++-+--} \right)$ does not contribute to the quadratic terms $\frac{\x^2}{2} \toylog \convolution \toylog$, as $\toylog$ vanishes on products. We will exploit this in \eqref{eq:perturbation-convolution-inf} of section \ref{sec:propagator-coupling} and close with a method of calculating $\toylog$ emerging from
\begin{lemma}
	From
	$
		\toylog\circ B_+ 
		= -\partial_0 \circ L \circ \toyphy 
		= \ev_0 \circ [\reg F(\reg)]_{-\partial_{\x}} \circ\exp_{\convolution}(-\x\toylog)
	$
	we obtain the inductive formula
	$
		\toylog \circ B_+
		= \sum_{n\in\N_0} \coeff{n-1} \toylog^{\convolution n}.
	$
\end{lemma}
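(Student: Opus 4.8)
The plan is to reduce the claim to the universal property of $H_R$ together with the fact, from corollary \ref{satz:toyphy-hopfmor}, that $\toyphy$ is a coalgebra morphism. First I would apply theorem \ref{satz:H_R-universal} to $\toyphy = \unimor{\toycc}$, which gives $\toyphy\circ B_+ = \toycc\circ\toyphy$, hence $\toylog\circ B_+ = -\partial_0\circ\toyphy\circ B_+ = -\partial_0\circ\toycc\circ\toyphy$. Using the operator form $\toycc = P\circ F(-\partial_{\x})$ recorded below example \ref{ex:toymodel-universal} (with the convention $\partial_{\x}^{-1}\defas\polyint$), and observing that $\partial_0=\ev_0\circ\partial_{\x}$ annihilates the constants $\K\cdot\1$ so that $\partial_0\circ P = \partial_0$, this simplifies to $\toylog\circ B_+ = -\ev_0\circ\partial_{\x}\circ F(-\partial_{\x})\circ\toyphy$.

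The second step is the operator identity $-\partial_{\x}\circ F(-\partial_{\x}) = [\reg F(\reg)]_{-\partial_{\x}} = \sum_{n\in\N_0}\coeff{n-1}(-\partial_{\x})^n$ on $\K[\x]$: the only negative power in $F(\reg)=\sum_{n\geq -1}\coeff{n}\reg^n$ is $\coeff{-1}\reg^{-1}$, and since $\partial_{\x}\circ\polyint = \id$ the formal multiplication of the Laurent series $F(\reg)$ by $-\reg$ is implemented honestly (one only has to keep $\partial_{\x}$ on the \emph{left} of $F(-\partial_{\x})$, because $\polyint\circ\partial_{\x}\neq\id$). Together with $\toyphy = \exp_{\convolution}(-\x\toylog)$ from corollary \ref{korollar:toylog}, this reproduces the chain of equalities in the statement and leaves us to evaluate, for each $n\in\N_0$, the functional $\ev_0\circ\partial_{\x}^n\circ\toyphy$ on $H_R$, after which $\toylog\circ B_+ = \sum_{n\in\N_0}\coeff{n-1}(-1)^n\,\ev_0\circ\partial_{\x}^n\circ\toyphy$.

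Finally I would show $\ev_0\circ\partial_{\x}^n\circ\toyphy = (-1)^n\toylog^{\convolution n}$. Here $\ev_0 = \counit_{\K[\x]}$ and $\partial_{\x}^{\convolution n} = \partial_{\x}^n$ (established inside the proof of lemma \ref{satz:poly-log}), so $\ev_0\circ\partial_{\x}^n = \partial_0^{\convolution n}$ with $\partial_0\in\infchars{\K[\x]}{\K}$; and since $\toyphy$ is a coalgebra morphism, convolution powers pass through it, $\partial_0^{\convolution n}\circ\toyphy = (\partial_0\circ\toyphy)^{\convolution n} = (-\toylog)^{\convolution n}$, using $\partial_0\circ\toyphy = -\toylog$ from \eqref{eq:toylog}. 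Substituting back, the signs $(-1)^n$ cancel and $\toylog\circ B_+ = \sum_{n\in\N_0}\coeff{n-1}\toylog^{\convolution n}$. The whole argument is short; the only points that need care are the bookkeeping around the one-sided inverse $\partial_{\x}^{-1}=\polyint$ and the justification that $\partial_0^{\convolution n}$ distributes through the coalgebra morphism $\toyphy$, both of which are routine consequences of results already proved above.
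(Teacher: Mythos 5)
Your argument is correct and follows exactly the route the paper intends (the lemma's statement is its own proof sketch): universal property $\toyphy\circ B_+=\toycc\circ\toyphy$, the operator form $\toycc=P\circ F(-\partial_\x)$ with $\partial_\x^{-1}=\polyint$, and the identification $\ev_0\circ\partial_\x^{\,n}=\partial_0^{\convolution n}$ pushed through the Hopf algebra morphism $\toyphy=\exp_{\convolution}(-\x\toylog)$ to give $(-1)^n\toylog^{\convolution n}$. The two points you flag as needing care (the one-sided inverse $\partial_\x\circ\polyint=\id$ and the compatibility of convolution powers with the coalgebra morphism) are handled correctly, so nothing is missing.
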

\begin{beispiel}
	We can recursively calculate 
	$
		\toylog\left(\tree{+-} \right)
		= \coeff{-1} \counit(\1) = \coeff{-1}
	$,
	similarly also
	\begin{align*}
		\toylog\left( \tree{++--} \right)
		&= \coeff{-1}\counit\left( \tree{+-} \right) + \coeff{0} \toylog\left( \tree{+-} \right)
		= \coeff{-1}\coeff{0}, \\
		\toylog\left( \tree{+++---} \right)
		&= \coeff{-1}\counit\left( \tree{++--} \right) + \coeff{0}\toylog\left( \tree{++--} \right) + \coeff{1}\toylog\convolution\toylog\left( \tree{++--} \right)
		= \coeff{-1}\coeff[2]{0} + \coeff{1} \left[ \toylog\left( \tree{+-} \right) \right]^2
		= \coeff{-1}\coeff[2]{0} + \coeff[2]{-1}\coeff{1}, \\
		\toylog\left( \tree{++-+--} \right)
		&= \coeff{-1}\counit\left( \tree{+-} \tree{+-} \right)
			+\coeff{0}\toylog\left( \tree{+-} \tree{+-} \right)
			+\coeff{1}\toylog\convolution\toylog\left( \tree{+-} \tree{+-} \right)
		= 2\coeff{1} \left[ \toylog\left( \tree{+-} \right) \right]^2
		= 2\coeff[2]{-1}\coeff{1}
		\quad\text{and so on.}
	\end{align*}
\end{beispiel}

\section{Dyson-Schwinger equations and correlation functions}
\label{sec:DSE}

We now study the implications for the \emph{correlation functions} \eqref{eq:correlation} as formal power series in the \emph{coupling constant} $\coupling$. For simplicity we restrict to a single equation and refer to \cite{Yeats} for systems. With detailed treatments in \cite{BergbauerKreimer,Foissy:DSE}, for our purposes suffices
\begin{definition}
	To a parameter $\powdep\in\K$ and a family of cocycles $B_{\cdot}\!: \N\rightarrow \HZ[1](H_R)$ we associate the \emph{combinatorial Dyson-Schwinger equation}\footnote{
	As $x_0=\1$, for arbitrary $p$ the series $\left[ X(\coupling) \right]^{p} \defas \sum_{n\in\N_0} \binom{p}{n} \left[ X(\coupling)-\1 \right]^n \in H_R[[\coupling]]$ is well defined.}
	\begin{equation}
		X(\coupling)
		= \1 + \sum_{n\in\N} \coupling^n B_n \left( X^{1+n\powdep}(\coupling) \right).
		\label{eq:dse}
	\end{equation}
\end{definition}
\begin{lemma}\label{satz:dse-coprod}
	As \emph{perturbation series}
	$
		X(\coupling) 
		= \sum_{n\in\N_0} 
				x_n
				\coupling^n
		\in H_R[[\coupling]]
	$,
	equation \eqref{eq:dse} has a unique solution. It begins with $x_0=\1$ while $x_{n+1}$ is determined recursively from $x_0, \ldots, x_n$. These coefficients generate a Hopf sub algebra, explicitly we find\footnote{
		A proof of \eqref{eq:dse-coprod} may be found in \cite{Foissy:DSE} and \cite{Foissy:Systems,Foissy:GeneralSystems} study systems of Dyson-Schwinger equations.
	}
	\begin{equation}\label{eq:dse-coprod}
		\Delta X(\coupling)
		= \sum_{n\in\N_0} \left[ X(\coupling) \right]^{1+n\powdep} \tp \coupling^n x_n
		\in (H_R\tp H_R) [[\coupling]].
	\end{equation}
\end{lemma}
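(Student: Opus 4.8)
The plan is to establish all three assertions—existence and uniqueness of the perturbation series, the recursive determination of the $x_n$, and the coproduct formula \eqref{eq:dse-coprod}—by a single induction on the coupling order $n$, reading \eqref{eq:dse} order by order in $\coupling$. First I would extract the recursion: comparing coefficients of $\coupling^{n+1}$ in \eqref{eq:dse} and using that $B_m$ is homogeneous of degree $\geq 1$ (so $B_m\bigl([X(\coupling)]^{1+m\powdep}\bigr)$ contributes to order $\coupling^{n+1}$ only through the coefficients $x_0,\dots,x_{n}$, because each $B_m$ raises the grading and the power series $[X(\coupling)]^{1+m\powdep}$ has its $\coupling^{n+1-m}$-coefficient built from $x_0,\dots,x_{n+1-m}$ with $m\geq 1$), one gets $x_{n+1} = \sum_{m=1}^{n+1} B_m\bigl(\{[X(\coupling)]^{1+m\powdep}\}_{\coupling^{n+1-m}}\bigr)$, an explicit polynomial expression in $x_0,\dots,x_n$. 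This immediately gives existence and uniqueness with $x_0=\1$, since the footnote's binomial series makes $[X(\coupling)]^{p}$ well-defined in $H_R[[\coupling]]$.

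Next I would prove \eqref{eq:dse-coprod} by induction on $n$, but it is cleaner to phrase the induction as: \emph{for each $N$, the identity \eqref{eq:dse-coprod} holds modulo $\coupling^{N+1}$}. Assume it holds mod $\coupling^{N+1}$; I want it mod $\coupling^{N+2}$. Apply $\Delta$ to both sides of \eqref{eq:dse}. The left side is $\Delta X(\coupling)$. On the right, $\Delta$ is an algebra morphism, so $\Delta$ commutes with taking powers, giving $\Delta\bigl([X(\coupling)]^{1+n\powdep}\bigr) = [\Delta X(\coupling)]^{1+n\powdep}$; then I use the cocycle property of each $B_m$, namely $\Delta\circ B_m = B_m\tp\1 + (\id\tp B_m)\circ\Delta$ from the definition of $\HZ[1](H_R)$, to compute
\begin{equation*}
	\Delta B_m\bigl([X]^{1+m\powdep}\bigr)
	= B_m\bigl([X]^{1+m\powdep}\bigr)\tp\1
		+ (\id\tp B_m)\bigl( [\Delta X]^{1+m\powdep} \bigr).
\end{equation*}
Feeding in the induction hypothesis $\Delta X \equiv \sum_{k} [X]^{1+k\powdep}\tp \coupling^k x_k$ (valid mod $\coupling^{N+1}$, which suffices because the outer $B_m$ with $m\geq 1$ shifts everything up by at least one coupling order), the right-hand tensor factor reorganizes: $[\Delta X]^{1+m\powdep} = \bigl(\sum_k [X]^{1+k\powdep}\tp\coupling^k x_k\bigr)^{1+m\powdep}$, and here one uses that on the left leg the factors $[X]^{1+k\powdep}$ all multiply together—because $X$ is grouplike-like in the relevant sense, more precisely because $[X]^{a}\cdot[X]^{b} = [X]^{a+b}$ as formal series—while the right legs multiply in $H_R$, so one can collect the left leg into a single power $[X]^{(1+m\powdep)+(\text{correction})}$. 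The bookkeeping that the exponents add up correctly to reproduce $\sum_n [X]^{1+n\powdep}\tp\coupling^n x_n$ after summing over $m$ and using the recursion for $x_{n}$ is the crux.

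The main obstacle is precisely that last bookkeeping step: showing that the exponent $1+m\powdep$ on the left leg, combined with the extra factors $[X]^{k\powdep}$ coming from expanding the power $(1+m\powdep)$ of the sum, assembles into exactly $[X]^{1+n\powdep}$ where $n$ is the total coupling order extracted, and that the right leg assembles into $\coupling^n x_n$ via the recursion derived in the first step. Concretely, one must match $(1+m\powdep) + \powdep\cdot(\text{sum of the }k_i) = 1 + \powdep\cdot(m + \sum k_i)$ against the target exponent $1+n\powdep$ with $n = m + \sum k_i$, and simultaneously recognize $B_m$ applied to the product of the corresponding $x_{k_i}$-pieces as contributing the $\coupling^m$-part of the recursion for $x_{m+\sum k_i}$. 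I would organize this by introducing $Y(\coupling) \defas \sum_{n}[X(\coupling)]^{1+n\powdep}\tp\coupling^n x_n$ as an ansatz and verifying directly that $Y$ satisfies the "coproduct-twisted" version of \eqref{eq:dse}, i.e. that $(\id\tp\Delta)Y$ and $(\Delta\tp\id)Y$ agree when both are computed via \eqref{eq:dse}; alternatively I would simply cite \cite{Foissy:DSE} for the detailed combinatorics, as the footnote already does, and present only the inductive skeleton above. Either way, the homogeneity of the $B_n$ (degree $\geq 1$) is what makes every induction here terminate at finite order in $\coupling$, so that all manipulations take place in $H_R[[\coupling]]$ with no convergence issues.
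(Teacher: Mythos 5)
Your sketch is correct, but note that the paper itself offers no proof of this lemma: the coproduct formula \eqref{eq:dse-coprod} is simply quoted from \cite{Foissy:DSE} via the footnote, and existence/uniqueness is left as the evident order-by-order remark, so your proposal supplies strictly more detail than the paper and follows the standard route (as in \cite{Foissy:DSE,BergbauerKreimer}): extract the recursion $x_{n+1}=\sum_{m=1}^{n+1}B_m\bigl([X^{1+m\powdep}]_{\coupling^{n+1-m}}\bigr)$, apply $\Delta$ to \eqref{eq:dse}, use the cocycle property of each $B_m$, and induct on the coupling order. The bookkeeping you flag as the crux does close exactly as you anticipate: writing $\Delta X=(X\tp\1)\cdot\bigl(\1\tp\1+\sum_{k\geq 1}X^{k\powdep}\tp\coupling^k x_k\bigr)$, the $(1+m\powdep)$-th power expands into terms $\binom{1+m\powdep}{j}\,X^{1+(m+k_1+\cdots+k_j)\powdep}\tp\coupling^{k_1+\cdots+k_j}x_{k_1}\cdots x_{k_j}$ with all $k_i\geq 1$, and after applying $\coupling^m(\id\tp B_m)$ and summing over $m\geq 1$, the right legs at fixed total order $n=m+\sum_i k_i$ are precisely $\sum_m B_m$ applied to the $\coupling^{n-m}$-coefficient of $X^{1+m\powdep}$, i.e.\ $x_n$ by the recursion, while all corresponding left legs equal $X^{1+n\powdep}$; the term $X\tp\1$ produced by the cocycle identity supplies the $n=0$ contribution, and the induction hypothesis is only ever used at strictly lower coupling order because $m\geq 1$. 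One caveat: your fallback of introducing the ansatz $Y$ and checking that $(\id\tp\Delta)Y$ and $(\Delta\tp\id)Y$ agree would not by itself establish $\Delta X=Y$ (coassociativity of the ansatz is weaker than the identity), so either carry out the induction as above or cite \cite{Foissy:DSE}, as the paper does.
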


\begin{beispiel}\label{ex:DSE-propagator}
	In \cite{Kreimer:ExactDSE,Panzer:Master}, $X(\coupling) = \1 - \coupling B_+ \left( \frac{1}{X(\coupling)} \right)$ features $\powdep=-2$, summing all trees
\begin{equation*}\begin{split}
	X(\coupling)\in
	&\	\1
		- \tree{+-}\ \coupling
		- \tree{++--}\ \coupling^2
		- \left(
				\tree{+++---} + \tree{++-+--}
			\right) \coupling^3
		- \left(
					\tree{++++----} 
				+	\tree{+++-+---}
				+2\tree{++-++---}
				+	\tree{++-+-+--} 
			\right) \coupling^4 \\
	&	- \left(
					\tree{+++++-----} + \tree{++++-+----} 
				+2\tree{+++-++----} + \tree{+++-+-+---}
				+ \tree{+++--++---} + 2\tree{++-+++----} 
				+2\tree{++-++-+---} + 3\tree{+++--+-+--} 
				+ \tree{++-+-+-+--}
			\right) \coupling^5
		+ \coupling^6 H_R[[\coupling]]
\end{split}\end{equation*}
with a symmetry factor. Physically these correspond to (Yukawa) propagators 
\begin{multline*}
	 \1
		-\Graph{+-} \coupling
		-\Graph{++--} \coupling^2
		-\left(
				\Graph{+++---} + \Graph{++-+--}
			\right) \coupling^3 \\
		-\left(
				\Graph[0.7]{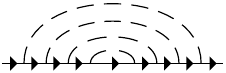} + \Graph[0.7]{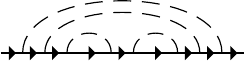}
				+\Graph[0.7]{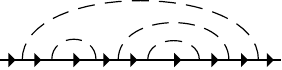} + \Graph[0.7]{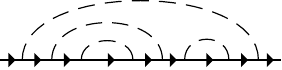} + \Graph[0.7]{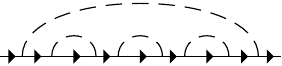}
			\right) \coupling^4
		+\bigo{\coupling^5},
\end{multline*}
arising from insertions of the one-loop graph $\Graph{+-}$ into itself.
\end{beispiel}

\begin{definition}\label{satz:correlation}
	The \emph{correlation function} $G(\coupling)$ evaluates the renormalized Feynman rules $\toyphy\!: H_R\rightarrow\K[\scalelog]$ on the perturbation series $X(\coupling)$, yielding the formal power series
	\begin{equation}\label{eq:correlation}
		G (\coupling)
		\defas \toyphy \circ X(\coupling)
		= \sum_{n\in\N_0} \toyphy (x_n) \coupling^n
		\in \left( \K[\scalelog] \right) [[\coupling]].
	\end{equation}
	We call 
	$
		\widetilde{\toylog}(\coupling)
		\defas \toylog \circ X(\coupling)
		= \restrict{-\partial_{\scalelog}}{0} G(\coupling)
		\in {\K}[[\coupling]]
	$ the \emph{physical anomalous dimension}.
\end{definition}
\begin{beispiel}\label{ex:intrules-correlation}
	The Feynman rules $\intrules$ from \eqref{eq:int-rules} result in the convergent series $G(\coupling) = \sqrt{1-2\coupling\scalelog}$ and 
	$
		\widetilde{\toylog}(\coupling) 
		\urel{\ref{ex:intrules-exp}} - Z_{\tree{+-}} \circ X(\coupling)
		= \coupling
	$
	for the propagator of example \ref{ex:DSE-propagator}. Perturbatively,
\begin{align*}
	G(\coupling)
	&= 1
		-\frac{(\coupling \scalelog)}{\tree{+-}\,!}
		-\frac{(\coupling \scalelog)^2}{\tree{++--}\,!}
		-\frac{(\coupling \scalelog)^3}{\tree{+++---}\,!} 
		-\frac{(\coupling \scalelog)^3}{\tree{++-+--}\,!}
		- \ldots
	 = 1
		-\coupling\scalelog
		-\frac{1}{2} (\coupling\scalelog)^2
		-\frac{1}{2} (\coupling\scalelog)^3 
		+\bigo{{(\coupling\scalelog)}^4}.
\end{align*}
\end{beispiel}

\subsection{Propagator coupling duality}
\label{sec:propagator-coupling}

The Hopf subalgebra of the perturbation series allows to calculate convolutions in
\begin{lemma}\label{lemma:perturbation-convolutions}
	Let $\psi\in\infchars{H_R}{\alg}$ denote an infinitesimal character, $\Psi\in\chars{H_R}{\alg}$ a character and $\lambda\in\Hom(H_R,\alg)$ a linear map. Then (in suggestive notation)
	\begin{align}
		(\Psi \convolution \lambda) \circ X(\coupling)
		=& \left[ \Psi \circ X(\coupling)\right] 
				\cdot \lambda\circ X\left(
					\coupling 
					\left[ \Psi \circ X(\coupling) \right]^{\powdep}
				\right)
		\label{eq:perturbation-convolution-char}\\
		\defas&
				\left[ \Psi \circ X(\coupling)\right] 
				\cdot \sum_{n\in\N_0} \lambda(x_n) \cdot \left(
					\coupling 
					\left[ \Psi \circ X(\coupling) \right]^{\powdep}
				\right)^n
		\in{\alg}[[\coupling]] \nonumber\\
		(\psi \convolution \lambda) \circ X(\coupling)
		=&	\left[ \psi\circ X(\coupling) \right] \cdot
				\left( \id + \powdep \coupling \partial_{\coupling} \right)
				\left[ \lambda \circ X(\coupling) \right]
		\in{\alg} [[\coupling]].
		\label{eq:perturbation-convolution-inf}
	\end{align}
\end{lemma}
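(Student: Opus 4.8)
The plan is to feed the Dyson--Schwinger coproduct formula \eqref{eq:dse-coprod} directly into the definition of the convolution. For any $\phi\in\Hom(H_R,\alg)$ and any linear $\lambda$ we get, using commutativity of $\alg$,
\begin{equation*}
	(\phi\convolution\lambda)\circ X(\coupling)
	= m_{\alg}\circ(\phi\tp\lambda)\circ\Delta X(\coupling)
	= \sum_{n\in\N_0} \phi\!\left(\left[X(\coupling)\right]^{1+n\powdep}\right)\cdot\lambda(x_n)\,\coupling^n.
\end{equation*}
Everything here lives in $\alg[[\coupling]]$ because $x_n\in\ker\counit$ for all $n\geq 1$ (the $B_n$ are cocycles, so $\im B_n\subseteq\ker\counit$ by Lemma \ref{satz:cocycle-props}); hence $X(\coupling)-\1$ has vanishing constant term and the binomial powers $[X(\coupling)]^{p}=\sum_k\binom{p}{k}(X(\coupling)-\1)^k$ are finite in each $\coupling$-degree. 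So the whole lemma reduces to computing $\phi([X(\coupling)]^{1+n\powdep})$ in the two relevant cases $\phi=\Psi$ and $\phi=\psi$.

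For \eqref{eq:perturbation-convolution-char} I would use that $\Psi$ is an algebra morphism: applying $\Psi$ termwise to $[X(\coupling)]^{1+n\powdep}=\sum_k\binom{1+n\powdep}{k}(X(\coupling)-\1)^k$ gives $\Psi([X(\coupling)]^{1+n\powdep})=[\Psi\circ X(\coupling)]^{1+n\powdep}$, where the right side is again the binomial series, legitimate since $\Psi\circ X(\coupling)=1_{\alg}+\bigo{\coupling}$. Substituting back and factoring out one copy of $\Psi\circ X(\coupling)$ leaves $[\Psi\circ X(\coupling)]\cdot\sum_n\lambda(x_n)\bigl(\coupling[\Psi\circ X(\coupling)]^{\powdep}\bigr)^n$; since $\coupling[\Psi\circ X(\coupling)]^{\powdep}=\coupling+\bigo{\coupling^2}$, this sum is by definition $\lambda\circ X\bigl(\coupling[\Psi\circ X(\coupling)]^{\powdep}\bigr)$, which is exactly the claim.

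For \eqref{eq:perturbation-convolution-inf} the key fact is that an infinitesimal character satisfies $\psi(\1)=0$ and $\psi(ab)=0$ whenever $a,b\in\ker\counit$. Hence in $\psi\bigl([X(\coupling)]^{1+n\powdep}\bigr)=\sum_k\binom{1+n\powdep}{k}\psi\bigl((X(\coupling)-\1)^k\bigr)$ only the $k=1$ term survives, producing $(1+n\powdep)\,\psi\circ X(\coupling)$. Plugging this in turns the factor $1+n\powdep$ into the operator $\id+\powdep\coupling\partial_{\coupling}$ acting on $\sum_n\lambda(x_n)\coupling^n=\lambda\circ X(\coupling)$, giving the stated formula. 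Alternatively, \eqref{eq:perturbation-convolution-inf} follows from \eqref{eq:perturbation-convolution-char} by setting $\Psi=\exp_{\convolution}(t\psi)$ and differentiating at $t=0$, using $\exp_{\convolution}(0)=e$, $\counit\circ X(\coupling)=1$, and the chain/product rules for the substitution $\coupling\mapsto\coupling[\Psi\circ X(\coupling)]^{\powdep}$.

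I expect the only real obstacle to be bookkeeping with formal power series: one must check that the binomial powers, the action of $\Psi$ (resp.\ $\psi$) on them, and the rescaling of the coupling are all well defined coefficientwise in $\coupling$. This is guaranteed by the observation $x_n\in\ker\counit$ for $n\geq 1$; once that is in place the computation is purely formal, with all the genuine content carried by \eqref{eq:dse-coprod} and the (co)algebra-morphism properties of $\Psi$ and $\psi$.
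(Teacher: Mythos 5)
Your proposal is correct and follows essentially the same route as the paper: insert the coproduct formula \eqref{eq:dse-coprod} into the convolution, use multiplicativity of $\Psi$ for \eqref{eq:perturbation-convolution-char}, and for \eqref{eq:perturbation-convolution-inf} observe that only the linear term of the binomial expansion survives under $\psi$, turning the factor $1+n\powdep$ into $\id+\powdep\coupling\partial_{\coupling}$ — which is precisely the computation displayed in the paper's proof. The formal-power-series bookkeeping you flag ($x_n\in\ker\counit$ for $n\geq1$) is the same well-definedness remark the paper relegates to its footnote on $[X(\coupling)]^{p}$.
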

\begin{proof}
	These are immediate consequences of lemma \ref{satz:dse-coprod}, for \eqref{eq:perturbation-convolution-inf} consider
	\begin{equation*}
				\psi \left( \left[ X(\coupling) \right]^{1+n\powdep} \right) \cdot
					\coupling^n
		=		\sum_{i\in\N_0}
				\binom{1+n\powdep}{i}
				\psi \left( \left[ X(\coupling) - \1 \right]^{i} \right)
					\coupling^n
		 =	\psi\left( X(\coupling) - \1 \right)
				\cdot
				(1+n\powdep) 
					\coupling^n 
		. \qedhere
	\end{equation*}
\end{proof}

\begin{beispiel}
	Continuing \ref{ex:DSE-propagator} we deduce $Z_{\tree{+-}}^{\convolution 2} \left(X(\coupling)\right) = -\coupling(1-2\coupling\partial_{\coupling})(-\coupling) = -\coupling^2$ and
\begin{equation*}
	Z_{\tree{+-}}^{\convolution n+1} \left( X(\coupling) \right)
	\urel{\eqref{eq:perturbation-convolution-inf}}
	-\coupling^{n+1} (2n-1)(2n-3)\cdots(1)
	= -\coupling^{n+1} \frac{(2n)!}{2^n n!},
\end{equation*}
proving $\intrules(x_{n+1}) = -2^{-n} C_n \scalelog^{n+1}$ with the \emph{Catalan numbers} $C_n$ already noted in \cite{Kreimer:NonlinearDSE}. From \ref{ex:intrules-correlation} we find their generating function 
	$
		2\coupling \sum_{n\in\N_0} \coupling^n C_n 
		= 1-\sqrt{1-4\coupling}
	$.
\end{beispiel}

\begin{korollar}
	As $\toyphy$ is a morphism of Hopf algebras, for any $a,b\in\K$ we can factor
	\begin{equation}
		G_{a+b}(\coupling)
		= (\toyphy[a]	\convolution	\toyphy[b]) \circ X(\coupling)
		\urel{\eqref{eq:perturbation-convolution-char}}
			G_a (\coupling) \cdot G_b\left[ \coupling G_a^{\powdep}(\coupling) \right]
		= G_b (\coupling) \cdot G_a\left[ \coupling G_b^{\powdep}(\coupling) \right].
		\label{eq:propagator-coupling}
	\end{equation}
\end{korollar}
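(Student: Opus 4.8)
The plan is to derive \eqref{eq:propagator-coupling} purely formally from the renormalization group Corollary~\ref{satz:rge} together with the convolution formula \eqref{eq:perturbation-convolution-char} of Lemma~\ref{lemma:perturbation-convolutions}. First I would record that, since $\toyphy\colon H_R\to\K[\x]$ is a morphism of Hopf algebras by Corollary~\ref{satz:toyphy-hopfmor} and each $\ev_a$ lies in $\chars{\K[\x]}{\K}$, the composite $\toyphy[a]=\ev_a\circ\toyphy$ is a character of $H_R$, and $a\mapsto\toyphy[a]$ is a morphism from $(\K,+)$ into $\chars{H_R}{\K}$. Its image is therefore abelian, so $\toyphy[a]\convolution\toyphy[b]=\toyphy[a+b]=\toyphy[b]\convolution\toyphy[a]$; composing with the perturbation series gives $G_{a+b}(\coupling)=\toyphy[a+b]\circ X(\coupling)=(\toyphy[a]\convolution\toyphy[b])\circ X(\coupling)$, which is the first equality of the statement.

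For the remaining two expressions I would feed $\Psi\defas\toyphy[a]\in\chars{H_R}{\K}$ and $\lambda\defas\toyphy[b]\in\Hom(H_R,\K)$ into \eqref{eq:perturbation-convolution-char}. Here $\Psi\circ X(\coupling)=G_a(\coupling)$, and $\lambda\circ X$ evaluated at the argument $\coupling\,[\Psi\circ X(\coupling)]^{\powdep}$ is, in the suggestive substitution sense of \eqref{eq:perturbation-convolution-char}, nothing but $G_b\bigl[\coupling\,G_a^{\powdep}(\coupling)\bigr]$ because $G_b(\coupling)=\sum_{n}\toyphy[b](x_n)\coupling^n$. Thus \eqref{eq:perturbation-convolution-char} reads $(\toyphy[a]\convolution\toyphy[b])\circ X(\coupling)=G_a(\coupling)\cdot G_b\bigl[\coupling\,G_a^{\powdep}(\coupling)\bigr]$; swapping $a\leftrightarrow b$ and invoking $\toyphy[a]\convolution\toyphy[b]=\toyphy[b]\convolution\toyphy[a]$ from the first step yields the last factorization $G_b(\coupling)\cdot G_a\bigl[\coupling\,G_b^{\powdep}(\coupling)\bigr]$.

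The step I would flag as the main obstacle — really the only point needing care — is verifying that the substitution $\coupling\mapsto\coupling\,G_a^{\powdep}(\coupling)$ is a legitimate operation on $(\K[\scalelog])[[\coupling]]$, so that \eqref{eq:perturbation-convolution-char} applies verbatim. This rests on $x_0=\1$, whence $G_a(\coupling)=\toyphy[a](\1)+\bigo{\coupling}=1+\bigo{\coupling}$; then $G_a^{\powdep}(\coupling)=\sum_{n\in\N_0}\binom{\powdep}{n}\,[G_a(\coupling)-1]^n$ is well defined with constant term $1$, so that $\coupling\,G_a^{\powdep}(\coupling)$ has vanishing constant term and can be plugged into a formal power series — precisely the well-definedness condition already built into \eqref{eq:dse} and \eqref{eq:perturbation-convolution-char}. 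With this in hand the corollary amounts to the single statement that the two factorizations of $G_{a+b}(\coupling)$ agree, which is the functional equation \eqref{eq:propagator-coupling}.
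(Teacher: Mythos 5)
Your proposal is correct and follows exactly the paper's route: the first equality is the renormalization group property of Corollary \ref{satz:rge} (i.e.\ $\toyphy[a]\convolution\toyphy[b]=\toyphy[a+b]$, from $\toyphy$ being a Hopf algebra morphism), and the factorizations come from applying \eqref{eq:perturbation-convolution-char} with $\Psi=\toyphy[a]$, $\lambda=\toyphy[b]$ and then swapping $a\leftrightarrow b$. Your extra check that $G_a(\coupling)=1+\bigo{\coupling}$ makes the substitution $\coupling\mapsto\coupling\,G_a^{\powdep}(\coupling)$ well defined is a sensible elaboration of what the paper leaves implicit.
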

These functional equations of formal power series make sense for the non-perturbative correlation functions as well. Relating the scale- with the coupling-dependence, this integrated form of the renormalization group equation becomes infinitesimally
\begin{korollar}
	From $-\frac{\dd}{\dd\x}\ \toyphy = \toylog \convolution \toyphy = \toyphy \convolution \toylog$ or differentiating \eqref{eq:propagator-coupling} by $b$ at zero note
	\begin{equation}\label{eq:propagator-coupling-diff}
		G_{\scalelog}(\coupling) \cdot \widetilde{\toylog}\left[ \coupling G_{\scalelog}^{\powdep}(\coupling) \right]
		\urel{\eqref{eq:perturbation-convolution-char}}
		-\partial_{\scalelog} G_{\scalelog}(\coupling)
		\urel{\eqref{eq:perturbation-convolution-inf}}
		\widetilde{\toylog}(\coupling) \cdot \left( 1 + \powdep \coupling \partial_{\coupling} \right) G_{\scalelog}(\coupling).
	\end{equation}
\end{korollar}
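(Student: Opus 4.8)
The plan is to obtain \eqref{eq:propagator-coupling-diff} by composing the two convolution identities for $\toyphy$ with the perturbation series $X(\coupling)$ and feeding the results into lemma \ref{lemma:perturbation-convolutions}. First I would note that corollary \ref{korollar:toylog} expresses $\toyphy = \exp_{\convolution}(-\x\toylog)$ as the pointwise finite series \eqref{eq:toylog}; differentiating it term by term in $\x$ and using that the convolution powers $\toylog^{\convolution n}$ all commute with one another (so a factor $\toylog$ may be pulled out on either side) yields
\[
	-\frac{\dd}{\dd\x}\toyphy
	= \toylog\convolution\toyphy
	= \toyphy\convolution\toylog .
\]
Composing with $X(\coupling)$ and using that $\partial_{\x}$ acts coefficientwise on $G(\coupling) = \sum_{n}\toyphy(x_n)\coupling^n$, the middle expression becomes $-\partial_{\scalelog}G_{\scalelog}(\coupling)$, the central term of \eqref{eq:propagator-coupling-diff}.

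Next I would evaluate the two outer terms with lemma \ref{lemma:perturbation-convolutions}. Applying \eqref{eq:perturbation-convolution-char} to the character $\Psi\defas\toyphy$ and the linear map $\lambda\defas\toylog$ gives
\[
	\left(\toyphy\convolution\toylog\right)\circ X(\coupling)
	= \left[\toyphy\circ X(\coupling)\right]\cdot\toylog\circ X\!\left(\coupling\left[\toyphy\circ X(\coupling)\right]^{\powdep}\right)
	= G_{\scalelog}(\coupling)\cdot\widetilde{\toylog}\!\left[\coupling G_{\scalelog}^{\powdep}(\coupling)\right],
\]
after substituting $\toyphy\circ X(\coupling) = G_{\scalelog}(\coupling)$ and $\toylog\circ X(\coupling) = \widetilde{\toylog}(\coupling)$ from definition \ref{satz:correlation}. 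Symmetrically, \eqref{eq:perturbation-convolution-inf} applied to the infinitesimal character $\psi\defas\toylog$ and the linear map $\lambda\defas\toyphy$ gives
\[
	\left(\toylog\convolution\toyphy\right)\circ X(\coupling)
	= \left[\toylog\circ X(\coupling)\right]\cdot\left(\id+\powdep\coupling\partial_{\coupling}\right)\!\left[\toyphy\circ X(\coupling)\right]
	= \widetilde{\toylog}(\coupling)\cdot\left(1+\powdep\coupling\partial_{\coupling}\right)G_{\scalelog}(\coupling).
\]
Stringing these three displays together with the previous one yields exactly \eqref{eq:propagator-coupling-diff}.

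For the alternative route advertised in the statement I would instead differentiate the functional equation \eqref{eq:propagator-coupling} with respect to $b$ at $b=0$. Here one uses $G_0(\coupling) = \counit\circ X(\coupling) = 1$ (because $\ev_0\circ\toyphy = \counit$ on $H_R$ and $x_n\in\ker\counit$ for $n\geq 1$) together with $\partial_b|_{b=0}G_b(\coupling) = \ev_0\circ\partial_{\x}\circ G(\coupling) = -\widetilde{\toylog}(\coupling)$; the chain rule applied to the factors $G_b^{\powdep}(\coupling)$ and $G_a\!\left(\coupling G_b^{\powdep}(\coupling)\right)$ then reproduces the left and right hand sides of \eqref{eq:propagator-coupling-diff}, the operator $\powdep\coupling\partial_{\coupling}$ arising precisely from differentiating the rescaled coupling argument.

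The only genuinely delicate point is the bookkeeping of which coupling each factor is evaluated at --- the ``suggestive notation'' substitution $\coupling\mapsto\coupling\,[\Psi\circ X(\coupling)]^{\powdep}$ in lemma \ref{lemma:perturbation-convolutions} --- and checking that $\toylog$, although it takes values in the scalars $\K\subseteq\K[\x]$, legitimately plays the role of the generic linear map $\lambda$ (resp.\ infinitesimal character $\psi$) there. Everything else is a direct substitution.
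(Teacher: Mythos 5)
Your proposal is correct and follows essentially the same route the paper indicates: composing $-\frac{\dd}{\dd\x}\,\toyphy = \toylog \convolution \toyphy = \toyphy \convolution \toylog$ with $X(\coupling)$ and evaluating the two outer convolutions via \eqref{eq:perturbation-convolution-char} and \eqref{eq:perturbation-convolution-inf}, with the differentiation of \eqref{eq:propagator-coupling} at $b=0$ as the alternative the paper also mentions. The bookkeeping points you flag (the substituted coupling argument and viewing $\toylog\in\infchars{H_R}{\K}$ as valued in $\K\subseteq\K[\x]$) are handled correctly.
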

The first of these equations generalizes the \emph{propagator coupling duality} in \cite{Kreimer:ExactDSE,Kreimer:NonlinearDSE}. For any fixed coupling $\coupling$, it expresses the correlation function as the solution of the ode
\begin{equation}
	-\frac{\dd}{\dd \scalelog} \ln G_{\scalelog} (\coupling)
	= \widetilde{\toylog} \left[ \coupling e^{\powdep \ln G_{\scalelog}(\coupling)} \right]
	\quad\text{with}\quad
	\ln G_0 (\coupling) = 0,
	\label{eq:propagator-coupling-ode}
\end{equation}
determining $G_{\scalelog}(\coupling)$ completely from $\widetilde{\toylog}(\coupling)$ in a non-perturbative manner as in \eqref{eq:rge-solved}.
\hide{
We may rewrite \eqref{eq:propagator-coupling-ode} as
\begin{equation*}
	\scalelog
	= \int_1^{G_{\scalelog}(\coupling)}
		\frac{\dd y}{y \tilde{\toylog}\left( \coupling\cdot y^{\powdep} \right)}.
\end{equation*}
\paragraph{Example}
	Consider the leading-$\log$-expansion of the propagator:
\begin{align*}
	G_{b}(\coupling) \cdot G_a\left[ \coupling G_b^{-2}(\coupling) \right]
	&=	\sqrt{1+2\coeff{-1}b\coupling} \cdot \sqrt{1+2\coeff{-1}a\coupling \left[ \sqrt{1+2\coeff{-1}b\coupling} \right]^{-2}} \\
	&= \sqrt{1+2\coeff{-1}(a+b)\coupling}
	= G_{a+b}(\coupling).
\end{align*}
}%
\begin{beispiel}\label{ex:leading-log}
	The \emph{leading-$\log$} expansion takes only the highest power of $\scalelog$ in each $\coupling$-order. Equally, $\widetilde{\toylog} (\coupling) = c \coupling^n$ for constants $c\in\K$, $n\in\N$ and \eqref{eq:propagator-coupling-ode} integrates to
	\begin{equation}\label{eq:leading-log}
		G_{\text{leading-$\log$}} (\coupling)
		= \Big[ 1 + c n \powdep \scalelog \coupling^n \Big]^{-\frac{1}{n\powdep}}.
	\end{equation}
As a special case we recover example \ref{ex:intrules-correlation} for $n=c=1$ and $\powdep = -2$.
\end{beispiel}
\begin{beispiel}
	In the linear case $\powdep=0$, \eqref{eq:propagator-coupling} states $G_{a+b}(\coupling) = G_a(\coupling) \cdot G_b(\coupling)$ in accordance with the \emph{scaling solution} 
$
	G_{\scalelog}(\coupling) 
	= e^{-\scalelog\tilde{\toylog}(\coupling)}
$
of \eqref{eq:propagator-coupling-ode}, well-known from \cite{Kreimer:linear}.
\end{beispiel}
\begin{beispiel}
	For vertex insertions as in \cite{BKW:NextToLadder} we have $\powdep=1$, so 
$
	G_{a+b}(\coupling) 
	=	G_b(\coupling) \cdot G_a \big[ \widetilde{G}_b(\coupling) \big]
$
expresses the running of the coupling constant $\widetilde{G} \defas \coupling \cdot G$: A change in scale by $b$ is (up to a multiplicative constant) equivalent to replacing the coupling $\coupling$ by $\widetilde{G}_b(\coupling)$.
\end{beispiel}

\subsection{The physicist's renormalization group}

To cast \eqref{eq:propagator-coupling} and \eqref{eq:propagator-coupling-diff} into the common forms of (7.3.15) and (7.3.21) in \cite{Collins}, we introduce the \emph{$\beta$-function} $\beta(\coupling) \defas -\powdep\coupling\widetilde{\toylog}(\coupling)$ and the \emph{running coupling} $\coupling(\rp)$ as the solution of
\begin{equation}\label{eq:running-coupling}
	\rp\frac{\dd}{\dd\rp} \coupling(\rp)
	= \beta \big( \coupling(\rp) \big),
	\quad\text{hence}\quad
	\rp\frac{\dd}{\dd\rp} G \left( \coupling(\rp),\ln\tfrac{s}{\rp} \right)
	\urel{\eqref{eq:propagator-coupling-diff}}
	\widetilde{\toylog}\big( \coupling(\rp) \big) G \left( \coupling(\rp),\ln\tfrac{s}{\rp} \right).
\end{equation}
Integration relates the correlation functions for different renormalization points $\rp$ in
\begin{equation*}
	G\left( \coupling(\rp_2),\ln\tfrac{s}{\rp_2} \right) 
	= G\left( \coupling(\rp_1),\ln\tfrac{s}{\rp_1} \right) \cdot
		\exp \left[ 
			\int_{\rp_1}^{\rp_2} \widetilde{\toylog}\big( \coupling(\rp) \big) \tfrac{\dd\rp}{\rp}
		\right]
	\urel{\eqref{eq:running-coupling}}
		G\left( \coupling(\rp_1),\ln\tfrac{s}{\rp_1} \right) \cdot
		\left[ 
			\frac{\coupling(\rp_2)}{\coupling(\rp_1)}
		\right]^{-\frac{1}{\powdep}}.
\end{equation*}
Setting $\rp_1=s$ we may thus write $G_{\scalelog}(\coupling)$ explicitly in terms of $\widetilde{\toylog}(\coupling)$ as
\begin{equation}\label{eq:rge-solved}
	G_{\scalelog}( \coupling)
	= \left[
			\frac{\coupling}{\coupling(s)}
		\right]^{-\frac{1}{\powdep}},
	\quad\text{with $\coupling(s)$ subject to}\quad
	\scalelog
	= \ln\frac{s}{\rp} 
	= \int_{\coupling}^{\coupling(s)} \frac{\dd\coupling'}{\beta(\coupling')}.
\end{equation}

\subsection{Relation to Mellin transforms}

We finally exploit the analytic input from theorem \ref{satz:toymodel-universal} to the perturbation series in
\begin{equation*}
	G_{\scalelog}(\coupling)
	\urel{\eqref{eq:dse}}
		1 + \sum_{n\in\N} \coupling^n \toyphy\circ B_n \left(X(\coupling)^{1+n\powdep}\right)
	\urel{\ref{satz:toymodel-universal}}
		1+ \sum_{n\in\N} \coupling^n \left[-\coeff[(n)]{-1}\polyint + P\circ\widetilde{F_n}(\partial_{-\scalelog})\right] G_{\scalelog}(\coupling)^{1+n\powdep},
\end{equation*}
with Mellin transforms
$
	F_n(\reg)
	=\frac{1}{\reg}\coeff[(n)]{-1} + \widetilde{F}_n(\reg)
$
corresponding to the insertions\footnote{For this generality we need decorated rooted trees as commented on in section \ref{sec:decorations}} $B_n$.
\begin{korollar}
	By $G_{\scalelog}(0) = 1$, the power series $G_{\scalelog}(\coupling)\in{\K}[\scalelog][[\coupling]]$ is fully determined by
	\begin{equation}\label{eq:correlation-mellin-de}
		\partial_{-\scalelog} G_{\scalelog} \left( \coupling \right)
		\urel{\eqref{eq:mellin-trafo}} 
		\sum_{n\in\N} \coupling^n \left[\reg F_n(\reg)\right]_{\reg = - \partial_{\scalelog}} \left( {G_{\scalelog}(\coupling)^{1+n\powdep}} \right).
	\end{equation}
\end{korollar}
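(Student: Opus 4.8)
The plan is to read \eqref{eq:correlation-mellin-de} off the fixed-point equation for $G_{\scalelog}(\coupling)$ displayed immediately before the corollary (which itself comes from \eqref{eq:dse}, \eqref{eq:correlation} and theorem \ref{satz:toymodel-universal} applied to the decorated cocycles $B_n$) by applying the derivation $\partial_{-\scalelog}=-\partial_{\scalelog}$, and then to deduce the uniqueness clause from the triangular structure of that recursion in powers of $\coupling$.

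\textbf{Deriving the differential equation.} I would apply $\partial_{-\scalelog}$ to both sides of
\[
	G_{\scalelog}(\coupling)
	= 1+\sum_{n\in\N}\coupling^n\bigl[-\coeff[(n)]{-1}\polyint + P\circ\widetilde{F_n}(\partial_{-\scalelog})\bigr]G_{\scalelog}(\coupling)^{1+n\powdep}.
\]
The constant $1$ is annihilated. For the term $-\coeff[(n)]{-1}\polyint$ one uses that $\polyint$ is the antiderivative on $\K[\scalelog]$, so $\partial_{-\scalelog}\circ\polyint=-\id$ and this contributes $\coeff[(n)]{-1}\id$. For the coboundary part one uses $P=\id-\ev_0$ together with $\partial_{-\scalelog}\circ\ev_0=0$ (its image consists of constants), hence $\partial_{-\scalelog}\circ P\circ\widetilde{F_n}(\partial_{-\scalelog})=\partial_{-\scalelog}\,\widetilde{F_n}(\partial_{-\scalelog})$. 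Collecting, the operator acting at order $\coupling^n$ becomes $\coeff[(n)]{-1}\id+\partial_{-\scalelog}\,\widetilde{F_n}(\partial_{-\scalelog})$, and since $\reg F_n(\reg)=\coeff[(n)]{-1}+\reg\widetilde{F_n}(\reg)$ this is precisely $[\reg F_n(\reg)]_{\reg=-\partial_{\scalelog}}$ applied to $G_{\scalelog}(\coupling)^{1+n\powdep}$. This is \eqref{eq:correlation-mellin-de}, and the label \eqref{eq:mellin-trafo} records that $\reg F_n(\reg)$ is the relevant power series.

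\textbf{Uniqueness.} Writing $G_{\scalelog}(\coupling)=\sum_{m\geq 0}\toyphy(x_m)\coupling^m$ (with $x_0=\1$, $\toyphy(\1)=1$), the $\coupling^m$-coefficient of \eqref{eq:correlation-mellin-de} expresses $\partial_{-\scalelog}\toyphy(x_m)$ through $[\reg F_n(\reg)]_{\reg=-\partial_{\scalelog}}$ applied to the $\coupling^{m-n}$-coefficient of $G_{\scalelog}(\coupling)^{1+n\powdep}$ for $1\leq n\leq m$; as $G_{\scalelog}(\coupling)^{1+n\powdep}$ involves only powers of $G_{\scalelog}(\coupling)-1=\sum_{k\geq 1}\toyphy(x_k)\coupling^k$, those coefficients are polynomial expressions in $\toyphy(x_0),\dots,\toyphy(x_{m-1})$ alone. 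Induction on $m$ thus determines $\partial_{-\scalelog}\toyphy(x_m)$, hence $\toyphy(x_m)$ up to an additive constant, and (since differential operators preserve $\K[\scalelog]$) also shows $\toyphy(x_m)\in\K[\scalelog]$. The missing constant is supplied by $\toyphy(x_m)\big|_{\scalelog=0}=\counit(x_m)=\delta_{m,0}$ — equivalently, evaluating the fixed-point equation at $\scalelog=0$ gives $1$ because both $\polyint$ and $P$ vanish there — which is exactly the content of the hypothesis $G_{\scalelog}(0)=1$. Hence $G_{\scalelog}(\coupling)\in\K[\scalelog][[\coupling]]$ is uniquely determined.

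\textbf{Main obstacle.} The only subtle point is that the differentiation passing from the fixed-point equation to \eqref{eq:correlation-mellin-de} discards the integration constant at each order in $\coupling$; the proof must make explicit that this constant is recovered, either by reading $G_{\scalelog}(0)=1$ as the initial datum $G_{\scalelog}(\coupling)\big|_{\scalelog=0}=1$ or, equivalently, by invoking $\toyphy(\ker\counit)\subseteq\ker\counit$. Beyond that, everything is a mechanical consequence of $\partial_{-\scalelog}\circ\polyint=-\id$ and $\partial_{-\scalelog}\circ P=\partial_{-\scalelog}$.
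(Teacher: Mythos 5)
Your proof is correct and follows essentially the paper's (implicit) route: the corollary is read off by applying $\partial_{-\scalelog}$ to the fixed-point equation displayed just before it, using $\partial_{-\scalelog}\circ\polyint=-\id$ and $\partial_{-\scalelog}\circ P=\partial_{-\scalelog}$, with uniqueness coming from the triangular order-by-order recursion in $\coupling$. Your extra care with the initial condition — reading it as the evaluation $G_{\scalelog}(\coupling)\vert_{\scalelog=0}=1$, equivalently $\toyphy(\ker\counit)\subseteq\scalelog\K[\scalelog]$, which is what actually fixes the integration constants lost upon differentiating — correctly settles a point the paper leaves implicit.
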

Restricting to a single cocycle $F_k(\reg) = F(\reg) \delta_{k,n}$, choosing $F(\reg)=\frac{\coeff{-1}}{\reg}$ reproduces \eqref{eq:leading-log} from
$
	\partial_{-\scalelog} G_{\scalelog}(\coupling) 
	= \coupling^n \coeff{-1} G_{\scalelog}(\coupling)^{1+n\powdep}.
$
More generally, for any rational $F(\reg) = \frac{p(\reg)}{q(\reg)} \in \K(\reg)$ with $q(0)=0$, \eqref{eq:correlation-mellin-de} collapses to a finite order ode
$
	q(-\partial_{\scalelog}) G_{\scalelog}(\coupling) 
	= \coupling^n p(-\partial_{\scalelog}) G_{\scalelog}(\coupling)^{1+n\powdep}
$
that makes perfect sense non-perturbatively (extending the algebraic $\partial_{\scalelog}\in\End(\K[\scalelog])$ to the analytic differential operator).

\begin{beispiel}
	For $F(\reg)=\frac{1}{\reg(1-\reg)}$, the propagator ($\powdep=-2$ as in example \ref{ex:DSE-propagator}) fulfils
\begin{equation*}
	\frac{\coupling}{G_{\scalelog}(\coupling)}
	=
	\partial_{-\scalelog} \left( 1-\partial_{-\scalelog} \right) G_{\scalelog} (\coupling)
	\urel{\eqref{eq:propagator-coupling-diff}}
	\widetilde{\toylog}(\coupling) \left(1 -2\coupling\partial_{\coupling} \right)
	\left[ 
	1	-	\widetilde{\toylog}(\coupling) \left(1-2\coupling\partial_{\coupling} \right)
		\right] G_{\scalelog}(\coupling).
\end{equation*}
At $\scalelog=0$ this evaluates to $\widetilde{\toylog}(\coupling) -\widetilde{\toylog}(\coupling) (1-2\coupling\partial_{\coupling})\widetilde{\toylog}(\coupling) = \coupling$, which is studied in \cite{Yeats,Kreimer:ExactDSE}.
\end{beispiel}

\section{Automorphisms of \texorpdfstring{$H_R$}{H\_R}}
\label{sec:H_R}

Applying the universal property to $H_R$ itself, adding coboundaries to $B_+$ leads to
\begin{definition}\label{def:auto}
	For any $\alpha \in H_R'$, theorem \ref{satz:H_R-universal} defines the Hopf algebra morphism
	\begin{equation}
		\auto{\alpha} \defas \unimor{B_+ + \dH\alpha}\!:\ H_R \rightarrow H_R
		\quad\text{such that}\quad
		\auto{\alpha} \circ B_+ = \left[ B_+ + \dH \alpha \right] \circ \auto{\alpha}.
		\label{eq:auto}
	\end{equation}
\end{definition}
\begin{beispiel}\label{ex:auto}
	The action on the simplest trees yields
\begin{align*}
	\auto{\alpha} \left( \tree{+-} \right) 
		&= \auto{\alpha} \circ B_+ (\1)
		 = B_+ (\1) + (\dH\alpha) (\1) 
		 = B_+ (\1)
		 = \tree{+-}, \\
	\auto{\alpha} \left( \tree{++--} \right)
		&= \auto{\alpha} \circ B_+ \left( \tree{+-} \right)
		 = \left( B_+ + \dH\alpha \right) \auto{\alpha} \left( \tree{+-} \right)
		 = \tree{++--} + \dH\alpha \left( \tree{+-} \right)
		 = \tree{++--} + \alpha(\1) \tree{+-}, \\
	\auto{\alpha} \left( \tree{+++---} \right)
		&
		 = \tree{+++---} + 2 \alpha(\1) \tree{++--} + \left\{ {\left[ \alpha(\1) \right]}^2 + \alpha\left( \tree{+-} \right) \right\} \tree{+-}
		\quad\text{and}\quad
	\auto{\alpha} \left( \tree{++-+--} \right)
		 = \tree{++-+--} + 2 \alpha \left( \tree{+-} \right) \tree{+-} + \alpha (\1) \tree{+-}\tree{+-}.
\end{align*}
\end{beispiel}
These morphisms capture the change of $\unimor{L}$ under a variation of $L$ by a coboundary in
\begin{satz}\label{satz:change-coboundary-equals-auto}
	Let $H$ denote a bialgebra, $L \in HZ^1_{\counit}(H)$ a 1-cocycle and further $\alpha \in H'$ a functional. Then for $\unimor{L},\unimor{L+\dH \alpha}\!:\ H_R \rightarrow H$ given through theorem \ref{satz:H_R-universal} and $\auto{\alpha \circ \unimor{L}}\!:\ H_R \rightarrow H_R$ from definition \ref{def:auto}, we have
	\begin{equation}
		\unimor{L + \dH \alpha} = \unimor{L} \circ \auto{\left[\alpha\, \circ\, \unimor{L}\right]},
		\quad\text{equivalently}\quad
		\vcenter{\xymatrix{
			{H_R} \ar[r]^{\unimor{L + \dH\alpha}} \ar[d]_{\auto{\alpha \circ \unimor{L}}}  & {H} \\
			{H_R} \ar[ur]_{\unimor{L}} &
			}}
		\quad\text{commutes.}
		\label{eq:change-coboundary-equals-auto}
	\end{equation}
\end{satz}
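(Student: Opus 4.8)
The plan is to invoke the \emph{uniqueness} clause of the universal property (Theorem \ref{satz:H_R-universal}). Both sides of \eqref{eq:change-coboundary-equals-auto} are morphisms of unital algebras $H_R\to H$: the left-hand side $\unimor{L+\dH\alpha}$ by its very construction, and the right-hand side $\unimor{L}\circ\auto{\alpha\circ\unimor{L}}$ as a composite of two morphisms of unital algebras. Abbreviating $\beta\defas\alpha\circ\unimor{L}\in H_R'$, it therefore suffices to verify that the composite $\unimor{L}\circ\auto{\beta}$ intertwines $B_+$ with the endomorphism $L+\dH\alpha\in\End(H)$, i.e.\ that
\[
	\left(\unimor{L}\circ\auto{\beta}\right)\circ B_+
	= \left(L+\dH\alpha\right)\circ\left(\unimor{L}\circ\auto{\beta}\right),
\]
for then uniqueness in Theorem \ref{satz:H_R-universal} forces $\unimor{L}\circ\auto{\beta}=\unimor{L+\dH\alpha}$.

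The first step is to unfold the left-hand side, applying in turn the defining relation \eqref{eq:auto} of $\auto{\beta}$ and the defining relation \eqref{eq:H_R-universal} of $\unimor{L}$:
\[
	\unimor{L}\circ\auto{\beta}\circ B_+
	= \unimor{L}\circ\bigl(B_+ + \dH\beta\bigr)\circ\auto{\beta}
	= L\circ\unimor{L}\circ\auto{\beta} + \unimor{L}\circ\dH\beta\circ\auto{\beta}.
\]
Comparing with the target, the entire statement reduces to the single \emph{naturality of $\dH$} identity $\unimor{L}\circ\dH\beta = \dH\alpha\circ\unimor{L}$ with $\beta = \alpha\circ\unimor{L}$; granting it, the second summand becomes $\dH\alpha\circ\unimor{L}\circ\auto{\beta}$ and the two summands recombine into $(L+\dH\alpha)\circ(\unimor{L}\circ\auto{\beta})$, finishing the proof.

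To establish the naturality identity I would use that $\unimor{L}$ is a morphism of bialgebras — this is exactly where the hypothesis $L\in\HZ[1](H)$ is needed, via the last sentence of Theorem \ref{satz:H_R-universal} — so that it intertwines coproducts, $\Delta_H\circ\unimor{L}=(\unimor{L}\tp\unimor{L})\circ\Delta_{H_R}$, and units, $\unimor{L}\circ u_{H_R}=u_H$. Then in Sweedler notation, with $\dH\beta$ and $\dH\alpha$ both given by \eqref{eq:dH} (legitimate since $H_R$ and $H$ are connected bialgebras),
\[
	\unimor{L}\bigl(\dH\beta(x)\bigr)
	= \sum_x \unimor{L}(x_1)\,\alpha\bigl(\unimor{L}(x_2)\bigr) - u_H\,\alpha\bigl(\unimor{L}(x)\bigr)
	= \dH\alpha\bigl(\unimor{L}(x)\bigr),
\]
where the last equality is just the Sweedler expansion of $\Delta_H\circ\unimor{L}$.

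I expect the only real subtlety to be the bookkeeping in this naturality step: keeping straight that $(\id\tp\beta)\circ\Delta_{H_R}$ lands in $H_R$ and is carried by $\unimor{L}$ to $(\id\tp\alpha)\circ\Delta_H\circ\unimor{L}$ precisely because $\unimor{L}$ is a coalgebra morphism, and that the constant ``$u\circ(\cdot)$'' pieces match because $\unimor{L}$ is unital. Everything else — in particular the fact that $L+\dH\alpha$ is again a cocycle, so that $\unimor{L+\dH\alpha}$ is even a morphism of Hopf algebras — is a formal consequence that is not actually needed for the identity itself, only the uniqueness of algebra morphisms with the prescribed intertwining property.
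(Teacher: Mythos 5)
Your proof is correct and takes essentially the same route as the paper: the heart of both arguments is the naturality identity $\unimor{L}\circ\dH\left(\alpha\circ\unimor{L}\right) = (\dH\alpha)\circ\unimor{L}$, which holds because $\unimor{L}$ is a morphism of bialgebras (this is exactly where the hypothesis $L\in\HZ[1](H)$ enters), and your Sweedler-notation verification of it is sound. The only difference is organizational: the paper checks the resulting identity inductively on trees, whereas you establish the intertwining relation $\left(\unimor{L}\circ\auto{\alpha\circ\unimor{L}}\right)\circ B_+=\left(L+\dH\alpha\right)\circ\left(\unimor{L}\circ\auto{\alpha\circ\unimor{L}}\right)$ once as an identity of maps and then invoke the uniqueness clause of theorem \ref{satz:H_R-universal} for the endomorphism $L+\dH\alpha$ — a legitimate and slightly slicker packaging of the same computation.
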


\begin{proof}
	As both sides of \eqref{eq:change-coboundary-equals-auto} are algebra morphisms, it suffices to prove it inductively for trees: Let it be true for a forest $\f\in\forests$, then it holds as well for the tree $ B_+(\f)$ by
	\begin{align*}
		&\unimor{L} \circ \auto{\left[\alpha \circ \unimor{L}\right]} \circ B_+ (\f)
		\wurel{\eqref{eq:H_R-universal}} \unimor{L} \circ \left[ B_+ + \dH \left( \alpha \circ \unimor{L} \right) \right] \circ \auto{\left[\alpha \circ \unimor{L}\right]} (\f) \\
		&= \Big\{ 
												L \circ \unimor{L}
		 					+ 
																				 (\dH \alpha) \circ \unimor{L}
			 \Big\} \circ \auto{\left[\alpha \circ \unimor{L}\right]} (\f) 
		= \left\{ L + \dH \alpha \right\} \circ 
		 			\underbrace{
							\unimor{L} \circ \auto{\left[\alpha \circ \unimor{L}\right]} (\f)
					}_{
							\unimor{L + \dH\alpha} (\f)
					}
		\urel{\eqref{eq:H_R-universal}} \unimor{L + \dH\alpha} \circ B_+ (\f).
	\end{align*}
	We used $(\dH\alpha) \circ \unimor{L} = \unimor{L} \circ \dH \left( \alpha \circ \unimor{L} \right)$, following from $\unimor{L}$ being a morphism of bialgebras.
\end{proof}

Hence the action of a coboundary $\dH\alpha$ on the universal morphisms induced by $L$ is given by $\auto{\alpha \circ \unimor{L}}$. This turns out to be an automorphism of $H_R$ as shown in

\begin{satz}\label{satz:auto}
	The map $\auto{\cdot}\!:\ H_R' \rightarrow \End_{\text{Hopf}}(H_R)$, taking values in the space of Hopf algebra endomorphisms of $H_R$, fulfils the following properties:
	\begin{enumerate}
		\item For any $\f\in\forests$ and $\alpha\in H_R'$, $\auto{\alpha}(\f)$ differs from $\f$ only by lower order forests:
			\begin{equation}
				\auto{\alpha} (\f) \in \f + H_R^{\abs{\f}-1}
				= \f + \bigoplus_{n=0}^{\abs{\f}-1} H_{R,n}.
				\label{eq:auto-leading-term}
			\end{equation}
		\item $\auto{\cdot}$ maps $H_R'$ into the Hopf algebra automorphisms $\Aut_{\text{Hopf}} (H_R)$.
		Its image is closed under composition, as for any \mbox{$\alpha,\beta \in H_R'$} we have $\auto{\alpha} \circ \auto{\beta} = \auto{\gamma}$ taking
			\begin{equation}
				\gamma = \alpha + \beta \circ {\auto{\alpha}}^{-1}.
				\label{eq:auto-composition}
			\end{equation}
		\item The maps $\dH\!: H_R' \rightarrow HZ^1_{\counit}(H_R)$ and $\auto{\cdot}\!: H_R' \rightarrow \Aut_{\text{Hopf}}(H_R)$ are injective,
		thus the subgroup $\im \auto{\cdot} = \setexp{\auto{\alpha}}{\alpha \in H_R'} \subset \Aut_{\text{Hopf}}(H_R)$ induces a group structure on $H_R'$ with neutral element $0$ and group law $\autoconc$ given by
			\begin{equation}
				\alpha \autoconc \beta 
				\defas \auto{\cdot}^{-1} \left( \auto{\alpha} \circ \auto{\beta} \right)
				\urel{\eqref{eq:auto-composition}}
				\alpha + \beta \circ \auto{\alpha}^{-1}
				\quad\text{and}\quad
				\alpha^{\autoconc -1} = -\alpha \circ \auto{\alpha}.
				\label{eq:auto-group}
			\end{equation}
	\end{enumerate}
\end{satz}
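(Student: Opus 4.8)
The plan is to run through the three numbered assertions in order, using Theorem~\ref{satz:H_R-universal} (the universal property, including its uniqueness clause) and the identity $(\dH\alpha)\circ\unimor{L}=\unimor{L}\circ\dH(\alpha\circ\unimor{L})$ for morphisms of bialgebras --- already exploited in the proof of Theorem~\ref{satz:change-coboundary-equals-auto} --- as the only nontrivial tools. To begin with, $B_++\dH\alpha\in\HZ[1](H_R)$ since it is the sum of the cocycle $B_+$ and the coboundary $\dH\alpha$, so Theorem~\ref{satz:H_R-universal} already guarantees that $\auto{\alpha}=\unimor{B_++\dH\alpha}$ is a morphism of Hopf algebras; what is left is to show it is an automorphism, to compute compositions, and to control the kernels of $\dH$ and $\auto{\cdot}$.

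For assertion~1 I would induct on $\abs{\f}$, reducing to trees and products because $\auto{\alpha}$ is an algebra morphism. The case $\f=\1$ is trivial, a nontrivial product $\f=\f_1\f_2$ follows from the hypothesis on the factors by multiplying out $\bigl(\f_1+H_R^{\abs{\f_1}-1}\bigr)\bigl(\f_2+H_R^{\abs{\f_2}-1}\bigr)$, and for a tree $t=B_+(\f)$ with $\f$ already treated I use $\auto{\alpha}(t)=(B_++\dH\alpha)\bigl(\auto{\alpha}(\f)\bigr)$ together with the elementary fact that, for homogeneous $\f$,
\[
	\dH\alpha(\f)=\alpha(\1)\,\f+(\id\tp\alpha)\cored(\f)\in H_R^{\abs{\f}}
\]
(the reduced coproduct strictly lowers the left degree). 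Hence $B_++\dH\alpha$ maps $\f+H_R^{\abs{\f}-1}$ into $B_+(\f)+H_R^{\abs{\f}}=B_+(\f)+H_R^{\abs{t}-1}$, which closes the induction.

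For assertion~2, \eqref{eq:auto-leading-term} shows that $N\defas\auto{\alpha}-\id$ maps $H_{R,n}$ into $\bigoplus_{m<n}H_{R,m}$, so $N$ is locally nilpotent and $\auto{\alpha}$ is invertible with $\auto{\alpha}^{-1}=\sum_{k\geq0}(-N)^k$; being also a Hopf morphism, it lies in $\Aut_{\text{Hopf}}(H_R)$. For the composition law I would verify that $\psi\defas\auto{\alpha}\circ\auto{\beta}$ satisfies the hypothesis of the universal property for the cocycle $B_++\dH\gamma$ with $\gamma=\alpha+\beta\circ\auto{\alpha}^{-1}$, and then conclude $\psi=\unimor{B_++\dH\gamma}=\auto{\gamma}$ by uniqueness: expanding
\[
	\psi\circ B_+=\auto{\alpha}\circ(B_++\dH\beta)\circ\auto{\beta},
\]
the first summand equals $(B_++\dH\alpha)\circ\psi$ by \eqref{eq:auto}, while commuting $\auto{\alpha}$ past $\dH\beta$ via $\phi\circ\dH\beta=\dH(\beta\circ\phi^{-1})\circ\phi$ (the bialgebra-morphism identity above, applied to $\phi=\auto{\alpha}$) turns the second summand into $\dH(\beta\circ\auto{\alpha}^{-1})\circ\psi$; summing and using linearity of $\dH$ gives $(B_++\dH\gamma)\circ\psi$.

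For assertion~3 the heart of the matter is injectivity of $\dH$. Starting from $\dH\alpha=0$, the formula $\dH\alpha(\f)=\alpha(\1)\f+(\id\tp\alpha)\cored(\f)$ tested on a primitive forces $\alpha(\1)=0$, and then $(\id\tp\alpha)\cored(\f)=0$ for every $\f$; I would deduce $\alpha(\f)=0$ for all forests by induction on degree, using that $\cored(B_+(\g))\ni\g\tp\tree{+-}$ and, for product forests, test elements such as $\tree{+-}\cdot\f$ whose reduced coproduct isolates $\f$ in the right tensor slot up to terms already known to be annihilated (peeling forests off within a fixed degree by their number of connected components). Injectivity of $\auto{\cdot}$ is then immediate: $\auto{\alpha}=\auto{\beta}$ composed with $B_+$ gives $\dH\alpha\circ\auto{\alpha}=\dH\beta\circ\auto{\alpha}$ by \eqref{eq:auto}, hence $\dH\alpha=\dH\beta$ (as $\auto{\alpha}$ is surjective) and thus $\alpha=\beta$. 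Finally $\im\auto{\cdot}$ is a subgroup of $\Aut_{\text{Hopf}}(H_R)$: it contains $\auto{0}=\unimor{B_+}=\id$, is closed under composition by assertion~2, and closed under inversion since $\auto{\alpha}^{-1}=\auto{\gamma}$ for the unique $\gamma$ with $\alpha+\gamma\circ\auto{\alpha}^{-1}=0$, that is $\gamma=-\alpha\circ\auto{\alpha}$. Transporting its group law along the bijection $\auto{\cdot}\!:H_R'\to\im\auto{\cdot}$ reproduces exactly the law $\autoconc$, the neutral element $0=\auto{\cdot}^{-1}(\id)$, and the inverse $\alpha^{\autoconc-1}=-\alpha\circ\auto{\alpha}$ of \eqref{eq:auto-group}. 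I expect the only genuinely fiddly point to be the degree induction establishing injectivity of $\dH$ --- in particular the bookkeeping that reconstructs $\alpha$ on product forests --- while everything else is a direct application of the universal property together with the cocycle identity.
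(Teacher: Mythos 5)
Your proposal is correct and follows essentially the same route as the paper: the identical degree induction for \eqref{eq:auto-leading-term}, the composition law obtained from the universal property together with the identity $(\dH\beta)\circ\phi=\phi\circ\dH(\beta\circ\phi)$ (you inline the argument of Theorem \ref{satz:change-coboundary-equals-auto} rather than citing it, and recover the inverse $\alpha^{\autoconc-1}=-\alpha\circ\auto{\alpha}$ from it), and the same reduction of injectivity of $\auto{\cdot}$ to injectivity of $\dH$, which you establish by testing $\dH\alpha=0$ on products with $\tree{+-}$, projecting onto $\K\tree{+-}$ and iterating within a fixed degree --- exactly the paper's mechanism with its test elements ${\tree{+-}}^n\f$. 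The only superfluous ingredient is the remark that $\cored\bigl(B_+(\f)\bigr)\ni\f\tp\tree{+-}$, which by itself cannot isolate $\alpha$ on trees (it puts $\tree{+-}$, not a high-degree forest, in the slot where $\alpha$ acts), but this is harmless since your product test elements $\tree{+-}\cdot\f$ already handle trees.
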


\begin{proof}
	Statement \eqref{eq:auto-leading-term} is an immediate consequence of $\dH\alpha (H_R^n) \subseteq H_R^n$: Starting from $\auto{\alpha}\left( \tree{+-} \right) = \tree{+-}$, suppose inductively \eqref{eq:auto-leading-term} to hold for forests $\f, \f'\in \forests$. Then it obviously also holds for $\f \cdot \f'$ as well and even so for $B_+(\f)$ through
	\begin{equation*}
		\auto{\alpha} \circ B_+(\f)
		=\left[ B_+ + \dH\alpha \right] \circ \auto{\alpha} (\f)
		\subseteq \left[ B_+ + \dH\alpha \right] \left( \f + H_R^{\abs{\f}-1} \right)
		\subseteq B_+(\f) + H_R^{\abs{\f}}.
	\end{equation*}
	This already implies bijectivity of $\auto{\alpha}$, but applying \eqref{eq:change-coboundary-equals-auto} to $L=B_+ + \dH\alpha$ and $\auto{\tilde{\alpha}}$ for $\tilde{\alpha} \defas - \alpha \circ \auto{\alpha}$ shows $\id = \auto{\alpha} \circ \auto{\tilde{\alpha}}$ directly. We deduce bijectivity of all $\auto{\alpha}$ and thus $\auto{\alpha}\in\Aut_{\text{Hopf}}(H_R)$ with the inverse $\auto{\alpha}^{-1}=\auto{\tilde{\alpha}}$. Now \eqref{eq:auto-composition} follows from
	\begin{equation*}
		\auto{\left[\alpha+\beta \,\circ\, \auto{\alpha}^{-1}\right]}
		= \unimor{\left[B_+ + \dH \alpha \right] + \dH \left( \beta \,\circ\, \auto{\alpha}^{-1} \right)}
		\urel{\eqref{eq:change-coboundary-equals-auto}} \unimor{\left[B_+ + \dH \alpha\right]} \circ \auto{\big[\beta \ \circ\ \auto{\alpha}^{-1} \ \circ\ \unimor{\left(B_+ + \dH \alpha\right)}\big]}
		= \auto{\alpha} \circ \auto{\beta}.
	\end{equation*}
	Finally consider $\alpha,\beta \in H_R'$ with $\auto{\alpha} = \auto{\beta}$, then
	$
		0
		= (\auto{\alpha} - \auto{\beta}) \circ B_+
		= \dH \circ (\alpha - \beta) \circ \auto{\alpha}
	$
	reduces the injectivity of $\auto{\cdot}$ to that of $\dH$. But if $\dH\alpha = 0$, for all $n\in\N_0$
	\begin{equation*}
		0
		= \dH\alpha \left( {\tree{+-}}^{n+1} \right)
		= \sum_{i=0}^n \binom{n+1}{i} \alpha \left( {\tree{+-}}^i \right) {\tree{+-}}^{n+1-i}
		\quad\text{implies}\quad
		\alpha\left( {\tree{+-}}^n \right) = 0.
	\end{equation*}
	Given an arbitrary forest $\f\in\forests$ and $n\in\N$, the expression
	\begin{align*}
		0 
		&= \dH\alpha \left( {\tree{+-}}^n \f \right)
		= \f \underbrace{\alpha\left( {\tree{+-}}^n \right)}_0 + \sum_{\f} \sum_{i=0}^n \binom{n}{i}{\tree{+-}}^i \f' \alpha\left( {\tree{+-}}^{n-i} \f'' \right) 
		  + \sum_{i=1}^n \binom{n}{i} \bigg[ {\tree{+-}}^i \f \underbrace{\alpha\left( {\tree{+-}}^{n-i} \right)}_0 + {\tree{+-}}^i \alpha\left( \f {\tree{+-}}^{n-i} \right) \bigg]
	\end{align*}
	simplifies upon projection onto $\K \tree{+-}$ to
	$
		\alpha\left( \f {\tree{+-}}^{n-1} \right)
		= - \frac{1}{n}\sum_{\f:\ \f'=\tree{+-}} \alpha\left( {\tree{+-}}^n \f'' \right)
	$.
	Iterating this formula exhibits $\alpha(\f)$ as a scalar multiple of $\alpha\big( {\tree{+-}}^{\abs{\f}} \big)=0$ and proves $\alpha=0$.
\end{proof}

\subsection{Decorated rooted trees}
\label{sec:decorations}

Our observations generalize straight forwardly to the Hopf algebra $H_R(\decor)$ of rooted trees with decorations drawn from a set $\decor$. In this case, the universal property assigns to each $\decor$-indexed family $L_{\cdot}\!:\ \decor \rightarrow \End(\alg)$ the unique algebra morphism 
	\begin{equation*}
			\unimor{L_{\cdot}}\!: H_R(\decor) \rightarrow \alg
			\quad\text{such that}\quad
			\unimor{L_{\cdot}} \circ B_+^d = L_d \circ \unimor{L_{\cdot}}
			\quad\text{for any $d \in \decor$}.
	\end{equation*}
For cocycles $\im L_{\cdot} \subseteq HZ^1_{\counit}(\alg)$ this is a morphism of bialgebras and even of Hopf algebras (should $\alg$ be Hopf). For a family $\alpha_{\cdot}\!\!:\ \decor \rightarrow H_R'(\decor)$ of functionals, setting $L_d^{\alpha_{\cdot}} \defas B_+^d + \dH \alpha_d $ yields an automorphism $\auto{\alpha_{\cdot}} \defas \unimor{L_{\cdot}^{\alpha_{\cdot}}}$ of the Hopf algebra $H_R(\decor)$. Theorems \ref{satz:change-coboundary-equals-auto} and \ref{satz:auto} generalize in the obvious way.

In view of the Feynman rules, decorations $d$ denote different graphs into which $B_+^d$ inserts a subdivergence. Hence we gain a family of Mellin transforms $F_{\cdot}$ and theorem \ref{satz:toymodel-universal} generalizes straightforwardly as $\toyphy \circ B_+^d = P\circ F_d(-\partial_{\scalelog}) \circ \toyphy$.

\subsection{Subleading corrections under variations of Mellin transforms}

As an application of \eqref{eq:change-coboundary-equals-auto} consider a change of the Mellin transform $F$ to a different $F'$ that keeps $c_{-1}$ fixed but alters the other coefficients $\coeff{n}$. With $\alpha \defas \toyform' - \toyform$,
	\begin{equation*}
			{\toyphy}' 
			= \unimor{\toycc'}
			= \unimor{\toycc + \dH\alpha}
			= \unimor{\toycc} \circ \auto{\left[ \alpha\, \circ\, \unimor{\toycc} \right]}
			= \toyphy \circ \auto{\left[ \alpha\, \circ\, \toyphy \right]}
	\end{equation*}
translates the new renormalized Feynman rules ${\toyphy}'$ into the original $\toyphy$.
For $\coeff{-1}=-1$, this relates $\toyphy$ to $\intrules=\unimor{\polyint}$ using example \ref{ex:auto} together with $\toyform \circ \intrules (\f) = (-1)^{\abs{\f}} \frac{\abs{\f}!}{\f!} \coeff{\abs{\f}}$ as
\begin{align*}
	\toyphy \left( \tree{+-} \right)
		&= \x
		 = \intrules \left( \tree{+-} \right)
		 = \intrules \circ \auto{\toyform \circ \intrules} \left( \tree{+-} \right)
		,\qquad
	\toyphy \left( \tree{++--} \right)
		 = \frac{\x^2}{2} + \coeff{0} \x
		 = \intrules \left\{ \tree{++--} + \toyform(1) \tree{+-} \right\}
		 = \intrules \circ \auto{\toyform \circ \intrules} \left( \tree{++--} \right)
		 ,\\
	\toyphy \left( \tree{+++---} \right)
		&= \frac{\x^3}{6} + \x^2 \coeff{0} + \x (\coeff[2]{0}-\coeff{1})
		= \intrules \left\{ \tree{+++---} + 2\coeff{0} \tree{++--} + \left[\coeff[2]{0}-\coeff{1} \right] \tree{+-} \right\}
		= \intrules \circ \auto{\toyform \circ \intrules} \left( \tree{+++---} \right)
		\quad\text{and}\\
	\toyphy \left( \tree{++-+--} \right)
		&= \frac{\x^3}{3} + \coeff{0} \cdot \x^2 - 2\coeff{1} \cdot \x
		=\intrules \Big\{ \tree{++-+--} + \coeff{0} \tree{+-}\tree{+-} - 2\coeff{1} \tree{+-} \Big\}
		= \intrules \circ \auto{\toyform \circ \intrules} \left( \tree{++-+--} \right).
\end{align*}
\begin{korollar}\label{satz:change-mellin-auto}
	The new correlation function $\toyphy \circ X = \intrules \circ \widetilde{X}$ equals the original $\intrules$ applied to a modified perturbation series $\widetilde{X}(\coupling)$, fulfilling a Dyson-Schwinger equation differing by coboundaries. By \eqref{eq:auto-leading-term} the leading logs coincide and explicitly
	\begin{align*}
		\widetilde{X} (\coupling)
		\defas \auto{\toyform \circ \intrules} \circ X(\coupling)
		= \1 + 	\sum_{n\in\N} \coupling^n \left( B_n + \dH \toyform_n \right) \left( \widetilde{X}(\coupling)^{1+n\powdep} \right).
	\end{align*}
\end{korollar}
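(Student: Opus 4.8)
\emph{The key identity.} I keep the normalization $\coeff[(n)]{-1}=-1$ of the preceding paragraph (so that $-\coeff[(n)]{-1}\polyint=\polyint$ is the leading-$\log$ cocycle underlying $\intrules=\unimor{\polyint}$) and work throughout in the decorated Hopf algebra $H_R(\decor)$ needed to host the family $B_n=B_+^n$. Write $\chi\defas\auto{\left[\toyform_\cdot\circ\intrules\right]}$ for the family of functionals $d\mapsto\toyform_d\circ\intrules\in H_R'(\decor)$. By the decorated form of Theorem~\ref{satz:toymodel-universal} together with \eqref{eq:cocycle-toymodel} one has $\toyphy=\unimor{L_\cdot}$ with $L_d=-\coeff[(d)]{-1}\polyint+\dH\toyform_d=\polyint+\dH\toyform_d$, and the decorated form of Theorem~\ref{satz:change-coboundary-equals-auto}, applied with the constant cocycle family $\polyint$ in place of $L$ and the functionals $\toyform_d$, gives $\unimor{L_\cdot}=\intrules\circ\auto{\left[\toyform_\cdot\circ\intrules\right]}$, i.e.\ $\toyphy=\intrules\circ\chi$. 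Extending all these Hopf algebra morphisms coefficientwise to $H_R(\decor)[[\coupling]]$ and putting $\widetilde{X}(\coupling)\defas\chi\circ X(\coupling)$ yields the first assertion at once: $\toyphy\circ X(\coupling)=\intrules\circ\chi\circ X(\coupling)=\intrules\circ\widetilde{X}(\coupling)$.

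\emph{The modified Dyson--Schwinger equation.} Next I apply the automorphism $\chi$ to \eqref{eq:dse}. Three ingredients are used: $\chi(\1)=\1$ by unitality; $\chi\circ B_n=\left(B_n+\dH(\toyform_n\circ\intrules)\right)\circ\chi$ by the defining relation of $\auto{\cdot}$ in Definition~\ref{def:auto} (decorated case) --- this is the coboundary written $\dH\toyform_n$ in the statement, with $\intrules$ suppressed; and $\chi\big(X(\coupling)^{1+n\powdep}\big)=\widetilde{X}(\coupling)^{1+n\powdep}$, which holds because $\chi$ is an algebra morphism and $X(\coupling)-\1$ has no constant term (as $x_0=\1$), so the generalized binomial power is preserved. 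Substituting these into \eqref{eq:dse} term by term turns it into
\[
	\widetilde{X}(\coupling)=\1+\sum_{n\in\N}\coupling^n\big(B_n+\dH(\toyform_n\circ\intrules)\big)\big(\widetilde{X}(\coupling)^{1+n\powdep}\big),
\]
the claimed equation. Since $B_n+\dH(\toyform_n\circ\intrules)$ is again a $1$-cocycle, this is a legitimate combinatorial Dyson--Schwinger equation, so by Lemma~\ref{satz:dse-coprod} it has a unique solution, which must therefore be $\widetilde{X}(\coupling)$.

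\emph{Leading logs and the main obstacle.} Property \eqref{eq:auto-leading-term} of Theorem~\ref{satz:auto} gives $\chi(\f)\in\f+H_R^{\abs{\f}-1}$ for every forest, hence $\widetilde{x}_n\in x_n+H_R^{D_n-1}$ with $D_n$ the top grade occurring in $x_n$. Because $\intrules$ maps $H_{R,m}$ into $\K\x^m$ it is degree-preserving, so the top-degree-in-$\x$ part of $\toyphy(x_n)=\intrules(\widetilde{x}_n)$ agrees with that of $\intrules(x_n)$; summing over $\coupling$-orders shows that the leading-$\log$ part of the correlation function is not affected by replacing $X$ with $\widetilde{X}$, in line with Corollary~\ref{satz:toymodel-leading-log}. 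I do not expect a genuine obstacle here: all the substance already sits in the earlier theorems, and the only points calling for attention are the coefficientwise extension of the Hopf algebra morphisms to $H_R(\decor)[[\coupling]]$ and the compatibility of $\chi$ with the formal power $X(\coupling)^{1+n\powdep}$, both of which follow routinely from $x_0=\1$.
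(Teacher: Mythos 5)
Your proposal is correct and follows the paper's intended route: the identity $\toyphy=\intrules\circ\auto{\left[\toyform\circ\intrules\right]}$ from the (decorated) theorem \ref{satz:change-coboundary-equals-auto} under the normalization $\coeff{-1}=-1$, then applying that automorphism to \eqref{eq:dse} via its defining intertwining relation with the $B_n$, its multiplicativity (for the powers $X^{1+n\powdep}$), and \eqref{eq:auto-leading-term} for the leading logs. The paper leaves exactly these steps implicit, so there is nothing essentially different in your argument --- it just spells them out, including the harmless reading of $\dH\toyform_n$ as $\dH(\toyform_n\circ\intrules)$.
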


\section{Locality, finiteness and minimal subtraction}
\label{sec:ms}

Consider the regulated but unrenormalized Feynman rules $\toy$. Now setting $\alg\defas\C[\reg^{-1},\reg]]$ and $\phi\defas\toy_1\in\chars{H_R}{\alg}$, \eqref{eq:toymodel-mellin} fixes the scale dependence $\toy_s = \phi \circ \gradAut_{-s\reg}$.
\begin{proposition}\label{satz:finiteness-algebraic}
	For any character $\phi\in\chars{H_R}{\alg}$, the following are equivalent:
	\begin{enumerate}
		\item $\phi^{\convolution -1} \convolution (\phi \circ Y) = \phi \circ (S\convolution Y)$ maps into $\frac{1}{\reg} \C[[\reg]]$, so $\lim\limits_{\reg \rightarrow 0} \phi^{\convolution -1} \convolution (z\phi \circ Y)$ exists.
		\item For every $n\in\N_0$, $\phi^{\convolution -1} \convolution (\phi\circ Y^n) = \phi \circ (S\convolution Y^n)$ maps into $\reg^{-n}\C[[\reg]]$.
		\item For any $s\in\K$, 
			$
				\phi^{\convolution -1} \convolution (\phi\circ\theta_{s\reg}) 
				= \phi\circ (S\convolution \theta_{s\reg})
			$
			maps into $\C[[\reg]]$.
	\end{enumerate}
\end{proposition}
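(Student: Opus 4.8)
The plan is to first strip away the notation. Since $\phi$ is a character, $\phi^{\convolution-1}=\phi\circ S$ and $\phi$ is an algebra morphism, so for every $F\in\End(H_R)$
\[
	\phi^{\convolution-1}\convolution(\phi\circ F)=(\phi\circ S)\convolution(\phi\circ F)=\phi\circ m_{H_R}\circ(S\otimes F)\circ\Delta=\phi\circ(S\convolution F),
\]
which shows the two forms written in each of the three items agree. Abbreviating $\psi_n\defas\phi^{\convolution-1}\convolution(\phi\circ Y^n)$ (so $\psi_0=e$), condition $(1)$ reads ``$\psi_1$ maps $H_R$ into $\tfrac1\reg\C[[\reg]]$'' --- equivalently $\reg\psi_1$ maps into $\C[[\reg]]$, i.e.\ has a limit as $\reg\to0$, because $\alg=\C[\reg^{-1},\reg]]$ has only finite principal parts --- condition $(2)$ reads ``$\psi_n$ maps into $\reg^{-n}\C[[\reg]]$ for all $n$'', and condition $(3)$ reads ``$\phi\circ(S\convolution\theta_{s\reg})$ maps into $\C[[\reg]]$ for all $s\in\K$''. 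The implication $(2)\Rightarrow(1)$ is then the instance $n=1$, so it remains to prove $(1)\Rightarrow(2)$ and $(2)\Leftrightarrow(3)$.

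For $(1)\Rightarrow(2)$ the key is the purely algebraic recursion $\psi_n=\psi_1\convolution\psi_{n-1}+\psi_{n-1}\circ Y$ for $n\geq1$, which I would derive from two structural facts: $\counit\circ Y=0$ (hence $e\circ Y=0$), and $\Delta\circ Y=(Y\otimes\id+\id\otimes Y)\circ\Delta$ (since $\Delta$ preserves the total degree), the latter making $\lambda\mapsto\lambda\circ Y$ a derivation of the algebra $\bigl(\Hom(H_R,\alg),\convolution\bigr)$. Applying this derivation to $\phi\convolution\phi^{\convolution-1}=e$ gives $\phi^{\convolution-1}\circ Y=-\psi_1\convolution\phi^{\convolution-1}$, and applying it to $\psi_{n-1}=\phi^{\convolution-1}\convolution(\phi\circ Y^{n-1})$, using $(\phi\circ Y^{n-1})\circ Y=\phi\circ Y^n$ and the previous identity, yields the recursion. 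The induction step is then immediate: $\psi_{n-1}\circ Y$ acts on a homogeneous $x$ as multiplication by the scalar $\abs x$, so it maps into the same $\reg$-power space as $\psi_{n-1}$, namely $\reg^{-(n-1)}\C[[\reg]]$ by hypothesis, while $\psi_1\convolution\psi_{n-1}$ maps into $\reg^{-1}\C[[\reg]]\cdot\reg^{-(n-1)}\C[[\reg]]=\reg^{-n}\C[[\reg]]$, and so does the sum.

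For $(2)\Leftrightarrow(3)$ I would expand $\theta_{s\reg}=\sum_{m\geq0}\tfrac{(s\reg)^m}{m!}Y^m$ to obtain, on every $x\in H_R$,
\[
	\bigl[\phi\circ(S\convolution\theta_{s\reg})\bigr](x)=\sum_{m\geq0}\frac{(s\reg)^m}{m!}\,\psi_m(x)\in\alg,
\]
where, reorganized by powers of $\reg$, only finitely many $m$ contribute to each power because all $\psi_m(x)$ lie in one fixed $\reg^{-N}\C[[\reg]]$ (with $N=N(x)$ the largest pole order among the finitely many $\phi(Sx_1)\phi(x_2)$ occurring, independent of $m$). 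If $(2)$ holds then $\tfrac{(s\reg)^m}{m!}\psi_m(x)\in\tfrac{s^m}{m!}\C[[\reg]]$, so this series has no negative powers of $\reg$, which is $(3)$. Conversely, if $(3)$ holds then for each $k\leq-1$ the coefficient of $\reg^k$ in the series is a polynomial in $s$ with coefficients $\tfrac1{m!}[\reg^{k-m}]\psi_m(x)$, vanishing for all $s\in\K$; as $\K$ is infinite of characteristic $0$ each such coefficient vanishes, i.e.\ $\psi_m(x)\in\reg^{-m}\C[[\reg]]$, which is $(2)$. Hence $(1)\Leftrightarrow(2)\Leftrightarrow(3)$.

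The one genuinely delicate point is the derivation identity behind $(1)\Rightarrow(2)$: one must check carefully that $\lambda\mapsto\lambda\circ Y$ really is a $\convolution$-derivation (this is exactly the coderivation property of $Y$) and that $e\circ Y=0$, since it is these two facts that allow one to differentiate $\phi^{\convolution-1}$ and telescope the powers of $Y$ with no analytic input at all; the remaining estimates are routine pole-counting. A more conceptual route to the same recursion, displaying it as an infinitesimal renormalization-group flow, is to set $g_t\defas\phi^{\convolution-1}\convolution(\phi\circ\theta_t)$, note $\phi\circ\theta_t=\phi\convolution g_t$, derive the flow equation $\partial_t g_t=\psi_1\convolution g_t+g_t\circ Y$ from the same two facts, and read off $\psi_n=\partial_t^n g_t\big|_{t=0}$ (legitimate because each $g_t(x)$ is a finite sum of exponentials in $t$), with condition $(3)$ being the specialization $t=s\reg$; I would present the algebraic recursion as the main line and mention this flow picture only as motivation.
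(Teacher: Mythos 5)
Your proof is correct and its core coincides with the paper's: the nontrivial implication $1.\Rightarrow 2.$ rests on exactly the recursion $\psi_{n+1}=\psi_1\convolution\psi_n+\psi_n\circ Y$ that the paper writes as $\phi\circ(S\convolution Y^{n+1})=\phi\circ(S\convolution Y^n)\circ Y+[\phi\circ(S\convolution Y)]\convolution[\phi\circ(S\convolution Y^n)]$, and both derivations boil down to the coderivation property of $Y$ — you package it as the $\convolution$-derivation $\lambda\mapsto\lambda\circ Y$ applied to $\phi\convolution\phi^{\convolution-1}=e$, the paper as the antipode identity $(S\circ Y)\convolution\id=-S\convolution Y$. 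The only difference is completeness: the paper delegates the remaining implications to the literature, while you prove $2.\Leftrightarrow 3.$ yourself by expanding $\theta_{s\reg}$ and comparing coefficients of $\reg^k$ as polynomials in $s$ (using that $\K$ is infinite), which makes your argument self-contained without changing the underlying method.
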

\begin{proof}
	We refer to the accounts in \cite{Manchon,Kreimer:ExactDSE,CK:RH2}, however only $1.\Rightarrow 2.$ is non-trivial and
	\begin{equation*}
		\phi\circ \left( S\convolution Y^{n+1} \right)
		= \phi\circ (S\convolution Y^n) \circ Y 
			+ \left[\phi\circ(S\convolution Y)\right] \convolution \left[ \phi\circ (S\convolution Y^n) \right]
	\end{equation*}
	yields an inductive proof. It exploits 
	$
		(S\circ Y)\convolution\id
		= -S\convolution Y
	$
	in the formula ($\alpha$ arbitrary)
	\begin{equation*}
		S\convolution (\alpha\circ Y) -	(S\convolution\alpha)\circ Y
		= - (S\circ Y)\convolution\alpha
		= - \left[ (S\circ Y)\convolution\id \right]\convolution S \convolution \alpha \nonumber
	  =  S\convolution Y \convolution S \convolution \alpha. \qedhere
	\end{equation*}
\end{proof}
Note that condition 3. is equivalent to the finiteness \ref{satz:finiteness} of the physical limit $\toyphy$ as
\begin{equation*}
	\toyR[s]
	= \toy_{\rp}^{\convolution -1} \convolution \toy_s
	= \phi \circ \left[ (S\circ\gradAut_{-\reg\ln\rp}) \convolution \gradAut_{-\reg s}  \right]
	= \phi \circ (S\convolution \gradAut_{-\reg\ln\frac{s}{\rp}}) \circ \gradAut_{-\reg\ln\rp}.
\end{equation*}
\begin{korollar}
	The anomalous dimension can be obtained from the $\frac{1}{\reg}$-pole coefficients
	\begin{equation}\label{eq:toylog-pole}
		\toylog
		= -\partial_0 \circ \toyphy
		= -\partial_0 \circ \lim_{\reg\rightarrow 0} \phi \circ (S\convolution\gradAut_{-\reg \x})
		= \Res \left[ \phi\circ (S\convolution Y) \right].
	\end{equation}
\end{korollar}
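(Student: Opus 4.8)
The statement is the chain of three equalities, and the plan is to treat them one at a time. The first, $\toylog=-\partial_0\circ\toyphy$, is literally the definition of the anomalous dimension from corollary~\ref{korollar:toylog}, so nothing is required there. For the middle equality I would unwind the scale dependence: \eqref{eq:toymodel-mellin} says $\toy_s(\f)=s^{-\reg\abs{\f}}\phi(\f)$ with $\phi=\toy_1$, which is precisely $\toy_s=\phi\circ\gradAut_{-\reg\ln s}$ since $\gradAut_t$ scales $H_{R,n}$ by $e^{nt}$. Substituting into $\toyR[s]=\toy_{\rp}^{\convolution-1}\convolution\toy_s$, specialising to $\rp=1$ (so $\gradAut_{-\reg\ln\rp}=\id$ and $\scalelog=\ln s$), and pulling the character $\phi$ out of the convolution---using that $\phi$ is an algebra morphism, hence $(\phi\circ S)\convolution(\phi\circ\gradAut_{-\reg\ln s})=\phi\circ(S\convolution\gradAut_{-\reg\ln s})$---reproduces the identity $\toyR[s]=\phi\circ(S\convolution\gradAut_{-\reg\ln s})$ displayed just before the corollary. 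Taking the physical limit $\reg\to0$, which exists by proposition~\ref{satz:finiteness}, and renaming the scale variable $\ln s=\x$ gives $\toyphy=\lim_{\reg\to0}\phi\circ(S\convolution\gradAut_{-\reg\x})$.

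For the last equality I would expand $\gradAut_{-\reg\x}=\exp(-\reg\x Y)=\sum_{n\ge0}\tfrac{(-\reg\x)^n}{n!}Y^n$, use the $\K$-linearity of $\psi\mapsto S\convolution\psi$ together with the antipode identity $S\convolution\id=\unit\circ\counit$, and apply $\phi$, obtaining $\phi\circ(S\convolution\gradAut_{-\reg\x})=\unit\circ\counit+\sum_{n\ge1}\tfrac{(-\reg\x)^n}{n!}\,\phi\circ(S\convolution Y^n)$. Since $-\partial_0=-\ev_0\circ\partial_{\x}$ merely extracts $(-1)$ times the coefficient of $\x^1$, the constant term disappears and every summand with $n\ge2$ (being proportional to $\x^n$) disappears as well, so $-\partial_0\circ\phi\circ(S\convolution\gradAut_{-\reg\x})=\reg\cdot\phi\circ(S\convolution Y)$ identically in $\reg$. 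Because the physical limit of proposition~\ref{satz:finiteness} is taken coefficientwise in $\x$, extracting the linear coefficient commutes with $\reg\to0$; and since finiteness is exactly condition~3 of proposition~\ref{satz:finiteness-algebraic}, its equivalent condition~1 yields $\phi\circ(S\convolution Y)\in\tfrac1\reg\C[[\reg]]$, whence $\lim_{\reg\to0}\reg\cdot\phi\circ(S\convolution Y)=\Res[\phi\circ(S\convolution Y)]$. Assembling the three pieces gives $\toylog=\Res[\phi\circ(S\convolution Y)]$.

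I do not expect a genuine obstacle here---the argument is essentially bookkeeping on top of propositions~\ref{satz:finiteness} and~\ref{satz:finiteness-algebraic}. The two points that need a little care are keeping the numeric scale $s$ apart from the formal polynomial variable $\x$ (and consistently normalising $\rp=1$), and justifying the interchange of the $\x^1$-coefficient extraction with the limit $\reg\to0$; the latter is harmless because proposition~\ref{satz:finiteness} already guarantees the limit is a genuine polynomial, with the $\reg$-poles cancelling separately in each degree of $\x$, so that the higher $Y^n$-terms---which a priori carry poles of order up to $n$---contribute only at order $\x^n$ and never to the linear part.
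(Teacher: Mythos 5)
Your proposal is correct and follows essentially the same route the paper intends: the identity $\toyR[s]=\phi\circ(S\convolution\gradAut_{-\reg\ln\frac{s}{\rp}})\circ\gradAut_{-\reg\ln\rp}$ derived from \eqref{eq:toymodel-mellin}, the definition $\toylog=-\partial_0\circ\toyphy$, extraction of the $\x$-linear coefficient from the expansion of $\gradAut_{-\reg\x}$, and condition 1 of proposition \ref{satz:finiteness-algebraic} to identify $\lim_{\reg\to0}\reg\,\phi\circ(S\convolution Y)$ with the residue. Your care about commuting the coefficient extraction with $\reg\to0$ (poles cancelling degree by degree in $\x$) is exactly the point that makes the corollary immediate.
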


The minimal subtraction scheme $\Rms$ projects onto the pole parts such that $\alg = \alg_- \oplus \alg_+$ where $\alg_- \defas \reg^{-1}[\reg^{-1}]$ and $\alg_+ \defas [[\reg]]$. Though it renders finiteness trivial, its counterterms might depend on the scale $s$ and violate locality. So from \cite{CK:RH2} we need
\begin{definition}
	A Feynman rule $\phi\in\chars{H_R}{\alg}$ is called \emph{local} iff in the minimal subtraction scheme, the counterterm $(\phi\circ\theta_{s\reg})_-$ is independent of $s\in\K$.
\end{definition}
\begin{proposition}
	Locality of $\phi\in\chars{H_R}{\alg}$ is equivalent to the conditions of proposition \ref{satz:finiteness-algebraic}.
\end{proposition}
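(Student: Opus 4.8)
The plan is to reduce the assertion to condition 3 of proposition \ref{satz:finiteness-algebraic} (which that proposition already shows equivalent to conditions 1 and 2). Abbreviate $\psi_s \defas \phi \circ \gradAut_{s\reg}$; since $\gradAut_{s\reg}$ is multiplicative and a morphism of coalgebras, $\psi_s \in \chars{H_R}{\alg}$, with $\psi_0 = \phi$ because $\gradAut_0 = \id$. The object governed by condition 3 is precisely $\phi^{\convolution -1} \convolution \psi_s = \phi \circ (S \convolution \gradAut_{s\reg})$, so I must show that $\phi$ is local if and only if $\phi^{\convolution -1} \convolution \psi_s$ maps $H_R$ into $\alg_+ = \C[[\reg]]$ for every $s \in \K$. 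I will use the standard facts about minimal subtraction: both $\alg_+$ and $\alg_- = \reg^{-1}\C[\reg^{-1}]$ are subalgebras of $\alg$ (unital, resp.\ non-unital), $\Rms$ is the associated Rota--Baxter operator, and the Birkhoff decomposition $\phi = \phi_-^{\convolution -1} \convolution \phi_+$ is the \emph{unique} factorization into $\K$-linear maps $A,B$ with $A(\1) = B(\1) = \1_{\alg}$, $A(\ker\counit) \subseteq \alg_-$ and $B(\ker\counit) \subseteq \alg_+$. Because the antipode preserves $\ker\counit$ and $\alg_{\pm}$ is closed under multiplication, the convolution inverses $\phi_{\pm}^{\convolution -1}$ inherit the support behaviour of $\phi_{\pm}$, and a convolution product of two maps that agree with $e$ on $\1$ and send $\ker\counit$ into $\alg_{\pm}$ does so again.

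The core will be a lemma: for any character $\psi \in \chars{H_R}{\alg}$ one has $\psi_- = \phi_-$ if and only if $\phi^{\convolution -1} \convolution \psi$ takes $\ker\counit$ into $\alg_+$. For the forward implication I would substitute $\phi^{\convolution -1} = \phi_+^{\convolution -1} \convolution \phi_-$ and $\psi = \phi_-^{\convolution -1} \convolution \psi_+$ to obtain
\begin{equation*}
	\phi^{\convolution -1} \convolution \psi
	= \phi_+^{\convolution -1} \convolution \phi_- \convolution \phi_-^{\convolution -1} \convolution \psi_+
	= \phi_+^{\convolution -1} \convolution \psi_+,
\end{equation*}
which is supported in the subalgebra $\alg_+$. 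For the converse I would write $\psi = \phi \convolution (\phi^{\convolution -1} \convolution \psi) = \phi_-^{\convolution -1} \convolution \bigl[ \phi_+ \convolution (\phi^{\convolution -1} \convolution \psi) \bigr]$; the bracketed factor sends $\1 \mapsto \1_{\alg}$ and $\ker\counit$ into $\alg_+$ (both $\phi_+$ and $\phi^{\convolution -1} \convolution \psi$ do so and $\alg_+$ is a unital subalgebra), while $\phi_-$ sends $\1 \mapsto \1_{\alg}$ and $\ker\counit$ into $\alg_-$, so uniqueness of the Birkhoff decomposition forces $\psi_- = \phi_-$.

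Applying the lemma to $\psi = \psi_s$ for every $s$ then finishes the proof: the defining condition of locality, ``$(\phi \circ \gradAut_{s\reg})_-$ is independent of $s$'', is the same as ``$(\psi_s)_- = (\psi_0)_- = \phi_-$ for all $s$'', and by the lemma this holds exactly when $\phi^{\convolution -1} \convolution \psi_s = \phi \circ (S \convolution \gradAut_{s\reg})$ has no pole part for every $s$, i.e.\ precisely condition 3, hence all conditions of proposition \ref{satz:finiteness-algebraic}. I expect the only genuinely delicate point to be the uniqueness of the Birkhoff decomposition used in the converse of the lemma: it must be invoked for factorizations $\psi = A^{\convolution -1} \convolution B$ whose factors are \emph{not} assumed to be characters --- indeed $\phi^{\convolution -1} \convolution \psi$ typically is not one, $H_R$ being non-cocommutative --- but this is still valid, since if $A_1^{\convolution -1}\convolution B_1 = A_2^{\convolution -1}\convolution B_2$ with all the stated support and unit properties, then $A_1 \convolution A_2^{\convolution -1}$ is supported in $\alg_-$, equals $B_1 \convolution B_2^{\convolution -1}$ which is supported in $\alg_+$, and $\alg_- \cap \alg_+ = \set{0}$ together with the unit value forces both sides to equal $e$. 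Everything else is the routine convolution-algebra bookkeeping collected in the first paragraph.
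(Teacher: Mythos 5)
Your proof is correct and follows essentially the same route as the paper: both directions rest on rewriting $\phi\circ\gradAut_{s\reg}$ through the factors $\phi_\pm$ and invoking uniqueness of the Birkhoff decomposition, with condition 3 of proposition \ref{satz:finiteness-algebraic} as the pivot. The only difference is presentational --- you package the argument as a symmetric lemma about when $\psi_-=\phi_-$ and spell out the uniqueness statement for non-character factors, which the paper uses implicitly via the $\Rms$ computation.
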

\begin{proof}
	In case of \ref{satz:finiteness-algebraic},
	$
		\phi \circ \gradAut_{s\reg}
		= (\phi_-)^{\convolution -1} \convolution \big\{
				\phi_+ \convolution
				[ \phi^{\convolution -1} \convolution \left( \phi\circ\gradAut_{s\reg} \right) ]
				\big\}
	$
	is a Birkhoff decomposition by condition 3. such that $\left( \phi \circ \gradAut_{s\reg} \right)_- = \phi_-$ from uniqueness. Conversely, for local $\phi$,
	\begin{equation*}
		0 
		= \Rms \circ \left(\phi \circ \gradAut_{s\reg}\right)_+
		= \Rms \circ \left[
			\left(
				\phi \circ \gradAut_{s\reg}
			\right)_-
			\convolution
				(\phi \circ \gradAut_{s\reg})
			\right]
		=	\Rms \circ \left[
		\phi_- \convolution (\phi \circ \gradAut_{s\reg})
			\right]
	\end{equation*}
	implies
	$
		\C[[\reg]]
		= \ker\Rms
		\supseteq
		\im  \phi_- \convolution \left( \phi \circ \gradAut_{s\reg} \right)
	$
	and convolution with $\phi_+^{\convolution-1} = \phi^{\convolution-1} \convolution \phi_-^{\convolution-1}:\! H_R \rightarrow \C[[\reg]]$ yields condition 3. of \ref{satz:finiteness-algebraic}.
\end{proof}
So we showed algebraically that the problems of finiteness in the {\momscheme} and locality in minimal subtraction are precisely the same. These schemes are related by
\begin{lemma}\label{satz:toyR-MStoyR}
	If $\MStoyR[s]$ denotes the $\Rms$-renormalized Feynman rule, then its scale dependence is given by $\toyphy$ through
$
		\MStoyR
		= \left( \momsch{\rp} \circ \MStoyR \right) \convolution \toyR
$
(as already exploited in \cite{BroadhurstKreimer:Auto}).
\end{lemma}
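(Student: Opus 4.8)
The plan is to recognise this identity as an immediate consequence of the Birkhoff decomposition \eqref{eq:birkhoff}, once one knows that in the minimal subtraction scheme the counterterm of $\toy_s$ does not depend on the scale $s$ --- which is precisely the locality of $\phi\defas\toy_1$ that the preceding subsection related to finiteness. Concretely one inserts the convolution unit $e=\toy_\rp\convolution\toy_\rp^{\convolution-1}$ into the Birkhoff formula $\MStoyR=\MStoyZ\convolution\toy_s$ and regroups the factors.

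\textbf{Step 1 (locality and scale-independence of the counterterm).} Recall from section \ref{sec:ms} that $\toy_s=\phi\circ\gradAut_{-\reg\ln s}$ with $\phi\defas\toy_1$. By proposition \ref{satz:finiteness} the physical limit of the kinetic scheme exists, which is condition~3 of proposition \ref{satz:finiteness-algebraic} for $\phi$; hence, by the proposition identifying those conditions with locality, $\phi$ is local. Therefore the minimal subtraction counterterm $(\toy_s)_-=(\phi\circ\gradAut_{-\reg\ln s})_-$ is independent of $s$; denote it $\MStoyZ\in\chars{H_R}{\alg}$ (it takes values in the $s$-independent subalgebra $\C[\reg^{-1},\reg]]\subset\alg$). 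The minimal subtraction Birkhoff decomposition \eqref{eq:birkhoff} of $\toy_s$ then reads $\toy_s=\MStoyZ^{\convolution-1}\convolution\MStoyR$, equivalently $\MStoyR=\MStoyZ\convolution\toy_s$.

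\textbf{Step 2 (evaluation at $\rp$ and regrouping).} Since $\momsch{\rp}=\ev_\rp$ is a character of $\alg$ that fixes $\MStoyZ$ (no $s$-dependence) and sends $\toy_s$ to $\toy_\rp$, applying it to $\MStoyR=\MStoyZ\convolution\toy_s$ yields $\momsch{\rp}\circ\MStoyR=\MStoyZ\convolution\toy_\rp$. Using associativity of $\convolution$ and $\toy_\rp\convolution\toy_\rp^{\convolution-1}=e$ we then get
\begin{equation*}
	\MStoyR
	= \MStoyZ\convolution\toy_s
	= \left(\MStoyZ\convolution\toy_\rp\right)\convolution\left(\toy_\rp^{\convolution-1}\convolution\toy_s\right)
	= \left(\momsch{\rp}\circ\MStoyR\right)\convolution\toyR,
\end{equation*}
where the last equality uses $\toyR=\toy_\rp^{\convolution-1}\convolution\toy_s$ from section \ref{sec:renormalization}.

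\textbf{Main obstacle.} The only non-formal ingredient is the scale-independence of $\MStoyZ$, i.e.\ locality of $\phi$; as the preceding subsection showed this to be equivalent to the already-proven finiteness of the kinetic scheme, no real difficulty remains and the rest is bookkeeping in the convolution group $\chars{H_R}{\alg}$. As a byproduct, taking $\reg\rightarrow0$ in the displayed identity gives $\MStoyphy=(\momsch{\rp}\circ\MStoyphy)\convolution\toyphy$, so the kinetic-scheme rules $\toyphy$ indeed carry the full scale dependence of the minimally subtracted ones.
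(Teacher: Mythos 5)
Your argument is correct and is essentially the paper's own proof: both hinge on the scale-independence (locality) of the minimal-subtraction counterterm $\MStoyZ$ together with $\momsch{\rp}$ being a character of $\alg$, so that $\MStoyR = \MStoyZ \convolution \toy_s$ regroups via $\toy_{\rp}\convolution\toy_{\rp}^{\convolution-1}=e$ into $\left(\momsch{\rp}\circ\MStoyR\right)\convolution\toyR$. The only difference is cosmetic --- the paper starts from the right-hand side and cancels, and is terser about locality, which you justify by the finiteness--locality equivalence established just before the lemma.
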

\begin{proof}
	Locality of the minimal subtraction counterterms $\phi_-$ implies $\momsch{\rp} \circ \phi_- = \phi_-$, hence
	\begin{equation*}
		\left( \momsch{\rp} \circ \MStoyR \right) \convolution \toyR
		= \left[ \momsch{\rp} \circ \left( \phi_- \convolution \toy \right) \right]
		\convolution \left( \momsch{\rp} \circ \toy \right)^{\convolution -1} \convolution \toy
		= \left( \momsch{\rp} \circ \phi_- \right) \convolution \toy
		= \MStoyR. \qedhere
	\end{equation*}
\end{proof}
The physical limit $\ev_{\ln s} \circ \MStoyphy = \lim\limits_{\reg\rightarrow 0} \MStoyR[s]$ yields polynomials $\MStoyphy$ and \ref{satz:toyR-MStoyR} becomes
\begin{korollar}
	The characters $\MStoyphy,\toyphy:\ H_R\rightarrow \K[\x]$ fulfil the relations
	\begin{equation}\label{eq:toyphy-MStoyphy}
		\MStoyphy
		= \left( \counit \circ \MStoyphy \right) \convolution \toyphy 
		,\quad\text{equivalently}\quad
		\Delta \circ \MStoyphy 
		= \left( \MStoyphy \tp \toyphy \right) \circ \Delta.
	\end{equation}
\end{korollar}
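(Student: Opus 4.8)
The plan is to obtain the corollary simply by passing to the physical limit $\reg\rightarrow0$ in lemma \ref{satz:toyR-MStoyR}, the only remaining work being to recast the resulting identity of $\K$-valued characters into the two stated identities of $\K[\x]$-valued morphisms.

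First I would recall lemma \ref{satz:toyR-MStoyR} as $\MStoyR = (\momsch{\rp}\circ\MStoyR)\convolution\toyR$, where $\momsch{\rp}\circ\MStoyR$ is the $\Rms$-renormalized rule evaluated at the fixed reference scale $\rp$ and hence carries no $s$-dependence. Now let $\reg\rightarrow0$. The left-hand side tends to $\ev_{\ln s}\circ\MStoyphy$ by definition of $\MStoyphy$, while in the convolution each factor converges individually: the first to $\ev_{\ln\rp}\circ\MStoyphy$ (the same physical limit, evaluated at $s=\rp$) and the second to $\ev_{\scalelog}\circ\toyphy$ with $\scalelog=\ln\tfrac{s}{\rp}$. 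Since $\convolution$ is computed on each $x\in H_R$ by the finite sum $\sum_x(\cdot)(x')\,(\cdot)(x'')$, the limit passes through the convolution and
\begin{equation*}
	\ev_{\ln s}\circ\MStoyphy
	= \bigl(\ev_{\ln\rp}\circ\MStoyphy\bigr)\convolution\bigl(\ev_{\ln\frac{s}{\rp}}\circ\toyphy\bigr).
\end{equation*}
Neither $\MStoyphy$ nor $\toyphy$ depends on the reference scale, so I would specialise to $\rp=1$; then $\ev_{\ln\rp}=\ev_0=\counit$, the right-hand side becomes $(\counit\circ\MStoyphy)\convolution(\ev_{\ln s}\circ\toyphy)$, and since $\ev_{\ln s}\!:\ \K[\x]\to\K$ is an algebra morphism (lemma \ref{satz:poly-characters}), hence compatible with $\convolution$, and restricts to the identity on constants, this equals $\ev_{\ln s}\circ\bigl[(\counit\circ\MStoyphy)\convolution\toyphy\bigr]$. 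Thus for each $x\in H_R$ the polynomials $\MStoyphy(x)$ and $\bigl[(\counit\circ\MStoyphy)\convolution\toyphy\bigr](x)$ agree at every $\ln s\in\K$; as $\K$ has characteristic zero these are infinitely many points, forcing $\MStoyphy=(\counit\circ\MStoyphy)\convolution\toyphy$.

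Finally I would pass to the equivalent coproduct form. As $\MStoyphy$ and $\counit$ are algebra morphisms, $\counit\circ\MStoyphy\in\chars{H_R}{\K}$ is a character, and $\toyphy$ is a morphism of coalgebras (corollary \ref{satz:toyphy-hopfmor}); inserting $\Delta\circ\toyphy=(\toyphy\tp\toyphy)\circ\Delta$ into $\Delta\circ\bigl[(\counit\circ\MStoyphy)\convolution\toyphy\bigr]$ and regrouping by coassociativity yields $\bigl[(\counit\circ\MStoyphy)\convolution\toyphy\bigr]\tp\toyphy$ precomposed with $\Delta$, i.e.\ $(\MStoyphy\tp\toyphy)\circ\Delta$; conversely, applying $\counit\tp\id$ to $\Delta\circ\MStoyphy=(\MStoyphy\tp\toyphy)\circ\Delta$ and using the counit axiom returns the convolution form. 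I expect the one genuinely delicate point to be the interchange of $\lim_{\reg\rightarrow0}$ with the evaluation at $s=\rp$ and with $\convolution$, which relies entirely on the already-granted finiteness of the physical limits $\MStoyphy$ and $\toyphy$; everything else is bookkeeping with the counit and the convolution product.
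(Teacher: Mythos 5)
Your proposal is correct and follows essentially the same route as the paper, which obtains the corollary precisely by passing to the physical limit in lemma \ref{satz:toyR-MStoyR} (the paper leaves the bookkeeping implicit, while you spell out the interchange of the limit with the pointwise-finite convolution, the specialisation $\rp=1$ so that $\ev_{\ln\rp}=\counit$, and the equivalence of the convolution and coproduct formulations via coassociativity and the counit axiom). No gaps.
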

In particular, the constant parts $\counit\circ\MStoyphy = \ev_0 \circ \MStoyphy \in \chars{H_R}{\K}$ determine $\MStoyphy$ completely as the scale dependence is governed by $\toyphy$. Using $\toyphy = \exp_{\convolution}(-x\toylog)$, the \emph{$\beta$-functional} 
$
	\MStoyphy
	= \exp_{\convolution} \left( x\beta \right) \convolution \left( \counit \circ \MStoyphy \right)
$
from \cite{CK:RH2} relates to $\toylog$ by conjugation:
	\begin{equation*}
		\beta \convolution \left( \counit \circ \MStoyphy \right)
		= -\left( \counit \circ \MStoyphy \right) \convolution \toylog.
	\end{equation*}
\begin{korollar}
	Applying \eqref{eq:perturbation-convolution-char} to \eqref{eq:toyphy-MStoyphy} expresses the correlation function of the $\Rms$-scheme to the {\momscheme} by a redefinition of the coupling constant:
	\begin{equation*}
		G_{{\scriptscriptstyle\mathrm{MS}},\scalelog}(\coupling)
		= G_{{\scriptscriptstyle\mathrm{MS}},0}(\coupling)	\cdot
		G_{\scalelog} \Big( \coupling\cdot \left[ G_{{\scriptscriptstyle\mathrm{MS}},0}(\coupling) \right]^{\powdep} \Big).
	\end{equation*}
\end{korollar}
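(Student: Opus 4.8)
The plan is to read the identity off directly from the factorization \eqref{eq:toyphy-MStoyphy} of $\MStoyphy$ together with the convolution formula \eqref{eq:perturbation-convolution-char} of lemma \ref{lemma:perturbation-convolutions}. First I would unravel the three correlation functions in play: by definition \ref{satz:correlation} and the paragraph following \eqref{eq:toyphy-MStoyphy} we have $G_{\scalelog}(\coupling) = \toyphy \circ X(\coupling)$ and $G_{{\scriptscriptstyle\mathrm{MS}},\scalelog}(\coupling) = \MStoyphy \circ X(\coupling)$, while, since $\counit = \ev_0$ on $\K[\scalelog]$, the value of the latter at $\scalelog = 0$ is $G_{{\scriptscriptstyle\mathrm{MS}},0}(\coupling) = (\counit\circ\MStoyphy)\circ X(\coupling)$.

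Next I would check the hypotheses of \eqref{eq:perturbation-convolution-char}: put $\Psi \defas \counit\circ\MStoyphy$, which is a character $H_R\to\K$ (composition of the algebra morphism $\MStoyphy$ with $\ev_0$) and hence, via the inclusion $\K\hookrightarrow\K[\scalelog]$, an element of $\chars{H_R}{\K[\scalelog]}$, and put $\lambda \defas \toyphy \in \Hom(H_R,\K[\scalelog])$. Then \eqref{eq:toyphy-MStoyphy} reads $\MStoyphy = \Psi \convolution \toyphy$, so applying \eqref{eq:perturbation-convolution-char} to $X(\coupling)$ yields
\begin{equation*}
	G_{{\scriptscriptstyle\mathrm{MS}},\scalelog}(\coupling)
	= (\Psi \convolution \toyphy) \circ X(\coupling)
	= \big[ \Psi \circ X(\coupling) \big] \cdot \toyphy \circ X\!\left( \coupling \big[ \Psi\circ X(\coupling) \big]^{\powdep} \right)
	= G_{{\scriptscriptstyle\mathrm{MS}},0}(\coupling) \cdot G_{\scalelog}\!\left( \coupling \big[ G_{{\scriptscriptstyle\mathrm{MS}},0}(\coupling) \big]^{\powdep} \right),
\end{equation*}
which is precisely the asserted functional equation.

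The only point deserving care is that the right-hand side be a legitimate formal power series in $\coupling$: because $x_0 = \1$ and $\MStoyphy(\1) = 1$ we have $G_{{\scriptscriptstyle\mathrm{MS}},0}(\coupling) \in 1 + \coupling\,\K[[\coupling]]$, so $\big[ G_{{\scriptscriptstyle\mathrm{MS}},0}(\coupling) \big]^{\powdep}$ is defined by the binomial series exactly as in the footnote to \eqref{eq:dse} and again lies in $1 + \coupling\,\K[[\coupling]]$; hence $\coupling\,\big[ G_{{\scriptscriptstyle\mathrm{MS}},0}(\coupling) \big]^{\powdep}$ has vanishing constant term and may be substituted into the power series $G_{\scalelog}$. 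This is the same bookkeeping already implicit in \eqref{eq:perturbation-convolution-char} itself, so I anticipate no genuine obstacle: the statement is an immediate specialization of lemma \ref{lemma:perturbation-convolutions}, with \eqref{eq:toyphy-MStoyphy} supplying the required factorization $\MStoyphy = \Psi \convolution \toyphy$. All of the substantive work sits in the prior construction of the Hopf subalgebra property \eqref{eq:dse-coprod} and of lemma \ref{lemma:perturbation-convolutions}.
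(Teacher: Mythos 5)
Your argument is exactly the paper's intended one: the corollary is stated as an immediate application of \eqref{eq:perturbation-convolution-char} with $\Psi=\counit\circ\MStoyphy$ and $\lambda=\toyphy$ to the factorization \eqref{eq:toyphy-MStoyphy}, which is precisely what you carry out (the paper gives no further proof). Your additional check that $G_{{\scriptscriptstyle\mathrm{MS}},0}(\coupling)\in 1+\coupling\K[[\coupling]]$ so that the substitution is a well-defined formal power series is a harmless and correct piece of bookkeeping.
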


\section{Feynman graphs and logarithmic divergences}

In a typical renormalizable scalar \qft, the vertex function is logarithmically divergent and may be renormalized by a simple subtraction as studied above. Referring to \cite{BrownKreimer:AnglesScales} for quadratic divergences, we now restrict to logarithmically divergent graphs with only logarithmic subdivergences, in $D$ dimensions of space-time.

Following the notation established in \cite{BlochKreimer:MixedHodge}, the renormalized amplitude of a graph $\Gamma$ in the Hopf algebra $H$ of Feynman graphs is given by the \emph{forest formula}\footnote{We prefer to work in the \emph{parametric} representation as introduced in \cite[section 6-2-3]{ItzyksonZuber}.}
\begin{equation}\label{eq:forest-formula-parametric}
	\PhiR(\Gamma)
	= \int \Omega_{\Gamma}
		\sum_{F\in\forests(\Gamma)} \frac{(-1)^{\abs{F}}}{\psi_F^{\dimension/2}}
		\ln \frac{
						 \phipsi_{\Gamma/F}	+	\sum\limits_{\scriptscriptstyle\Gamma \neq \gamma\in F} \tilde{\phipsi}_{\gamma/F}
		}{
			\tilde{\phipsi}_{\Gamma/F}+	\sum\limits_{\scriptscriptstyle\Gamma \neq \gamma\in F} \tilde{\phipsi}_{\gamma/F}
		}.
\end{equation}
The forests $\forests(\Gamma)$ account for subdivergences, the first and second \emph{Symanzik polynomials} $\psipol_{\Gamma}, \phipol_{\Gamma}$ depend on the edge variables $\alpha_e$ and we integrate over $\RP_{>0}^{\abs{\Edges{\Gamma}}-1}$ in projective space with canonical volume form $\Omega_{\Gamma}$.
Apart from a scale $s$, $\phipol_{\Gamma}$ depends on dimensionless \emph{angle variables} $\Theta=\set{\frac{m^2}{s}}\cup\set{\frac{p_i \cdot p_j}{s}}$ built from the mass $m$ and external momenta $p_i$. We abbreviate $\phipsi_{\Gamma}\defas \frac{\phipol_{\Gamma}}{\psipol_{\Gamma}}$ and denote evaluation at the renormalization point $(\tilde{s},\tilde{\Theta})$ of the {\momscheme} by a tilde or $\restrict{\cdot}{R}\defas\restrict{\cdot}{(s,\Theta)\mapsto(\tilde{s},\tilde{\Theta})}$.

\begin{definition}\label{def:period}
	Holding the angles $\Theta$ fixed, the \emph{period functional} $\period \in H'$ is given by
\begin{equation}\label{eq:def-period}
	\period(\Gamma)
	\defas \restrict{-\frac{\partial}{\partial \ln s} \PhiR(\Gamma) }{R}
	\quad\text{for any}\quad
	\Gamma \in H.
\end{equation}
\end{definition}
\begin{korollar}
	For any graph $\Gamma\in H$, the value $\period(\Gamma)$ is a period in the sense of \cite{Periods} (provided that $\tilde{s}$ and all $\theta\in\tilde{\Theta}$ are rational) by the formula
	\begin{equation}\label{eq:period-parametric}
		\period (\Gamma)
		\urel{\eqref{eq:forest-formula-parametric}}
		\int \Omega_{\Gamma}
		\sum_{F\in\forests(\Gamma)} \frac{(-1)^{1+\abs{F}}}{\psi_F^{\dimension/2}}
		\frac{
			\tilde{\phipsi}_{\Gamma/F}
		}{
			\tilde{\phipsi}_{\Gamma/F}
			+	\sum\limits_{\scriptscriptstyle\Gamma \neq \gamma\in F}
					\tilde{\phipsi}_{\gamma/F}
		}.
	\end{equation}
\end{korollar}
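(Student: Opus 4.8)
The plan is to obtain \eqref{eq:period-parametric} by differentiating \eqref{eq:forest-formula-parametric} under the integral sign, and then to read off from the resulting integral that $\period(\Gamma)$ meets the Kontsevich--Zagier definition of a period.

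First I would exploit that, with the angles $\Theta$ held fixed, each second Symanzik polynomial is \emph{linear} in the scale: substituting $m^2 = s\cdot\theta_m$ and $p_i\cdot p_j = s\cdot\theta_{ij}$ pulls out one overall factor of $s$ from $\phipol_{\Gamma/F}$, so that $\phipsi_{\Gamma/F} = s\cdot g_{\Gamma/F}$ with $g_{\Gamma/F}$ independent of $s$, whereas the tilded quantities $\tilde{\phipsi}_{\gamma/F}$, $\tilde{\phipsi}_{\Gamma/F}$ and the first Symanzik factors $\psipol_F$ carry no $s$-dependence. Hence in each summand of \eqref{eq:forest-formula-parametric} only the numerator of the logarithm's argument depends on $s$, and with $\frac{\partial}{\partial\ln s} = s\frac{\partial}{\partial s}$ one gets
\begin{equation*}
	-\frac{\partial}{\partial\ln s}\ln\frac{\phipsi_{\Gamma/F}+\sum_{\Gamma\neq\gamma\in F}\tilde{\phipsi}_{\gamma/F}}{\tilde{\phipsi}_{\Gamma/F}+\sum_{\Gamma\neq\gamma\in F}\tilde{\phipsi}_{\gamma/F}}
	= -\frac{\phipsi_{\Gamma/F}}{\phipsi_{\Gamma/F}+\sum_{\Gamma\neq\gamma\in F}\tilde{\phipsi}_{\gamma/F}}.
\end{equation*}
Evaluating at the renormalization point sends $\phipsi_{\Gamma/F}\mapsto\tilde{\phipsi}_{\Gamma/F}$, and pulling $-\partial/\partial\ln s$ through $\int\Omega_\Gamma$ and the finite sum over $\forests(\Gamma)$ turns \eqref{eq:def-period} into exactly \eqref{eq:period-parametric}, the sign $(-1)^{\abs{F}}$ becoming $(-1)^{1+\abs{F}}$. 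The interchange of differentiation and integration is the one analytic point that needs care: it rests on the usual power counting showing that the renormalized integrand of \eqref{eq:forest-formula-parametric} \emph{and} its $s$-derivative are dominated, uniformly near the boundary strata of $\RP^{\abs{\Edges{\Gamma}}-1}_{>0}$, by an integrable function, because the subtractions organized by the forest sum are tailored to the logarithmic (sub)divergences.

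For the second step I would verify term by term that the integrand of \eqref{eq:period-parametric} is a rational function of the Schwinger parameters $\alpha_e$ with rational coefficients, once $\tilde s$ and every $\theta\in\tilde\Theta$ are rational. Each first Symanzik polynomial has coefficients in $\set{0,1}$ by its spanning-forest definition, so $\psipol_F$ has integer coefficients and $\psipol_F^{\dimension/2}$ is again a polynomial, $\dimension$ being even in the renormalizable scalar theories at hand. Each $\tilde{\phipol}_{\Gamma/F}$, $\tilde{\phipol}_{\gamma/F}$ is a polynomial in the $\alpha_e$ whose coefficients are integer-linear combinations of the invariants $\tilde s\,\theta$ with $\theta\in\tilde\Theta$, hence rational; dividing by the corresponding $\tilde{\psipol}$ and forming the ratio $\tilde{\phipsi}_{\Gamma/F}/(\tilde{\phipsi}_{\Gamma/F}+\sum_{\gamma}\tilde{\phipsi}_{\gamma/F})$ stays inside $\Q(\alpha_e)$. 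The canonical volume form $\Omega_\Gamma$ has integer coefficients, and in an affine chart (say $\alpha_{e_0}=1$) the domain $\RP^{\abs{\Edges{\Gamma}}-1}_{>0}$ is cut out by the rational inequalities $\alpha_e>0$. Since the integral is absolutely convergent --- by the same power counting that makes $\PhiR(\Gamma)$ well defined --- and manifestly real, it equals its real part and has vanishing imaginary part, so it is a period in the sense of \cite{Periods}.

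I expect the genuine obstacle to be analytic rather than conceptual: making the differentiation-under-the-integral step, and the absolute convergence of \eqref{eq:period-parametric}, rigorous \emph{uniformly over all forests $F$}, in particular in the sectors where the uncontracted cograph $\Gamma/F$ still carries a logarithmic divergence rendered finite only through the renormalization-point subtraction in the denominator of \eqref{eq:period-parametric}. The linearity-in-$s$ identity and the rationality bookkeeping are then routine once the Symanzik polynomials are recalled in their combinatorial form.
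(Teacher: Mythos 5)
Your proposal is correct and follows exactly the route the paper intends (and leaves implicit): differentiate the forest formula \eqref{eq:forest-formula-parametric} in $\ln s$ at fixed angles, using that only $\phipsi_{\Gamma/F}$ is $s$-dependent and linear in $s$, then evaluate at the renormalization point — precisely the computation that reappears as the first line of the paper's proof of \eqref{eq:rge-graphs} — with the period property then read off from rationality of the integrand and convergence of the subtracted integral. Your added care about differentiation under the integral and absolute convergence is a reasonable fleshing-out of what the paper takes for granted.
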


For primitive (subdivergence free) graphs, \cite{Schnetz:Census} gives equivalent definitions of this period in momentum and position space.
The product rule, \eqref{eq:def-period} and $\restrict{\PhiR}{R}=\counit$ show

\begin{korollar}\label{satz:period-inf-char}
	The period is an infinitesimal character $\period \in \infchars{H}{\K}$ (it vanishes on any graph that is not connected).
\end{korollar}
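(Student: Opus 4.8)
The plan is to reduce the statement to the Leibniz rule together with the two facts that $\PhiR$ is multiplicative and that $\restrict{\PhiR}{R}=\counit$. First I would record that the renormalized amplitude is a character $\PhiR\in\chars{H}{\alg}$ of the Hopf algebra of Feynman graphs, i.e.\ $\PhiR(\Gamma_1\Gamma_2)=\PhiR(\Gamma_1)\PhiR(\Gamma_2)$ for any graphs $\Gamma_1,\Gamma_2$: this follows from \eqref{eq:forest-formula-parametric} because everything in the integrand factorizes over disjoint unions (the Symanzik polynomials satisfy $\psipol_{\Gamma_1\sqcup\Gamma_2}=\psipol_{\Gamma_1}\psipol_{\Gamma_2}$, the forest sets factor as $\forests(\Gamma_1\sqcup\Gamma_2)=\forests(\Gamma_1)\times\forests(\Gamma_2)$, and likewise for $\phipsi$ and the volume form). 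Since $H$ is generated as a commutative algebra by connected graphs, it then suffices to verify the infinitesimal-character property $\period\circ m=\period\tp e+e\tp\period$ (with $e=u\circ\counit$, so $e=\counit$ here) on such a product.

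Next I would apply $-\frac{\partial}{\partial\ln s}$, which acts on the target algebra as a derivation, to the identity $\PhiR(\Gamma_1\Gamma_2)=\PhiR(\Gamma_1)\PhiR(\Gamma_2)$, obtaining
\begin{equation*}
	-\frac{\partial}{\partial\ln s}\PhiR(\Gamma_1\Gamma_2)
	= \left[-\frac{\partial}{\partial\ln s}\PhiR(\Gamma_1)\right]\PhiR(\Gamma_2)
	+ \PhiR(\Gamma_1)\left[-\frac{\partial}{\partial\ln s}\PhiR(\Gamma_2)\right].
\end{equation*}
Restricting both sides to the renormalization point $R$ and using $\restrict{\PhiR(\Gamma_i)}{R}=\counit(\Gamma_i)$ together with Definition \ref{def:period} yields exactly $\period(\Gamma_1\Gamma_2)=\period(\Gamma_1)\counit(\Gamma_2)+\counit(\Gamma_1)\period(\Gamma_2)$, i.e.\ $\period\in\infchars{H}{\K}$ (its $\K$-linearity is immediate from \eqref{eq:def-period}). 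The vanishing on disconnected graphs is then automatic from the general fact that infinitesimal characters kill $(\ker\counit)^2$: a disconnected graph is a product $\Gamma_1\Gamma_2$ of non-scalar graphs lying in $\ker\counit$, so $\period(\Gamma_1\Gamma_2)=\period(\Gamma_1)\cdot 0+0\cdot\period(\Gamma_2)=0$.

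The only point requiring a little care is the multiplicativity of $\PhiR$ and, relatedly, the compatibility of the scale derivative with the product — that is, that all connected components are evaluated at one common scale $s$ so that $-\partial_{\ln s}$ genuinely acts by a single Leibniz rule across the product, and that the restriction $\restrict{\cdot}{R}$ respects it. Given the forest formula and the standard factorization of Symanzik polynomials over disjoint unions this is routine, so I do not expect a real obstacle; the corollary is essentially the remark that taking the scale-derivative of a character at the subtraction point produces an infinitesimal character.
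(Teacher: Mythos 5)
Your argument is exactly the paper's: the text justifies this corollary in one line by invoking the product rule, \eqref{eq:def-period} and $\restrict{\PhiR}{R}=\counit$, which is precisely your Leibniz-plus-evaluation-at-the-renormalization-point computation, with the vanishing on disconnected graphs following because infinitesimal characters annihilate $(\ker\counit)^2$. Your additional verification that $\PhiR$ is multiplicative via factorization of the forest formula is a correct filling-in of a fact the paper takes for granted.
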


\subsection{Renormalization group}

\begin{proposition}
	Holding the angles $\Theta$ fixed, differentiation by the scale results in\footnote{%
		This simple form circumvents the decomposition into one-scale graphs utilized in \cite{BrownKreimer:AnglesScales} and therefore holds in the original renormalization Hopf algebra $H$.
	}
	\begin{equation}\label{eq:rge-graphs}
		-\frac{\partial}{\partial \ln s} \PhiR
		= \period \convolution \PhiR.
	\end{equation}
\end{proposition}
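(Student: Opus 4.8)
The plan is to differentiate the forest formula \eqref{eq:forest-formula-parametric} under the integral sign and then, after reorganising the resulting forest sum along the Connes--Kreimer coproduct, to read it off as the convolution $\period\convolution\PhiR$. First I would isolate the scale dependence: holding the angles $\Theta$ fixed, the only place $s$ enters \eqref{eq:forest-formula-parametric} is through the un-tilded quotients $\phipsi_{\Gamma/F}$, because $\phipol_{\Gamma/F}$ is linear in the invariants $m^2$ and $p_i\cdot p_j$ --- each of which equals $s$ times a fixed angle --- whereas $\psipol_{\Gamma/F}$ carries no $s$. Hence $\phipsi_{\Gamma/F}$ is homogeneous of degree one in $s$, so $\tfrac{\partial}{\partial\ln s}\phipsi_{\Gamma/F}=\phipsi_{\Gamma/F}$; since the subtractions in \eqref{eq:forest-formula-parametric} make both the integrand and its $\ln s$-derivative integrable on $\RP_{>0}^{\abs{\Edges{\Gamma}}-1}$ (the decay improves exactly as in \eqref{eq:momsch-example}), dominated convergence gives
\begin{equation*}
	-\frac{\partial}{\partial\ln s}\PhiR(\Gamma)
	= \int\Omega_\Gamma\sum_{F\in\forests(\Gamma)}\frac{(-1)^{1+\abs{F}}}{\psipol_F^{\dimension/2}}\,
		\frac{\phipsi_{\Gamma/F}}{\phipsi_{\Gamma/F}+\sum\limits_{\scriptscriptstyle\Gamma\neq\gamma\in F}\tilde{\phipsi}_{\gamma/F}},
\end{equation*}
which specialises at the renormalisation point precisely to $\period(\Gamma)$ of \eqref{eq:period-parametric}, as it must.

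Next I would reorganise the sum over $\forests(\Gamma)$ by comparing it with $(\period\convolution\PhiR)(\Gamma)=\sum\period(\gamma)\,\PhiR(\Gamma/\gamma)$ taken over $\Delta\Gamma$. As $\period$ is an infinitesimal character (corollary \ref{satz:period-inf-char}) it annihilates disconnected subgraphs, so only a single connected divergent $\gamma\subseteq\Gamma$ survives in each term --- including $\gamma=\Gamma$, where $\PhiR(\Gamma/\Gamma)=\PhiR(\1)=1$. For fixed such $\gamma$, the forests of $\Gamma$ having $\gamma$ as outermost element split bijectively as $F=F_1\cup\{\gamma\}\cup\widetilde{F_2}$ with $F_1\in\forests(\gamma)$ and $F_2\in\forests(\Gamma/\gamma)$; under this splitting the sign factorises because $\abs{F}=\abs{F_1}+\abs{F_2}+1$, the weight $\psipol_F$ factorises as $\psipol_{F_1}\psipol_{F_2}$ by multiplicativity of the Symanzik polynomials under nested/disjoint subgraphs, the volume form $\Omega_\Gamma$ factorises (up to the standard projective fibration recording the relative scale of $\gamma$ inside $\Gamma/\gamma$) as $\Omega_\gamma\wedge\Omega_{\Gamma/\gamma}$, and $\phipsi_{\Gamma/F}=\phipsi_{(\Gamma/\gamma)/F_2}$. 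Integrating out the relative scale turns the log-free integrand into the logarithm of $\PhiR(\Gamma/\gamma)$ multiplied by the purely rational ``top'' factor $\tilde{\phipsi}_{\gamma/F_1}/(\tilde{\phipsi}_{\gamma/F_1}+\cdots)$ of $\period(\gamma)$ from \eqref{eq:period-parametric}; summing over $F_1,F_2$ yields $\period(\gamma)\PhiR(\Gamma/\gamma)$, and summing over $\gamma$ yields $(\period\convolution\PhiR)(\Gamma)$ --- the parametric face of the familiar statement that the forest formula computes $\Phi_-^{\convolution-1}\convolution\Phi$ with the scale-independent kinetic-scheme counterterm $\Phi_-$.

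The hard part is exactly this reorganisation: one must align the bookkeeping of Zimmermann forests of $\Gamma$ with the coproduct, and --- the genuinely analytic point --- carry out the projective integration over the relative scale of $\gamma$ so that the \emph{log-free} integrand of $-\partial_{\ln s}\PhiR(\Gamma)$ reproduces the $\K[\ln\tfrac{s}{\tilde{s}}]$-polynomial that $(\period\convolution\PhiR)(\Gamma)$ must be; it is the boundary term of that integration that manufactures the tilded ratios defining $\period$. A shorter but less self-contained route reintroduces a regulator $\reg$ as in Sections \ref{sec:toymodel} and \ref{sec:ms}: holding the angles fixed, the bare regularised rules $\Phi^{\reg}$ are homogeneous in $s$ with the grading $Y$ as exponent up to the factor $-\reg$, so $-\partial_{\ln s}\Phi^{\reg}=\reg\,\Phi^{\reg}\circ Y$; since $Y$ is a coderivation and the {\momscheme} counterterms $\Phi^{\reg}_-$ are independent of $s$, one gets $-\partial_{\ln s}\PhiR=\bigl[\lim_{\reg\to0}\reg\,(\Phi^{\reg}_-)^{\convolution-1}\convolution(\Phi^{\reg}_-\circ Y)\bigr]\convolution\PhiR$ just as in \eqref{eq:toylog-pole}, and the bracket is identified with $\period$ by evaluating at the renormalisation point, where $\restrict{\PhiR}{R}=\counit$.
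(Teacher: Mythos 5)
Your first step is exactly the paper's: since at fixed angles only the untilded $\phipsi_{\Gamma/F}$ carries the scale and is homogeneous of degree one in $s$, differentiating \eqref{eq:forest-formula-parametric} under the integral gives the log-free sum you write, which at $(s,\Theta)=(\tilde s,\tilde\Theta)$ is \eqref{eq:period-parametric}. The gap is the reorganisation, which you yourself flag as ``the hard part'' and then do not carry out — and as you describe it, the bookkeeping would fail. A single forest $F$ contributes one log-free term but generally contains several connected proper elements $\gamma$; your plan assigns the whole term of every forest containing $\gamma$ to that $\gamma$ via the bijection $F\mapsto(\restrict{F}{\gamma},F/\gamma)$, so each forest would be used once per element and the sum is overcounted. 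The missing idea is the splitting of each forest's integrand \emph{before} any bijection is invoked: with $B\defas\sum_{\Gamma\neq\gamma\in F}\tilde{\phipsi}_{\gamma/F}$ one adds and subtracts the renormalization-point value,
\begin{equation*}
	\frac{\phipsi_{\Gamma/F}}{\phipsi_{\Gamma/F}+B}
	= \frac{\tilde{\phipsi}_{\Gamma/F}}{\tilde{\phipsi}_{\Gamma/F}+B}
	+ \frac{\left(\phipsi_{\Gamma/F}-\tilde{\phipsi}_{\Gamma/F}\right)B}{\left(\phipsi_{\Gamma/F}+B\right)\left(\tilde{\phipsi}_{\Gamma/F}+B\right)},
\end{equation*}
whereupon the first pieces sum over all $F$ to $\period(\Gamma)$, and expanding $B$ in the numerator of the second piece term by term attaches the factor $\tilde{\phipsi}_{\gamma/F}$ to a definite $\gamma\in F$. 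Only after this does the bijection $F\leftrightarrow(\restrict{F}{\gamma},F/\gamma)$ apply without double counting, and the logarithm of $\PhiR(\Gamma/\gamma)$ together with the rational ratio defining $\period(\gamma)$ is produced by the explicit relative-scale integral $\int_0^{\infty}\frac{A-\tilde{A}}{(A+tB')(\tilde{A}+tB')}\,\dd t=\frac{1}{B'}\ln\frac{A}{\tilde{A}}$ applied to $t_{\gamma}$ — it is this identity, not a boundary term, that ``manufactures'' the tilded ratio; the factorizations $\psipol_F=\psipol_{\restrict{F}{\gamma}}\cdot\psipol_{F/\gamma}$ and the compatibility of $\phipsi_{\delta/F}$ with the splitting, which you assert, are the remaining (easier) checks.

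Your alternative regulator route is a genuinely different strategy, essentially the algebraic argument of section \ref{sec:ms}, but as sketched it trades the parametric computation for unproven analytic input: it needs (i) locality, i.e.\ that $\restrict{\dimPhi}{R}\circ(S\convolution Y)$ has at most a simple pole so the residue exists and $\reg\,\PhiR^{\reg}\circ Y\to 0$, and (ii) the identification of the $\reg\to 0$ limit of the dimensionally regularized, kinetically subtracted rules with the parametric forest formula \eqref{eq:forest-formula-parametric}. Moreover, quoting \eqref{eq:period-dimreg} (or its analogue) to identify the residue with $\period$ would be circular here, since in the paper that corollary is deduced \emph{from} the proposition; your fix of evaluating the identity at the renormalization point, where $\restrict{\PhiR}{R}=\counit$, is the right repair, but points (i) and (ii) still stand. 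The paper's proof is deliberately regulator-free, which is exactly what the footnote to the proposition advertises.
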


\begin{proof}
	Adding $0=\period(\Gamma)-\period(\Gamma)$ and collecting the contributions of $\tilde{\phipsi}_{\gamma/F}$ in $(\ast)$ we find
	\begin{align*}
		-\frac{\partial}{\partial \ln s} \PhiR(\Gamma)
		&
		\urel{\eqref{eq:forest-formula-parametric}}
		\int \Omega_{\Gamma} \left\{
				\frac{1}{\psipol_{\Gamma}^{\dimension/2}} 
				+ \sum_{\set{\Gamma}\neq F \in \forests(\Gamma)} 
							\frac{(-1)^{1+\abs{F}}}{\psipol_F^{\dimension/2}}
							\frac{\phipsi_{\Gamma/F}}{
									\phipsi_{\Gamma/F}
									+ \sum\limits_{\scriptscriptstyle\Gamma\neq\delta\in F} \tilde{\phipsi}_{\delta/F}
							}
			\right\} \\
		&
		\urel{\eqref{eq:period-parametric}}
		\period (\Gamma)
				+ \int \Omega_{\Gamma} \hspace{-6mm}\sum_{\set{\Gamma}\neq F \in \forests(\Gamma)} \frac{(-1)^{1+\abs{F}}}{\psipol_F^{\dimension/2}}
				\frac{
						\left( \phipsi_{\Gamma/F} -\tilde{\phipsi}_{\Gamma/F} \right)
						\sum\limits_{\scriptscriptstyle\Gamma \neq \gamma \in F} \tilde{\phipsi}_{\gamma/F}
				}{
					\Big[
						\phipsi_{\Gamma/F} 
						+ \sum\limits_{\scriptscriptstyle\Gamma \neq \delta \in F} \tilde{\phipsi}_{\delta/F} 
					\Big]
					\cdot
					\Big[
						\tilde{\phipsi}_{\Gamma/F} 
						+ \sum\limits_{\scriptscriptstyle\Gamma \neq \delta \in F} \tilde{\phipsi}_{\delta/F}
					\Big]
				} \\
		&
		\urel{\scalebox{1.5}{$(\ast)$}}
		\period (\Gamma)
		+ \int \Omega_{\Gamma} \hspace{-3mm}
					\sum_{\substack{
						\gamma \subdiv \Gamma\\
						\abs{\comps(\gamma)} = 1
					}}
					\sum_{\gamma \in F \in \forests(\Gamma)}
					\frac{
						(-1)^{1+\abs{F}}
					}{
						\psipol_F^{\dimension/2}
					}
				\frac{
						\left( \phipsi_{\Gamma/F} -\tilde{\phipsi}_{\Gamma/F} \right)
						\tilde{\phipsi}_{\gamma/F}
				}{
					\Big[
						\phipsi_{\Gamma/F} 
						+ \sum\limits_{\scriptscriptstyle\Gamma \neq \delta \in F} \tilde{\phipsi}_{\delta/F}
					\Big]
					\cdot
					\sum\limits_{\scriptscriptstyle \delta \in F} \tilde{\phipsi}_{\delta/F} 
				}.
	\end{align*}
With $\gamma \subdiv \Gamma$ denoting a subdivergence $\gamma\neq \Gamma$, the forests $F\in\forests(\Gamma)$ containing $\gamma$ correspond bijectively to the forests of $\gamma$ and $\Gamma/\gamma$ by
	\begin{align*}
		\forests_{\gamma}(\Gamma)
		&\defas
			\setexp{F \in \forests(\Gamma)}{\gamma\in F}
		\ni F
		\mapsto
		\left( \restrict{F}{\gamma},\ F/\gamma \right)
		\in
		\forests(\gamma) \times \forests(\Gamma/\gamma)
		,\quad\text{using} \\
		\restrict{F}{\gamma}
		&\defas
		\setexp{\delta\in F}{\delta \subdiveq \gamma}
		\quad\text{and}\quad
		F/\gamma
		\defas
		\setexp{\delta/\gamma}{ \delta \in F \quad\text{and}\quad \delta \nosubdiveq \gamma}.
	\end{align*}
	This is an immediate consequence of the definition of a forest, as for $F\in\forests_{\gamma}(\Gamma)$, each $\delta\in F$ is either disjoint to $\gamma$ or strictly containing $\gamma$ (in both cases it is mapped to $\delta/\gamma\in F/\gamma$) or itself a subdivergence of $\gamma$. Thus integrating $\int_0^{\infty} \frac{A-\tilde{A}}{(A+tB)(\tilde{A}+tB)} \dd t = B^{-1} \ln \frac{A}{\tilde{A}}$ in
	\begin{align*}
\hide{		&= \period (\Gamma)
		+ \int \hspace{-3mm}
					\sum_{\substack{
						\gamma \subdiv \Gamma\\
						\abs{\comps(\gamma)} = 1
					}}
					\sum_{\substack{
						F_{\gamma} \in \forests(\gamma) \\
						F \in \forests(\Gamma/\gamma)
					}}
					\Omega_{\gamma} \wedge \Omega_{\Gamma/\gamma}
					\frac{
						(-1)^{1+\abs{F_{\gamma}}+\abs{F}}
					}{
						\psipol_{F_{\gamma}}^{\dimension/2}
						\cdot
						\psipol_F^{\dimension/2}
					}
				\frac{
						\left( \phipsi_{\Gamma/F} -\tilde{\phipsi}_{\Gamma/F} \right)
						\tilde{\phipsi}_{\gamma/F_{\gamma}}
				}{
				\left[ \phipsi_{\Gamma/F} + \sum\limits_{\scriptscriptstyle\Gamma \neq \delta \in F} \tilde{\phipsi}_{\delta/F} \right] \cdot
				\sum\limits_{\scriptscriptstyle \delta \in F} \tilde{\phipsi}_{\delta/F} 
				}\\
}	&= \period (\Gamma)
			+ \int \hspace{-3mm}
					\sum_{\substack{
						\gamma \subdiv \Gamma\\
						\abs{\comps(\gamma)} = 1
					}}
					\Omega_{\gamma} \wedge \Omega_{\Gamma/\gamma}
					\sum_{\substack{
						F_{\gamma} \in \forests(\gamma) \\
						F \in \forests(\Gamma/\gamma)
					}}
					\frac{
						(-1)^{1+\abs{F_{\gamma}}+\abs{F}}
					}{
						\psipol_{F_{\gamma}}^{\dimension/2}
						\cdot
						\psipol_F^{\dimension/2}
					}	 \\
			&\quad\quad\quad\times
			\int\limits_0^{\infty} \frac{\dd t_{\gamma}}{t_{\gamma}}
				\frac{
						\left( \phipsi_{\Gamma/F} -\tilde{\phipsi}_{\Gamma/F} \right)
						\cdot t_{\gamma} \cdot
						\tilde{\phipsi}_{\gamma/{F_{\gamma}}}
				}{
				\Big[
					\phipsi_{\Gamma/F} 
					+ \sum\limits_{\scriptscriptstyle \Gamma \neq \delta \in F} \tilde{\phipsi}_{\delta/F} 
					+ t_{\gamma}\cdot\sum\limits_{\scriptscriptstyle \delta \in F_{\gamma}} \tilde{\phipsi}_{\delta/{F_{\gamma}}} 
				\Big] \cdot
				\Big[
					\sum\limits_{\scriptscriptstyle \delta\in F} \tilde{\phipsi}_{\delta/F} 
					+ t_{\gamma}\cdot\sum\limits_{\scriptscriptstyle \delta \in F_{\gamma}} \tilde{\phipsi}_{\delta/{F_{\gamma}}} 
				\Big]
				} \displaybreak[0]\\
		&= \period (\Gamma)
				+ \int \hspace{-3mm}
					\sum_{\substack{
						\gamma \subdiv \Gamma\\
						\abs{\comps(\gamma)} = 1
					}}
					\hspace{-3mm}
					\Omega_{\gamma}\wedge \Omega_{\Gamma/\gamma}
					\hspace{-3mm}
					\sum_{\substack{
						F_{\gamma} \in \forests(\gamma) \\
						F \in \forests(\Gamma/\gamma)
					}}
					\frac{
						(-1)^{1+\abs{F_{\gamma}}+\abs{F}}
					}{
						\psipol_{F_{\gamma}}^{\dimension/2}
						\cdot
						\psipol_F^{\dimension/2}
					}
				\cdot
				\frac{
					\tilde{\phipsi}_{\gamma/{F_{\gamma}}}
				}{
					\sum\limits_{\scriptscriptstyle \delta\in F_{\gamma}}
					\tilde{\phipsi}_{\delta/{F_{\gamma}}}
				}
				\cdot
				\ln \frac{
					\phipsi_{(\Gamma/\gamma)/F}
					+\sum\limits_{\scriptscriptstyle \delta\in F\setminus\set{\Gamma/\gamma}}
					\tilde{\phipsi}_{\delta/F}
				}{
					\sum\limits_{\scriptscriptstyle \delta \in F}
					\tilde{\phipsi}_{\delta/F}
				}
	\end{align*}
	reduces to the projective $\int \Omega_{\gamma}$ in the edge variables of the subgraph $\gamma$, making use of
	\begin{equation*}
		\abs{F}
		= \abs{\restrict{F}{\gamma}}
			+ \abs{F/\gamma}
		,\quad
		\phipsi_{\delta/F}
		= \begin{cases}
			\phipsi_{(\delta/\gamma)/(F/\gamma)},		&\text{if}\ \gamma \nosupdiveq \delta\in F\\
			\phipsi_{\delta/\restrict{F}{\gamma}},	&\text{if}\ \gamma \supdiveq \delta\in F
		\end{cases}
		\quad\text{and}\quad
		\psipol_F
		= \psipol_{\restrict{F}{\gamma}}
			\cdot \psipol_{F/\gamma}
		.
	\end{equation*}
	The apparent factorization into $\period(\gamma)$ and $\PhiR(\Gamma/\gamma)$ shows that we obtain convergent integrals for each $\gamma\subdiv\Gamma$ individually and may therefore separate into
	\begin{align*}
		&= \period (\Gamma)
				+\hspace{-4mm}
				\sum_{\substack{
					\gamma \subdiv \Gamma\\
					\abs{\comps(\gamma)} = 1
				}}
				\int \Omega_{\gamma}
					\sum_{F_{\gamma} \in \forests(\gamma)}
					\frac{
						(-1)^{1+\abs{F_{\gamma}}}
					}{
						\psipol_{F_{\gamma}}^{\dimension/2}
					}
				\cdot
				\frac{
				\tilde{\phipsi}_{\gamma/{F_{\gamma}}}
				}{
					\sum\limits_{\scriptscriptstyle \delta\in F_{\gamma}}
					\tilde{\phipsi}_{\delta/{F_{\gamma}}}
				}\nonumber\\
		&\quad\quad\quad\times
			\int \Omega_{\Gamma/\gamma}
				\sum_{F \in \forests(\Gamma/\gamma)}
				\frac{
					(-1)^{\abs{F}}
				}{
					\psipol_F^{\dimension/2}
				}
				\cdot
				\ln \frac{
					\phipsi_{(\Gamma/\gamma)/F}
					+\sum\limits_{\scriptscriptstyle \delta\in F\setminus\set{\Gamma/\gamma}}
					\tilde{\phipsi}_{\delta/F}
				}{
					\sum\limits_{\scriptscriptstyle \delta \in F}
					\tilde{\phipsi}_{\delta/F}
				}
		= \period \convolution \PhiR (\Gamma).
	\end{align*}
	Note that the terms $\gamma \tp \Gamma/\gamma$ of $\Delta (\Gamma)$ with $\abs{\comps(\gamma)}>1$ do not contribute here by \ref{satz:period-inf-char}.
\end{proof}

Together with \ref{satz:period-inf-char} and the connected graduation of $H$, this shows
\begin{equation*}
	\PhiR 
	= \sum_{n\in\N_0}
				\frac{(-\scalelog)^n}{n!}
				{\left[
					{\left( - \frac{\partial}{\partial \ln s}\right)}^n \PhiR
					\right]}_{s=\tilde{s}}
	\urel{\eqref{eq:rge-graphs}}
		\sum_{n\in\N_0} 
			\frac{(-\scalelog\cdot\period)^{\convolution n}}{n!} 
			\convolution 
			\restrict{\PhiR}{s=\tilde{s}},
\end{equation*}
where we set $\scalelog \defas \ln \frac{s}{\tilde{s}}$ and the series is pointwise finite. Hence note
\begin{korollar}
	The renormalized Feynman rules
	$
		\PhiR 
		= \restrict{\PhiR}{\Theta=\tilde{\Theta}} \convolution \restrict{\PhiR}{s=\tilde{s}}
	$
	factorize (\cite{BrownKreimer:AnglesScales} gives a different decomposition) into the angle-dependent part $\restrict{\PhiR}{s=\tilde{s}}$ and the scale-dependence $\restrict{\PhiR}{\Theta=\tilde{\Theta}}$ given as the Hopf algebra morphism
	\begin{equation}
		\restrict{\PhiR}{\Theta=\tilde{\Theta}}
		= \exp_{\convolution}\left( -\scalelog \period \right):
		\quad H \rightarrow \K[\scalelog].
	\end{equation}
\end{korollar}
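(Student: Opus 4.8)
The plan is to read the claim off the pointwise finite identity
$\PhiR = \sum_{n\in\N_0}\tfrac{(-\scalelog\period)^{\convolution n}}{n!}\convolution\restrict{\PhiR}{s=\tilde{s}}$
displayed just before the statement, which itself follows from iterating \eqref{eq:rge-graphs} together with the scale-independence of $\period$ and the connected grading of $H$. First I would recognize the right-hand series: by \eqref{eq:log-exp} it equals $\exp_{\convolution}(-\scalelog\period)$, and since $\period\in\infchars{H}{\K}$ by Corollary \ref{satz:period-inf-char} we have $-\scalelog\period\in\infchars{H}{\K[\scalelog]}$, so $\exp_{\convolution}(-\scalelog\period)\in\chars{H}{\K[\scalelog]}$ is a character (point 3 of the list in section 2). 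Hence $\PhiR=\exp_{\convolution}(-\scalelog\period)\convolution\restrict{\PhiR}{s=\tilde{s}}$.

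Next I would identify $\exp_{\convolution}(-\scalelog\period)$ with $\restrict{\PhiR}{\Theta=\tilde{\Theta}}$ by restricting this identity to $\Theta=\tilde{\Theta}$: the factor $\exp_{\convolution}(-\scalelog\period)$ depends only on $s$ and is untouched, while $\restrict{\PhiR}{s=\tilde{s}}$ becomes $\restrict{\PhiR}{R}=\counit$, the neutral element of $\convolution$. Thus $\restrict{\PhiR}{\Theta=\tilde{\Theta}}=\exp_{\convolution}(-\scalelog\period)$, and substituting back yields the asserted factorization $\PhiR=\restrict{\PhiR}{\Theta=\tilde{\Theta}}\convolution\restrict{\PhiR}{s=\tilde{s}}$ into the scale part $\restrict{\PhiR}{\Theta=\tilde{\Theta}}$ and the now $s$-independent angle part $\restrict{\PhiR}{s=\tilde{s}}$.

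It remains to check that $\restrict{\PhiR}{\Theta=\tilde{\Theta}}=\exp_{\convolution}(-\scalelog\period)$ is a morphism of Hopf algebras $H\to\K[\scalelog]$, with $\scalelog$ primitive. Being a character it is already an algebra morphism; for the coalgebra property I would use the one-parameter group of evaluations $\ev_a\colon\K[\scalelog]\to\K$. Since each $\ev_a$ is an algebra morphism it commutes with $\convolution$ and with the exponential series, so $\ev_a\circ\exp_{\convolution}(-\scalelog\period)=\exp_{\convolution}(-a\period)$, and because scalar multiples of the fixed $\period$ convolution-commute, $\exp_{\convolution}(-a\period)\convolution\exp_{\convolution}(-b\period)=\exp_{\convolution}(-(a+b)\period)$ for all $a,b\in\K$. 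Using $(\ev_a\tp\ev_b)\circ\Delta=\ev_{a+b}$ on $\K[\scalelog]$, this gives $(\ev_a\tp\ev_b)\circ\Delta\circ\exp_{\convolution}(-\scalelog\period)=\exp_{\convolution}(-(a+b)\period)=(\ev_a\tp\ev_b)\circ\big(\exp_{\convolution}(-\scalelog\period)\tp\exp_{\convolution}(-\scalelog\period)\big)\circ\Delta$; since the functionals $\ev_a\tp\ev_b$ separate the points of $\K[\scalelog]\tp\K[\scalelog]$, the coalgebra identity $\Delta\circ\exp_{\convolution}(-\scalelog\period)=\big(\exp_{\convolution}(-\scalelog\period)\big)^{\tp 2}\circ\Delta$ follows, and $H$ being Hopf this makes $\restrict{\PhiR}{\Theta=\tilde{\Theta}}$ a Hopf algebra morphism.

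The only step that is not pure bookkeeping is this last one — turning the renormalization-group one-parameter group property into the coalgebra-morphism identity via separation of points; alternatively one could appeal to the analogy with Corollary \ref{satz:toyphy-hopfmor} and the subsection "Feynman rules as Hopf algebra morphisms". Everything else is reading off the displayed renormalization-group identity together with the boundary condition $\restrict{\PhiR}{R}=\counit$.
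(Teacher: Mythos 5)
Your proposal is correct and follows essentially the same route as the paper: the corollary is read off the displayed identity obtained by Taylor-expanding in $\ln s$ and iterating \eqref{eq:rge-graphs}, with $\period\in\infchars{H}{\K}$ and the connected grading making $\exp_{\convolution}\left(-\scalelog\period\right)$ a well-defined character. Your two additions --- identifying the first factor with $\restrict{\PhiR}{\Theta=\tilde{\Theta}}$ by evaluating at $\Theta=\tilde{\Theta}$ and using $\restrict{\PhiR}{R}=\counit$, and the separation-of-points verification of the coalgebra-morphism property --- merely make explicit what the paper leaves implicit.
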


\begin{beispiel}
	For primitive $\Gamma\in\Prim(H)$,
$
	\PhiR(\Gamma)
	= -\scalelog \cdot \period(\Gamma) + \restrict{\PhiR}{s=\tilde{s}}(\Gamma)
$
	disentangles the scale- and angle-dependence. Subdivergences evoke higher powers of $\scalelog$ with angle-dependent factors. \emph{Dunce's cap} of $\phi^4$-theory gives $\period\left( \Graph{dunce1} \right) = \period\left( \Graph[0.5]{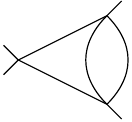} \right) = 1$ such that
\begin{align*}
	\PhiR \left( \Graph{dunce2_momenta} \right)
	&= \frac{\scalelog^2}{2}
		-\scalelog 
		-\scalelog 
			\restrict{\PhiR}{s=\tilde{s}} \left( \Graph{dunce1_momenta} \right)
		+ \restrict{\PhiR}{s=\tilde{s}} \left( \Graph{dunce2_momenta} \right).
\end{align*}
\end{beispiel}

\subsection{Dimensional regularization}

The \emph{dimensional regularization} of \cite{Collins} assigns a Laurent series $\dimPhi(\Gamma)$ in $\reg$ to each Feynman graph $\Gamma\in H$, which for large $\Re \reg$ is given by the convergent parametric integral
\begin{equation}\label{eq:parametric-dimreg}
	\dimPhi(\Gamma)
	= \left[ \prod_{e\in\Edges{\Gamma}} \int\limits_0^{\infty} \alpha_e \right]
	\frac{e^{-\phipsi_{\Gamma}}}{\psipol_{\Gamma}^{\dimension/2 - \reg}}.
\end{equation}
As $\phipsi_{\Gamma}$ is linear in the scale $s$ and homogeneous of degree one in the edge variables, simultaneously rescaling of all $\alpha_e$ yields (for logarithmically divergent graphs)
\begin{korollar}
	The scale dependence
	$
		\dimPhi =
		\restrict{\dimPhi}{s=\tilde{s}} \circ \gradAut_{-\reg\scalelog}
	$
	of \eqref{eq:parametric-dimreg} is induced from the grading $Y$ of $H$ given by the loop number.
\end{korollar}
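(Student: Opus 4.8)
The plan is to read off the scale dependence from a single simultaneous rescaling of all Schwinger parameters in the convergent integral \eqref{eq:parametric-dimreg} (valid for large $\Re\reg$) and then to extend the resulting identity to all $\reg$ by analytic continuation. The structural inputs are exactly the homogeneities already quoted in the text: the first Symanzik polynomial $\psipol_{\Gamma}$ is homogeneous of degree $\loops{\Gamma}$ in the edge variables $\alpha_e$ (every spanning tree of $\Gamma$ omits precisely $\loops{\Gamma}$ edges), whereas $\phipsi_{\Gamma}=\phipol_{\Gamma}/\psipol_{\Gamma}$ is homogeneous of degree one in the $\alpha_e$ and --- with the angles $\Theta$ held fixed --- linear in the scale $s$.

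Concretely I would substitute $\alpha_e\mapsto t\alpha_e$ with $t>0$ in \eqref{eq:parametric-dimreg}. Powers of $t$ arise from three places: the volume form contributes a Jacobian $t^{\abs{\Edges{\Gamma}}}$; the denominator gives $\psipol_{\Gamma}^{\dimension/2-\reg}\mapsto t^{\loops{\Gamma}(\dimension/2-\reg)}\psipol_{\Gamma}^{\dimension/2-\reg}$; and in the exponent $t\,\phipsi_{\Gamma}(s,\alpha)=\phipsi_{\Gamma}(ts,\alpha)$ by linearity in $s$, so that this $t$ is absorbed into a shift of the scale. Collecting the powers and using the hypothesis that $\Gamma$ is logarithmically divergent, i.e.\ $\abs{\Edges{\Gamma}}=\loops{\Gamma}\,\dimension/2$, the net exponent reduces to $\abs{\Edges{\Gamma}}-\loops{\Gamma}(\dimension/2-\reg)=\reg\loops{\Gamma}$, so that
\begin{equation*}
	\dimPhi(\Gamma)
	= t^{\reg\loops{\Gamma}}\cdot\restrict{\dimPhi}{s\mapsto ts}(\Gamma)
	\qquad\text{for large }\Re\reg,
\end{equation*}
and hence identically in $\reg$, both sides being Laurent series.

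Finally I would specialize to $t=\tilde{s}/s$, so that $s\mapsto ts$ becomes evaluation at $s=\tilde{s}$ and the prefactor equals $(\tilde{s}/s)^{\reg\loops{\Gamma}}=(s/\tilde{s})^{-\reg\loops{\Gamma}}=e^{-\reg\scalelog\loops{\Gamma}}$ with $\scalelog=\ln(s/\tilde{s})$. Since $Y$ is the loop-number grading, \eqref{eq:grad-aut} gives $\gradAut_{-\reg\scalelog}(\Gamma)=e^{-\reg\scalelog\loops{\Gamma}}\Gamma$, and therefore $\dimPhi(\Gamma)=\restrict{\dimPhi}{s=\tilde{s}}\bigl(\gradAut_{-\reg\scalelog}(\Gamma)\bigr)$, which is the asserted identity $\dimPhi=\restrict{\dimPhi}{s=\tilde{s}}\circ\gradAut_{-\reg\scalelog}$.

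The calculation is elementary once the homogeneities are in place; the point deserving care is the bookkeeping of the three sources of powers of $t$, together with the observation that the log-divergence hypothesis is precisely what removes the $\dimension$-dependent part of the exponent and leaves $\reg$ times the loop number. The only analytic subtlety is that the rescaling must be carried out in the domain of absolute convergence, the extension to general $\reg$ then following from uniqueness of analytic continuation, as is standard for \eqref{eq:parametric-dimreg}.
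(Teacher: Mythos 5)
Your proof is correct and follows the same route the paper itself indicates: a simultaneous rescaling $\alpha_e\mapsto t\alpha_e$ in the convergent region, using homogeneity of $\psipol_\Gamma$ of degree $\loops{\Gamma}$, degree-one homogeneity and $s$-linearity of $\phipsi_\Gamma$, and the log-divergence condition $\abs{\Edges{\Gamma}}=\loops{\Gamma}\dimension/2$ to leave the exponent $\reg\loops{\Gamma}$, then $t=\tilde{s}/s$ identifies the prefactor with $\gradAut_{-\reg\scalelog}$. Your write-up merely makes explicit the bookkeeping and the analytic continuation that the paper leaves implicit.
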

Thus the finiteness of the physical limit
$
	\restrict{\PhiR}{\Theta=\tilde{\Theta}}
	= \lim_{\reg\rightarrow 0}
		\restrict{\dimPhi}{R} \circ (S \convolution \gradAut_{-\reg\scalelog})
$
results by \ref{satz:finiteness-algebraic} in the local character $\restrict{\dimPhi}{R} \in \chars{H}{\alg}$, evaluated at the renormalization point $(\tilde{s},\tilde{\Theta})$. 
\begin{korollar}
	In dimensional regularization, the period \eqref{eq:period-parametric} is the $\frac{1}{\reg}$-pole coefficient
	\begin{equation}\label{eq:period-dimreg}
		\period 
		\urel{\eqref{eq:toylog-pole}}
		\Res \circ\ \restrict{\dimPhi}{R} \circ (S\convolution Y).
	\end{equation}
\end{korollar}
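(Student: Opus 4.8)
The plan is to replay the computation behind \eqref{eq:toylog-pole} essentially verbatim, with the rooted-tree toy model replaced by the Hopf algebra $H$ of logarithmically divergent Feynman graphs graded by the loop number $Y$, and with the local character $\phi \defas \restrict{\dimPhi}{R}\in\chars{H}{\alg}$ in place of the regulated Feynman rules. All the ingredients have just been assembled: the corollary on the scale dependence of \eqref{eq:parametric-dimreg} shows $\dimPhi = \restrict{\dimPhi}{s=\tilde{s}}\circ\gradAut_{-\reg\scalelog}$, whence (the automorphism $\gradAut$ acting on $H$, not on the target, so it commutes with the angle restriction) $\restrict{\dimPhi}{\Theta=\tilde{\Theta}} = \phi\circ\gradAut_{-\reg\scalelog}$; the physical limit was identified as $\restrict{\PhiR}{\Theta=\tilde{\Theta}} = \lim_{\reg\rightarrow 0}\phi\circ(S\convolution\gradAut_{-\reg\scalelog})$; and this very limit forces $\phi$ to satisfy the equivalent finiteness conditions of proposition \ref{satz:finiteness-algebraic}.

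First I would record that $\period = -\partial_0\circ\restrict{\PhiR}{\Theta=\tilde{\Theta}}$, where $\partial_0 = \ev_0\circ\partial_{\scalelog}$ reads off the coefficient of $\scalelog^1$. Indeed, from the renormalization group equation \eqref{eq:rge-graphs} together with $\period\in\infchars{H}{\K}$ (corollary \ref{satz:period-inf-char}) the text already records $\restrict{\PhiR}{\Theta=\tilde{\Theta}} = \exp_{\convolution}(-\scalelog\period) = \sum_{n\geq 0}\tfrac{(-\scalelog)^n}{n!}\period^{\convolution n}$, and applying $\partial_0$ annihilates every term but the linear one. (Alternatively, this is simply $\restrict{-\partial_{\ln s}\PhiR}{R}$ from definition \ref{def:period}, rewritten through the factorized scale part and using $\restrict{\PhiR}{R}=\counit$.)

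Next I would expand $\gradAut_{-\reg\scalelog} = \sum_{n\geq 0}\tfrac{(-\reg\scalelog)^n}{n!}Y^n$ and use linearity of $S\convolution(-)$ to get $\phi\circ(S\convolution\gradAut_{-\reg\scalelog}) = \sum_{n\geq 0}\tfrac{(-\reg\scalelog)^n}{n!}\,\phi\circ(S\convolution Y^n)$. By condition 2 of proposition \ref{satz:finiteness-algebraic}, $\phi\circ(S\convolution Y^n)$ maps into $\reg^{-n}\C[[\reg]]$, so each summand is a genuine power series in $\reg$, the $\reg\rightarrow 0$ limit is termwise, and extracting the $\scalelog^1$-coefficient of the limit leaves precisely $-\lim_{\reg\rightarrow 0}\reg\cdot\phi\circ(S\convolution Y) = -\Res\circ\phi\circ(S\convolution Y)$ (well-defined since $\phi\circ(S\convolution Y)\in\tfrac{1}{\reg}\C[[\reg]]$ by condition 1). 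Combining with the previous step, $\period = -\partial_0\circ\restrict{\PhiR}{\Theta=\tilde{\Theta}} = \Res\circ\restrict{\dimPhi}{R}\circ(S\convolution Y)$, which is \eqref{eq:period-dimreg}.

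I do not expect a genuine obstacle here: the analytic content — locality of $\restrict{\dimPhi}{R}$, hence the pole bounds of proposition \ref{satz:finiteness-algebraic}, and the fact that the simultaneous rescaling of the edge variables in \eqref{eq:parametric-dimreg} makes the scale enter only through the loop-number grading $Y$ so that $\gradAut$ is the relevant one-parameter group — has already been established, and the remainder is the same bookkeeping as for \eqref{eq:toylog-pole}. The single point to keep straight is the interplay of the three restriction symbols $\restrict{\cdot}{s=\tilde{s}}$, $\restrict{\cdot}{\Theta=\tilde{\Theta}}$ and $\restrict{\cdot}{R}$, and that $\gradAut$ acts on the source $H$ (hence passes freely through the angle restriction on the target), which is exactly what let us write $\restrict{\dimPhi}{\Theta=\tilde{\Theta}} = \restrict{\dimPhi}{R}\circ\gradAut_{-\reg\scalelog}$ in the first place.
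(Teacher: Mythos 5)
Your argument is correct and is essentially the paper's own route: the corollary is just \eqref{eq:toylog-pole} applied to the local character $\restrict{\dimPhi}{R}$ with the loop-number grading, together with the identification $\period=-\partial_0\circ\restrict{\PhiR}{\Theta=\tilde{\Theta}}$ coming from $\restrict{\PhiR}{\Theta=\tilde{\Theta}}=\exp_{\convolution}(-\scalelog\period)$, and your expansion of $\gradAut_{-\reg\scalelog}$ with the pole bounds of proposition \ref{satz:finiteness-algebraic} reproduces exactly the computation behind \eqref{eq:toylog-pole}. The sign bookkeeping and the use of conditions 1 and 2 of the proposition are all in order.
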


\subsection{Dilatations and conformal symmetry}

For $\lambda>0$, consider the \emph{dilatation operator} $\dilat_{\lambda}$ scaling masses $m \mapsto \lambda \cdot m$ and momenta $p_i \mapsto \lambda \cdot p_i$.
It fixes all angles $\Theta$, multiplies the scale $s$ with $\lambda^2$ and therefore acts as
\begin{equation*}
	\PhiR \circ \dilat_{\lambda}
	 =	\exp_{\convolution}\left( -\period \ln \frac{s}{\tilde{s}} \right) \convolution \restrict{\PhiR}{s=\tilde{s}} \circ (s\mapsto s\cdot\lambda^2)
	 =	\exp_{\convolution}\left( -2\period \ln\lambda \right) \convolution \PhiR.
	\label{eq:dilatation-action}
\end{equation*}
In other words, the dilatations $\R_{>0}\ni \lambda \mapsto \dilat_{\lambda} \mapsto \exp_{\convolution} \left( -2\period \ln \lambda \right) \convolution \cdot$ are represented on the group $\chars{H}{\alg}$ of characters by a left convolution.
As the unrenormalized logarithmically divergent graphs are dimensionless and naively invariant under $\dilat_{\lambda}$, $\period$ precisely measures how renormalization breaks this symmetry, giving rise to \emph{anomalous dimensions}.

\section{Conclusion}
\label{sec:conclusion}

We stress that the physical limit of the renormalized Feynman rules results in a morphism \mbox{$\toyphy\!: H_R \rightarrow \K[x]$} of Hopf algebras in case of the {\momscheme}. This compatibility with the coproduct allows to obtain $\toyphy$ from the linear terms $\toylog$ only. As we just exemplified, these relations are statements about individual Feynman graphs unraveling scale- and angle-dependence in a simple way. Again we recommend \cite{BrownKreimer:AnglesScales} for further reading.

Secondly we revealed how Hochschild cohomology governs not only the perturbation series through Dyson-Schwinger equations, but also determines the Feynman rules. Addition of exact one-cocycles captures variations of Feynman rules and the anomalous dimension $\toylog$ can efficiently be calculated in terms of Mellin transform coefficients.

Note how this feature is lost upon substitution of the {\momscheme} by minimal subtraction: We do not obtain a Hopf algebra morphism anymore due to the constant terms, which are also more difficult to obtain in terms of the Mellin transforms $F$.

Finally we want to emphasize the remarks in section \ref{sec:DSE} towards a non-perturbative framework. Though this relation between $F(\reg)$ and the anomalous dimension $\widetilde{\toylog}(\coupling)$ is still under investigation and so far only fully understood in special cases, these already give interesting results \cite{Kreimer:ExactDSE,Yeats}.

\bibliographystyle{plainurl}
\bibliography{qft}

\end{document}